\newtheorem{definition}{Definition}
\newtheorem{lemma}{Lemma}
\newtheorem{theorem}{Theorem}
\newcommand{\etal}{\emph{et al.}\xspace}
\newcommand{\comment}[1]{}
\begin{document}
%
% paper title
% can use linebreaks \\ within to get better formatting as desired
\title{Networked Computing in Wireless Sensor Networks for Structural Health Monitoring}
%
%
% author names and IEEE memberships
% note positions of commas and nonbreaking spaces ( ~ ) LaTeX will not break
% a structure at a ~ so this keeps an author's name from being broken across
% two lines.
% use \thanks{} to gain access to the first footnote area
% a separate \thanks must be used for each paragraph as LaTeX2e's \thanks
% was not built to handle multiple paragraphs
%

\author{Apoorva Jindal,~\IEEEmembership{Member,~IEEE,}
        Mingyan Liu,~\IEEEmembership{Member,~IEEE,}% <-this % stops a space
\thanks{A. Jindal and M. Liu are with the Department of Electrical Engineering and Computer Science at University of Michigan, Ann Arbor.
E-mail: apoorvaj@umich.edu,mingyan@eecs.umich.edu.}% <-this % stops a space
%\thanks{Manuscript received April 19, 2005; revised January 11, 2007.}
}

%TO BE UNCOMMENTED
%\markboth{IEEE/ACM Transactions on Networking}%
%{Jindal \MakeLowercase{\textit{et al.}}: Networked Computing}

% The only time the second header will appear is for the odd numbered pages
% after the title page when using the twoside option.
% 
% *** Note that you probably will NOT want to include the author's ***
% *** name in the headers of peer review papers.                   ***
% You can use \ifCLASSOPTIONpeerreview for conditional compilation here if
% you desire.

% make the title area
\maketitle

\begin{abstract}
This paper studies the problem of distributed computation over a network of wireless sensors. 
While this problem applies to many emerging applications, % including those used for body sensing and various other cyber-physical systems, 
%Computational tasks in these networks typically require far more sophisticated data processing beyond commonly-studied data aggregation or simple functions like average and max/min, in order to enable real-time diagnosis and control.  
%In this paper, 
to keep our discussion concrete we will focus on sensor networks used for structural health monitoring.  
Within this context, the heaviest computation is to determine the singular value decomposition (SVD) to extract mode shapes (eigenvectors) of a structure.  
Compared to collecting raw vibration data and performing SVD at a central location, computing SVD within the network can result in 
significantly lower energy consumption and delay.
Using recent results on decomposing SVD, a well-known centralized operation, 
into components, we seek to determine a near-optimal
communication structure that enables the distribution of this computation and the reassembly of the final results, 
with the objective of minimizing energy consumption subject to a computational delay constraint.  
We show that this reduces to a generalized clustering problem; a cluster forms a unit on which a component of the overall computation is performed.  
We establish that this problem is NP-hard.  By relaxing the delay constraint, we derive a lower bound to this problem.  %We also show that the optimal solution to the unconstrained problem has a simple structure that reveals insights into the solution of the original constrained problem. 
We then propose an integer linear program (ILP) to solve the constrained problem exactly as well as an approximate algorithm with a proven approximation ratio.  We further present a distributed version of the approximate algorithm.  We present both simulation and experimentation results to demonstrate the effectiveness of these algorithms.
\end{abstract}
% IEEEtran.cls defaults to using nonbold math in the Abstract.
% This preserves the distinction between vectors and scalars. However,
% if the journal you are submitting to favors bold math in the abstract,
% then you can use LaTeX's standard command \boldmath at the very start
% of the abstract to achieve this. Many IEEE journals frown on math
% in the abstract anyway.

% Note that keywords are not normally used for peerreview papers.
\begin{IEEEkeywords}
Networked Computing, Wireless Sensor Networks, Structural Health Monitoring, Clustering, Degree-Constrained Data Collection Tree, Singular Value Decomposition.
\end{IEEEkeywords}

\IEEEpeerreviewmaketitle

\section{Introduction}
\label{sec:intro}

Over the past decade, tremendous progress has been made in understanding and using wireless sensor networks.  Of particular relevance to this paper are extensive studies on in-network processing, e.g., finding efficient routing strategies when data compression and aggregation are involved. However, many emerging applications, e.g., body area sensing, structural health monitoring, and various other cyber-physical systems, require far more sophisticated data processing in order to enable real-time diagnosis and control. 
%where complex network-wide computational tasks are accomplished over a network of individually resource limited nodes, to facilitate real-time diagnosis and control. 

This leads to the question of how to perform arbitrary (and likely complex) computational tasks using a distributed network of wireless sensors, %where the input to the computation originates from this network of sensors, and 
each with limited resources both in energy and in processing capability.   
%The ultimate goal is \emph{fully-automated in-network computing}, a natural progression from the networked sensing paradigm.
%
The answer seems to lie in two challenges. The first is the decomposition of  
complex computational tasks into smaller operations, each with its own input and output and collectively related through a certain dependency structure. %, %to allow the distributed computation of a given function.
The second is to distribute these operations among individual sensor nodes so as to incur minimal energy consumption and delay.  
%In other words, we need to determine the optimal communication structure given the data-flow graph. 
%Furthermore, ideally one would like to have both aspects fully automated.  %\emph{fully-automated in-network computing}, 
This networked computing concept is a natural progression from the networked sensing paradigm.

Previous results on establishing the communication structure for in-network computation either consider only simple
functions like max/min/average/median/boolean symmetric functions~\cite{tag,median:podc,max:pods,wang:convergcast,kumar:boolean} that do not fully 
represent the complex computational requirements demanded by many practical engineering applications, or study asymptotic bounds %on rates required/energy consumed 
which do not yield algorithms to determine the optimal communication structure for a given arbitrary network~\cite{kumar:scaling,kumar:function,medard:computing,srikant:function}.

In this paper we address the second challenge of finding the optimal communication structure for a given decomposition of a computational task.  While there are certain underlying commonalities, this is in general an application specific exercise dependent on the actual computation.  To keep our discussion concrete, we will study this problem within the context of structural health monitoring (SHM).  SHM is an area of rapidly growing interest due to the increasing need to provide low-cost and timely monitoring and inspection of deteriorating national infrastructure; it is also an appealing application of wireless sensor technologies.

The most common approach in SHM to detect damage is to collect vibration data using a set of wireless sensors in response to white/free input to the structure,
and then use the procedure of singular value decomposition (SVD) to determine the set of modes~\cite{mode1,mode2,mode3}. A {\em mode} is a combination of a frequency and a shape (in the form of a vector), 
which is the expected curvature (or displacement) of a surface vibrating at a mode frequency.   
In this study, we will use SVD as a primary example to illustrate an approach to determine how to perform such a complex computational task over a network of resource-constrained sensors.  
Compared to collecting raw vibration data (or the FFT of raw data) and performing the SVD computation at a central location, 
directly computing SVD within the network can result in significant reduction in both energy consumption and computational delay.
Conceptually, this reduction occurs because the output of SVD, a set of eigenvectors, is much smaller in size than its input, FFTs from individual sensor data streams, and evaluating multiple smaller SVDs in parallel is much faster than evaluating the SVD on the input from all sensors. 

%How to decompose the SVD computation, a well-defined and well-studied classical centralized procedure, into components that can be carried out in a distributed way was studied by 
In a recent result, Zimmerman \etal~\cite{andy:svd} proposed a method to decompose the SVD computation, a classical centralized procedure.
Here, we examine how to obtain an optimal communication structure corresponding to this decomposition, 
%determine a near-optimal communication structure that enables the distribution of 
%this computation and the reassembly of the final results.  In this context, 
where optimality is defined with respect to minimizing energy consumption subject to a computational delay constraint. 
We show that this reduces to a generalized {\em clustering problem}, and here lies the generality of the results presented in this paper.  In essence, a centralized operation is decomposed into a number of computational elements (or operators) each operating on a set of inputs.  The optimization problem is then to figure out what set of inputs to group together (forming a cluster), and what relationship needs to be maintained (in the form of data sharing or message passing) between clusters according to the specification of the decomposition.  
For this reason, while SVD is the example throughout our discussion, the methodology used here is more generally applicable.  Note also that SVD itself is an essential ingredient in a broader class of signal processing algorithms, including classification, identification and detection~\cite{andy:svd,svd1,svd2,svd3,svd4,svd5}.

%Previous results on establishing the communication structure for in-network computation either consider only very simple functions like max/min/average/median~\cite{tag,median:podc,max:pods,wang:convergcast}  that do not fully represent the complex computational requirements demanded by practical engineering applications like SVD computation, or study scaling laws which do not yield algorithms to determine the optimal communication structure~\cite{kumar:scaling}. 

%%%%%%%%%%%%
Our main results are summarized as follows.  
\begin{enumerate}
\item We formally define the above networked computing problem and establish that it is NP-hard (in Section~\ref{sec:prelim}). 
\item We derive a lower bound by relaxing the delay constraint and show that the optimal solution to the unconstrained problem has a simple structure that sheds light on the original problem (in Section~\ref{ref:lower}). 
\item We present an integer linear program (ILP) to solve the constrained problem as well as an approximate algorithm with a proven 
approximation ratio (Section \ref{ref:multiple}); we also present a distributed version of the approximate algorithm (Section \ref{ref:multiple}).  
\item We use both simulations and experiments to evaluate our algorithms (Section \ref{sec:simulations}). 
\end{enumerate}
%We also discuss extensions and practical issues in Section \ref{sec:disc_aggr}.  %Simulation as well as experimental results (implemented on the Narada sensor nodes, a structural health monitoring sensor platform developed at the University of Michigan \cite{narada}) are presented in Section \ref{sec:simulations} to demonstrate the effectiveness of the approximate and distributed algorithms.

%Note that even though our presentation, for the most part, is focussed specifically on SVD, the methodology itself is generic.
%In Section~\ref{sec:annealing}, we discuss how a similar approach can be used for other computational tasks in SHM like distributed optimization using simulated annealing.
%Finally, we conclude in Section~\ref{sec:conclusions}.

%%%%%%%%%%%%
We end this introduction with a simple example to illustrate that different computational objectives will have different optimal communication structures.
%for different computational objectives. %, and hence, prior work on deriving routing structure for data aggregation may not be applicable here. 
We compare the optimal routing for data compression, and for computing SVD. 
Assume that compression converts $2$ input streams of size $R$ bits each to an output stream of $R + r$ where $r<R$ ~\cite{bazarul:ton}.
The SVD operator, as discussed in detail in Section~\ref{ref:multiple}, converts $k$ input streams of size $R$ bits each, 
into $k$ eigenvectors of size $r$ bits each with $r < R$. 
Consider the simple 4-node topology of Figure~\ref{fig:comp_svd} and 
the two possible communication structures, with node $0$ being the base station and assume all links are of unit length/cost. 
As derived in~\cite{bazarul:ton}, data compression requires an exchange of $3R + 3r$ (using successive encoding) and $4R + r$ bits respectively for the communication structures (a) and (b). Hence, if $R > 2r$, then (a) is better. 
On the other hand, in the case of SVD if we do not perform in-network computation, then sending all raw data to node $0$ results in a cost of $6R$ and $5R$ over the two structures, respectively. If we perform in-network computation, then as detailed in Section \ref{ref:multiple}, the resulting costs are $3R+6r$ and $3R+3r$ for the two structures, respectively.  
Hence (b) is always better for the SVD computation.   
\begin{figure}[htb]
 \centering
\includegraphics[width=2.6in]{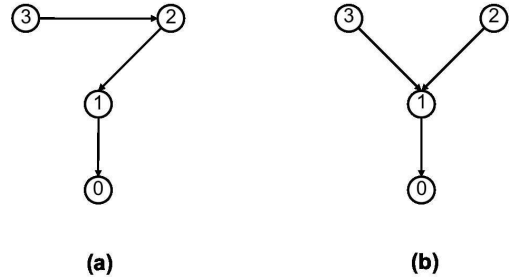}
\vspace{-0.1in}
  \caption{In-network computation and compressed sensing can have a different optimal communication structure. (a) and (b) represent the two
possible communication structures for a simple 4-node topology.} 
\vspace{-0.3in}
\label{fig:comp_svd}
\end{figure}

\section{Problem Formulation and Preliminaries}
\label{sec:prelim}
In this section, we first introduce the relevant background on structural health monitoring, then present the network model
%discuss the metrics used to evaluate the communication structure, and finally, 
and formally introduce the problem.

\subsection{Background on Structural Health Monitoring}
\label{sec:shm}
During the past two decades, the SHM community has become increasingly focused on the use of the structural vibration data
to identify degradation or damage within structural systems. The first step in determining if the vibration data collected by a set of sensors 
represents a healthy or an unhealthy structure is to decompose the spectral density matrix into a set of single-degree-of-freedom systems. 
Assuming a broadband white input to the system, this can be accomplished by first obtaining an estimate of the output power spectral density (PSD) matrix 
for each discrete frequency by creating an array of frequency response functions using the Fast Fourier Transform (FFT) information from each
degree of freedom. Early studies in this field focused on identifying changes in modal frequencies or the eigenvalues of the PSD 
matrix using the peak picking method~\cite{pp2} to detect damage in large structural systems~\cite{pp1}. More recent studies have observed that viewing changes in modal frequencies 
in combination with changes in mode shape information (eigenvector of the PSD matrix) makes it increasingly possible to both detect and locate
damage within a variety of structural types and configurations~\cite{mode1,mode2,mode3}. One of the most widely used method for mode shape estimation is
the frequency domain decomposition (FDD) method proposed by Brincker \etal~\cite{fdd}. This method involves computing the SVD\footnote{Please 
see Appendix~\ref{appendix2} for a description of the SVD computation.} of the PSD matrix to extract the eigenvectors/mode shapes. 

The most common implementation of the FDD method over a wireless sensor network is to have each sensor send its vibration data to a central sensor node
which computes the SVD of the PSD matrix and then distributes the mode shapes back to each sensor. This method requires significant computational
power and memory at the central node as well as a significant energy consumption in the network to communicate all this data to the central node. 
For example, if there are $100$ sensor nodes in the network, this implementation requires the central sensor node to compute the SVD of a $100 \times 100$ PSD matrix
as well as having each of the $100$ sensor nodes send all their vibration data to one central node. 

%Within a wireless sensing network, where processing power, energy and memory at each node is limited, 
Zimmerman \etal~\cite{andy:svd} proposed an alternative
implementation by decomposing the computation of SVD %using in-network computation 
(graphically represented in Figure~\ref{fig:svd_computation}). 
Each sensor node is assumed to be aware of the eigenvalues of the PSD matrix (which have been 
determined using the peak-picking method
\footnote{Finding the optimal communication structure to implement peak-picking in a distributed manner
turns out to be the same as in the case of data compression~\cite{bazarul:ton,sivakumar:secon,pattem:ipsn}, and is thus not studied here.}) 
and the FFT of its own sensed data stream. 
%And the algorithms developed in those papers can be directly applied to peak-picking. Hence, to keep the exposition
%interesting, we do not study implementing peak-picking distributively here.}) 
%
Denoting the entire set of nodes as $V$, if a sensor has the FFT of $N \subset  V, | N | > 1$ sensors and all the eigenvalues, then it can compute the SVD of the PSD matrix
using $ | N |$ sets of FFT results and determine $ | N |$ eigenvectors. Let another sensor node be in possession of the FFT of 
$N' \subset V, | N' | > 1$ sensors. It can perform a similar computation to determine $ | N' |$ eigenvectors.
To be able to combine results from these two computations to construct the $ | N \cup N' | $ eigenvectors,
one needs to be able to determine the appropriate scaling factors.  This notion is precisely given in the following. 

\begin{definition}
Two computations are called \emph{combinable}
if one can determine the appropriate scaling factors to combine them.
A computation on $N$ nodes and another computation on $N'$ nodes is combinable if and only if either $N \cap N' \neq \phi$
(that is, there is at least one common sensor in $N$ and $N'$), or there exists another
computation on $N''$ nodes which is combinable with both $N$ and $N'$.   
\end{definition}

\begin{figure}[htb]
 \centering
\includegraphics[width=2.9in]{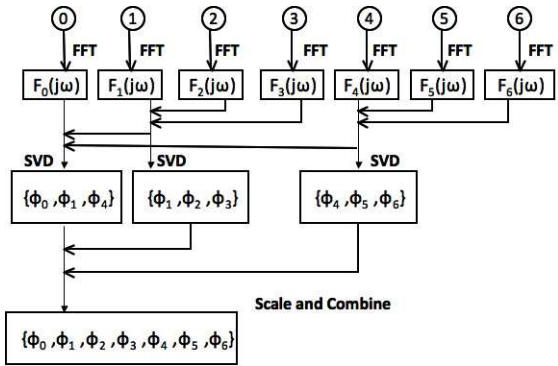}
  \caption{Decomposing the computation of SVD using in-network computation.}
\vspace{-0.15in}
\label{fig:svd_computation}
\end{figure}

%Each independent in-network computation requires message exchange with other in-network computations to make them combinable. Also, 
If $R$ denotes the size in bits required to represent the FFT of a sensor stream and $r$ denotes the
size in bits to represent a eigenvector, each SVD computation which combines the FFT of $k$ sensor streams
reduces the number of bits from $kR$ to $kr$. Note that the size of the output stream does not
depend on $R$ but only on $r$, which depends on the size of the network.

%Though Zimmerman \etal~\cite{andy:svd} proposed a new data-flow graph for SVD, they assumed a fixed routing structure and did not consider optimizing the communication structure, which will be the focus of this paper.

\subsection{A Generalized Clustering Problem}
With the above decomposition, it can be seen that the associated communication problem can be cast as a {\em generalized clustering problem}:  the solution lies in determining which subset of sensors ({\em cluster}) should send their FFTs to which common node ({\em cluster head}), who then computes the SVD for this subset, such that these subsets have the proper overlap to allow individual SVDs to be scaled and combined.  We consider this clustering problem generalized because due to the combinability requirement clusters will need to overlap, e.g., the cluster head of one cluster can also be the member of another cluster.  The resulting hierarchy is thus driven by this requirement rather than pre-specified as in many other clustering approaches (e.g., the two-layer clustering in LEACH \cite{leach}).  However, should this requirement be removed then the clustering problem becomes a fairly standard one. 

\subsection{Network Model and Problem Definition}
\label{sec:model}

We will proceed to assume a network model of an undirected graph denoted as $G(V, E)$. 
%A network of sensor nodes is represented with an undirected, unweighted graph $G(V,E)$.
Each (sensor) node in $V$ acts as both a sensor and a relay.
%The communication network is represented with an undirected, unweighted graph $G(V,E)$. $V$ is the set of sensor nodes.
If two nodes can successfully exchange messages directly with each other, there exists an edge $e \in E$ between them.
Let there be a weight $w_e \geq 0$ associated with each edge which denotes the energy expended in sending a packet across this edge. 
%and depends on the number of transmissions required to send a packet across that edge.
Without loss of generality, we take node $0$ to be the central node or the base station. 
%({\bf Does the base station also have a local FFT??}) 
We also assume that all sensors (including the base station) are identical in their radio capability (and hence have the same energy consumption per bit). 
This is done to keep the presentation simple and can be easily relaxed. 

Each node has a local input vibration stream. The goal is to evaluate the SVD of the PSD matrix formed by the input vibration streams of all the sensors.  We define a {\em sensing cycle} to be the time duration in which each sensor performs the sensing task to generate a vibration stream, the SVD is then computed and the mode shapes are made known at the base station.  The length of this cycle as well as how this procedure may be used in practice are discussed in Section \ref{sec:disc_aggr}.  
%The sensing rate depends inversely on the duration of one sensing cycle. 
Our objective %, as will be more precisely discussed in Section~\ref{ref:multiple}, 
is to determine the optimal communication structure to minimize the energy consumption in a sensing cycle under a constraint on the maximum duration of a sensing cycle.

%\subsection{Metrics of Interest}
%\label{sec:metrics}

%\subsection{Problem Definition}

The two metrics of interest, namely energy consumption and computational delay are precisely defined as follows. 

%More precisely, here {\em energy consumption} 
%\noindent {\bf Energy Consumption:} this 
{\em Energy consumption} is defined as the total communication energy consumed in the network in one
sensing cycle. Let $E_{Tx}$ and $E_{Rx}$ denote the energy consumed in transmitting and receiving a bit of data.  For convenience we will 
denote %For convenience, we will 
%use the notation 
$E_b= E_{Tx} + E_{Rx}$.    
Then the energy consumed in transmitting a packet of $B$ bits over an edge $e$ is $w_e B E_b$ \footnote{ %E_b$ 
Our algorithms and analysis 
%and the corresponding approximation factors 
do not depend on the exact model used for energy consumption, provided that it remains a function of the number of bits transmitted;  
%Thus, nothing changes if a more complex model is used which  incorporates energy consumed in overhearing, etc.; 
more is discussed in Section~\ref{sec:disc_aggr}.}. 

{\em Computational delay} is defined as the time it takes to compute a designated function at a sensor node. 
As observed in~\cite{nsdi:wishbone,andy:svd}, the computational time is the chief contributor to delay as packet 
sizes in sensor systems tend to be very small. Thus, the duration of a sensing cycle depends primarily on the maximum computational
delay amongst all sensor nodes. %In other words, the computational delay constraint imposes a constraint on the maximum duration of a sensing cycle. 

%\label{sec:formal}
%Determining the optimal communication structure implies finding the set $S$ of sensor nodes on which the SVD computation will take place, and for each $s \in S$, finding the corresponding set of sensors $N_s$ whose FFT will be made available at $s$. Recall that $R$ and $r$ denote the number of bits required to represent the FFT from a single sensor and a single eigenvector respectively. 
A constraint on the computational delay essentially translates into a constraint on the maximum number of FFT's which can be combined at a node. 
%{\bf Let $C_s(| N_s |)$ be the time it takes to compute the SVD of the FFT from $| N_s |$ sensors on node $s \in V$. 
%The delay constraint $C_s(| N_s |) \leq C_s, \forall s \in V$ and some upper limit $C_s$ on node $s$, is equivalent to $|N_s | \leq n_s,$ $\forall s \in V$ 
%where $n_s := \mbox{max } \{ | N_s | \mid C_s(| N_s |) \leq C_s \}$. }
 %Also, recall that another constraint is that the computation on each pair $s_1, s_2 \in S$ is combinable.  
%
We now formally introduce the problem. 

%\begin{definition}
%\label{orig_problem}
{\bf Problem P1}: Find (1) the set $S$ of sensor nodes on which the SVD computation will take place, 
(2) for each $s\in S$, their corresponding set $N_s$ of sensor nodes whose FFT will be made available at $s$, and
(3) a routing structure, so as to minimize $E$, the total energy consumed,  
subject to the constraint that $|N_s | \leq n_s$, $\forall s \in S$, where $n_s$ denotes the maximum cluster size allowed at node $s$ that 
corresponds to its computational delay constraint, and that the computations on all pairs $s_1, s_2 \in S$ are combinable. 
%\end{definition}

The set $S$ will also be referred to as the set of cluster heads, and set $N_s$ the cluster associated with head node $s$. 
In the above description we have imposed individual delay constraints.  Note that the computational delay of one round of SVD is dominated by the 
largest delay among all nodes if the computations of successive rounds are pipelined. 
One could also try to minimize the maximum computational delay with 
a constraint on the energy consumption. Indeed, it can be shown that the dual of the linear programs we propose will optimally solve this alternative formulation. 
%However, due to space limitations we do not explore this dual formulation in this paper. 

%\subsection{NP-completeness}
%\label{sec:np}
A decision version of P1 can be shown to be NP-hard through a reduction from set cover. The proof is given in Appendix \ref{appendix1}. 
\begin{theorem}
\label{thm:np}
There is no polynomial time algorithm that solves P1, unless $P=NP$.
\end{theorem}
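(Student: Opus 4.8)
The plan is to prove NP-hardness of the decision version of P1, call it P1-D, by a polynomial-time reduction from \textsc{Minimum Set Cover}, which is NP-complete. P1-D asks: given $G(V,E)$, edge weights $\{w_e\}$, delay bounds $\{n_s\}$ and a budget $C$, is there a feasible communication structure --- a set $S$ of cluster heads, clusters $\{N_s\}_{s\in S}$ with $|N_s|\le n_s$ and $\bigcup_{s\in S}N_s=V$, all pairs combinable, and a routing --- whose total energy is at most $C$? Since an exact algorithm for the optimization problem P1 immediately decides P1-D, NP-hardness of P1-D proves the theorem.

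From a set cover instance $(U,\ \mathcal F=\{T_1,\dots,T_m\},\ k)$ I would build the following network. Put one \emph{element node} $e_i$ for each $u_i\in U$, one \emph{head node} $h_j$ for each set $T_j$, the base station $0$, and a small \emph{combinability gadget} --- the simplest being one additional element inserted into every $T_j$, which forces the intersection graph of the chosen sets to be complete and hence makes the combinability constraint vacuous without changing the minimum cover size. Edges encode the incidence structure: $e_i$ is joined to $h_j$ precisely when $u_i\in T_j$, and each $h_j$ is joined to the base station so that its computed eigenvectors can be delivered. The delay bounds are set so that $n_{h_j}$ is just large enough to hold the elements of $T_j$ (plus the $O(1)$ bookkeeping nodes), while $n_{e_i}=1$, so that elements --- which cannot host an SVD computation anyway, since $|N|>1$ is required --- are never cluster heads; this pins every admissible cluster to a subset of some $T_j$. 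The edge weights are chosen on two widely separated scales: the element--head edges get a large weight $M$ and the ``infrastructure'' edges (head-to-base and the gadget edges) are light, so that the FFT-collection cost is of order $|U|\cdot M\cdot R$ and is the \emph{same} for every assignment of elements to covering heads, while each activated head contributes only a small, essentially fixed overhead. Choosing $M$ and the FFT size $R$ large enough (but polynomial) that (i) no routing detour that delivers an element's FFT to a head not covering it is ever cheaper than collecting it directly and paying for one more head, and (ii) the residual lower-order ($r$-sized) terms fit inside a slack below one head's overhead, we set $C := |U|\,M\,R + k\cdot(\text{overhead}) + (\text{slack})$.

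Correctness then splits as usual. Forward: a set cover of size $t\le k$ gives a feasible structure of energy $\le C$ --- activate the $t$ corresponding heads, assign each element to a head covering it, and route everything along the cheapest edges. Backward: in any feasible structure the delay bounds and the incidence edges force the active heads to name a sub-collection of $\mathcal F$; the constraint $\bigcup_sN_s=V$ forces that sub-collection to cover $U$; and the cost normalization forces it to contain at most $k$ sets whenever the energy is $\le C$. Hence P1-D with budget $C$ is a ``yes'' instance iff the set cover instance is, and the construction is manifestly polynomial in $|U|+|\mathcal F|$.

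The step I expect to be the main obstacle is making the backward direction airtight: certifying that \emph{no} energy-optimal communication structure can undercut a minimum-cardinality set cover. This means closing every loophole the routing freedom permits --- relaying an element's FFT to a head that does not cover it so as to save a head, amortizing transmissions over shared relay paths, handling the coverage of the auxiliary head and base nodes without inflating the per-head overhead, and (absent the combinability gadget) stitching together many tiny overlapping clusters to dodge that overhead. The two-scale weighting, the tight delay bounds $n_{h_j}\approx|T_j|$, and the requirement $|N_s|>1$ are exactly what is needed to rule these out; the delicate part is verifying that a single polynomial choice of $M$, $R$, the light weights, the budget and $|V|$ defeats all of them at once, and the combinability gadget is what keeps the combinability requirement from interfering with this accounting.
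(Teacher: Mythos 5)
Your overall route is the same as the paper's: show the decision version is in NP and reduce from set cover, with one potential cluster head per set, element nodes attached by heavy edges, light ``infrastructure'' edges, and a budget whose light-weight part is supposed to count the number of heads actually used. The genuine gap is exactly the step you defer to the end, and the mechanism you propose for closing it --- choose $M$ and $R$ large so that ``each activated head contributes a fixed overhead'' that dominates all residual $r$-sized terms --- does not work in your construction. The reason is that in P1 \emph{every} node's FFT must be evaluated at some cluster head, so a head you do \emph{not} activate still costs roughly one light $R$-transmission (its FFT must cross at least one light edge to reach an active head or the base), while a head you \emph{do} activate costs roughly one light $R$-transmission for combinability (some node's FFT must appear in a second cluster to link that head to the rest) plus $O(r)$ in eigenvector traffic. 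These two $R$-scale terms essentially cancel: in any near-optimal structure the number of light $R$-transmissions is pinned at the same value regardless of how many heads are active, and the entire separation between ``$t\le k$ heads'' and ``$t>k$ heads'' lives at the $r$ scale (about one extra eigenvector per extra active head). So the per-head overhead is itself an ``$r$-sized residual,'' your calibration (ii) is unfounded as stated, and taking $R$ large buys nothing; worse, with careless relative choices of the light weights (e.g.\ gadget edges lighter than head--base edges, or forgetting that the base's own FFT must also be evaluated) the all-heads configuration can become \emph{cheaper} than the intended cover-based one, making the budget test insensitive to the cover size. Closing the backward direction therefore requires an exact, term-by-term lower bound on the light cost of every feasible structure (a membership-counting argument over clusters, with equal or carefully related light weights), not a scale-separation argument; that accounting is precisely what is missing from your proposal.

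It is instructive to see how the paper engineers its way around this trade-off. Its per-set gadget contains three extra nodes $x_1,x_2,x_3$: the $x_3$ nodes are tied to each other and to the base by weight-$0$ edges, so combinability is \emph{free} and drops out of the accounting entirely; $x_1$ and $x_3$ are cluster heads in every candidate solution, so their cost is a constant; and the counting device is the \emph{delay constraint} itself --- $x_1$ may combine at most $|C_k|+1$ FFTs, so whenever a set is used to cover even one element, some of its internal nodes are displaced to $x_2$ (reachable only at the intermediate weight $a>1$), forcing $x_2$ to become an additional head at an exactly computable extra cost. Summing the two cases $E^1_k$ and $E^2_k$ gives a total of the form (heavy constant) $+\,a r\left(|P|+K'\right)+$ (light constant), where $K'$ is the number of sets used, so the budget cleanly separates $K'\le K$ from $K'>K$. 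In your construction the delay bounds only pin clusters to sets and the counting must come from activation overhead, which is where the difficulty sits; either adopt a forcing device of the paper's kind or carry out the exact light-cost lower bound sketched above --- as written, the backward direction of your reduction is not established.
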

%\begin{proof}
%See Appendix~\ref{appendix1}.
%\end{proof} 

%\input{no_aggregation}
\section{A Lower Bound on the Value of P1}
\label{ref:lower}

%We first derive a lower bound on the optimal value of P1. 
%This lower bound is obtained by removing the computational delay constraint.
%This unconstrained problem will be subsequently referred to as problem {\bf P2}. 
%The construction of this bound provides valuable intuition into the development of an approximation algorithm for P1. 
%
To simplify presentation, in this section we will assume that the weights of all edges are equal. This is not restrictive
as all bounds derived in this section can be easily modified
to incorporate different weights. With this assumption, the energy consumed in sending data
from one node to another merely depends on the number of hops on the path between them.

\begin{definition}
%{\bf [Data Collection Tree]}
A {\em data collection tree (DCT)} for $G(V,E)$ is the spanning tree such that the path from each node $v \in V$
to the base station has the minimum weight. 
\label{def:dct}
\end{definition} 

Compared to a minimum spanning tree (MST), a DCT offers minimum weight on each path to the root rather than 
over the entire tree.  Since all weights are equal, a path of minimum weight is equivalent
to that of minimum hop count. %Let $T$ denote the data collection tree for $G(V,E)$, and
Let $d_0(v)$ denote the hop count of node $v \in V$ in the DCT. 

The following lemma provides a lower bound on the minimum energy consumption for P1 given any choice of $S$.
\begin{lemma}
\label{lemma:lb_no_comp}
Consider P1 defined on graph $G(V, E)$, and a set of cluster heads $S \neq \phi$,
%Then the minimum energy consumption, denoted as $E(S)$, over all possible topologies, cluster selection $N_s, \forall s\in S$, and routing methods, 
%is given by the following: 
then a lower bound on the optimal energy consumption, denoted by $E(S)$, is given by 
\begin{eqnarray}
%E \geq \left( \left( | V | - 1 \right) R + \sum_{v \in V} \left( d_T(v) - 1 \right) r \right) \left( E_{Tx} + E_{Rx} \right). 
E(S) \geq \left( \left( | V | - 1 \right) R + \sum_{v \in V } \left( d_0(v) - 1 \right) r 
+ |S|  r \right) E_b . 
\end{eqnarray}
\end{lemma}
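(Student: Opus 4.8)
The plan is to fix an arbitrary feasible solution of P1 that uses $S$ as its set of cluster heads, and to lower bound its energy by splitting it into two pieces: $E_1$, the energy spent forwarding raw FFT's from each sensor to the cluster heads that need them, and $E_2$, the energy spent forwarding the (partial) eigenvectors computed at the cluster heads down to the base station $0$. Since every transmitted bit belongs either to some node's FFT or to some cluster's SVD output, $E(S)\ge E_1+E_2$, so it is enough to bound $E_1$ and $E_2$ separately; the only subtlety is to keep the two charges disjoint and to use $r<R$ when gluing them.

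For $E_1$ the first step is the purely combinatorial fact $\sum_{s\in S}(|N_s|-1)\ge |V|-1$. This rests on two consequences of feasibility: $\bigcup_{s\in S}N_s=V$ (every sensor's FFT must enter the final SVD), and that the intersection graph on $\{N_s\}_{s\in S}$ is connected (this is precisely what ``all pairs of computations are combinable'' says, unwinding the recursive definition). Connectedness lets me order the clusters $N_{s_1},\dots,N_{s_{|S|}}$ so that each $N_{s_i}$, $i\ge 2$, meets $\bigcup_{j<i}N_{s_j}$; hence $N_{s_i}$ contributes at most $|N_{s_i}|-1$ new nodes, and summing gives $|V|\le \sum_s|N_s|-(|S|-1)$. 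Next, each head $s$ must receive the FFT of every member of $N_s\setminus\{s\}$; since $v$'s FFT, in reaching a head $s$, must traverse the last edge incident to $s$, and these last edges are distinct across the heads that use $v$ (an acyclicity argument on $v$'s forwarding routes), the FFT of each node $v$ traverses at least $|\{s\in S: v\in N_s,\,s\neq v\}|$ edges. Summing over $v$ and using the inequality above gives $E_1\ge\big(\sum_s(|N_s|-1)\big)RE_b\ge(|V|-1)RE_b$.

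For $E_2$ I would track, for each sensor $v$, the flow of information that starts at $v$ and ends (as part of a mode shape) at node $0$: it begins as $v$'s $R$-bit FFT, passes through the first cluster head $s$ using it after $d(v,s)$ hops in $G$, and then travels to $0$ as an $r$-bit eigenvector over at least $d_0(s)$ hops. Using $d(v,s)+d_0(s)\ge d_0(v)$ and $r<R$, this flow contributes at least $R+(d_0(v)-1)r$ when $s\neq v$, and at least $r\,d_0(v)$ when $v$ is itself that head; assigning to each member of $N_s$ a distinct one of the $|N_s|$ eigenvectors produced there keeps these charges disjoint. Summing over $v\neq 0$, and accounting for the surplus cluster-head eigenvectors counted by $\sum_s|N_s|\ge|V|-1+|S|$ that must also descend to $0$, combining with the $E_1$ estimate is designed to yield exactly $E(S)\ge\big((|V|-1)R+\sum_{v\in V}(d_0(v)-1)r+|S|r\big)E_b$.

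The step I expect to be the main obstacle is precisely this final bookkeeping. A cluster head may compute its own (and nearby) FFT's locally, paying $0$ instead of $R$ per such FFT, so the naive per-node charge $R+(d_0(v)-1)r$ overcounts by $R-r$ for up to $|S|$ nodes; this deficit must be recovered from the FFT transmissions that the head nonetheless receives (because $|N_s|\ge 2$) and from the extra eigenvectors guaranteed by $\sum_s|N_s|\ge|V|-1+|S|$ — all while ensuring no transmission is charged both to the $E_1$ bucket and to the $E_2$ bucket. Making this trade come out exactly to the $+|S|r$ term, uniformly over every feasible choice of $S$ and of the overlap structure, is the delicate part of the argument.
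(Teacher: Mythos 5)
Your raw ingredients are the same as the paper's: coverage of $V$ by the clusters plus the unwinding of combinability into connectedness of the cluster-overlap graph yields at least $|V|-1$ FFT deliveries at cost $R$ each (the paper phrases this as $|V|-|S|$ first deliveries plus at least $|S|-1$ extra deliveries, one per edge of a spanning tree of its overlap graph $G^S$), and your per-source flow argument with $d(v,s)+d_0(s)\ge d_0(v)$ together with $R>r$ supplies the $r$-terms. But the proof as written stops exactly at the step that produces $\sum_{v\in V}(d_0(v)-1)r+|S|r$, and you yourself flag it as an unresolved obstacle; that is a genuine gap, not a routine detail left to the reader. Worse, the assembly you sketch would double count: the $R$ in your per-flow charge $R+(d_0(v)-1)r$ is precisely the last-edge-into-a-head transmission that your $E_1$ bound already charges, so you cannot simply add $E_1\ge(|V|-1)R$ to the per-flow charges, and your proposed fix (recovering the head deficit from ``surplus eigenvectors'' counted by $\sum_s|N_s|\ge|V|-1+|S|$) is not worked out.

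The resolution, which is the paper's accounting and is not delicate, is a single per-node charge rather than an $E_1/E_2$ gluing. Charge each non-head $v$ one transmission of size $R$ (its first FFT delivery, which must traverse at least one hop) plus $(d_0(v)-1)$ hops at rate $r$: this is valid because any FFT hop beyond the first costs $R>r$, and the eigenvector carrying $v$'s mode from its first head $s$ travels at least $d_0(s)\ge d_0(v)-d(v,s)$ hops. Charge each head $v\in S$ simply $d_0(v)$ hops at rate $r$ for its own eigenvector; the difference between $d_0(v)$ and $d_0(v)-1$ over the $|S|$ heads is exactly the $+|S|r$ term, so no recovery from surplus eigenvectors or from the FFTs the head receives is needed -- the head paying $0$ instead of $R$ is accounted for not in the $r$-terms but in the $R$-count, which so far is only $|V|-|S|$. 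That deficit is restored by combinability: a spanning tree of the overlap graph forces at least $|S|-1$ additional FFT deliveries to second heads, and these transmissions are distinct both from the first deliveries and from every $r$-charged hop (they terminate at a head that uses the FFT, whereas the $r$-charged FFT hops terminate at nodes that do not), lifting the $R$-count to $|V|-1$. With this charging in place of your final bookkeeping, your argument coincides with the paper's proof.
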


\begin{proof}
%Suppose under the optimal solution to P1 we have the set $S$; 
%Note that since the SVD has to be computed somewhere, we have $S \neq \phi$. 
For all nodes $v \in V \backslash S$, a message of size $R$ (containing the FFT) needs to be transmitted from $v$ to some node in $S$. 
This message goes over at least one hop to reach this node, after which the size reduces to $r$ (the eigenvector). 
Since the minimum hop count from $v$ to the base station is $d_0(v)$, if the message of size $R$ goes over one hop, the message of size $r$ will go over at least $d_0(v) - 1$ hops.  As $R>r$, the amount of transmission required includes $|V| - |S|$ transmissions of size $R$ and $\sum_{v \in V \backslash S} (d_{0}(v) - 1)$ transmissions of size $r$. 

In addition to the above, each of the $|S|$ computations needs to be combinable.  This means that $\forall s_1, s_2 \in S$, either $N_{s_1} \cap N_{s_2} \neq \phi$ 
or there exists another node $s_3 \in S$ such that the computations at nodes $s_1$ and $s_3$, as well as that at nodes $s_2$ and $s_3$ 
are combinable, respectively. 
To understand how many extra messages are needed to satisfy this constraint, we construct the following
graph $G^S(S, E^S)$: an edge is added to $E^S$ between nodes $s_1$ and $s_2$, $s_1, s_2 \in S$, only if $N_{s_1} \cap N_{s_2} \neq \phi$. 
Each edge in this graph thus represents at least one common node between $N_{s_1}$ and $N_{s_2}$; each common node needs to transmit a message of size $R$ to both $s_1$ and $s_2$.  

It follows that each edge implies at least one extra transmission of size $R$ in addition to the 
$|V| - |S|$ transmissions of size $R$ computed earlier. 
Two nodes in $s_1, s_2 \in S$ are combinable if and only if there exists a path
between $s_1$ and $s_2$ in $G^S(S, E^S)$. For a path to exist between every pair of nodes, 
$G^S(S, E^S)$ needs to have at least $|S|-1$ edges. This means at least $|S|-1$ extra transmissions of 
size $R$ are required for all pairs $s_1, s_2 \in S$ to be combinable. 
Taking this into account, at least $|V|-1$ transmissions of size $R$ and $\sum_{v \in V \backslash S} (d_{0}(v) - 1)$ transmissions of size $r$ have to take place.  

Finally, computed eigenvectors from nodes $v \in S$ each goes through at least $d_0(v)$ hops. %, $\forall v \in S$. 
%Thus, messages of size $r$ will go through at least $\sum_{v \in S} d_{T}(v)$.
%Thus, the optimal energy consumption is 
Combining all of the above yields
\begin{eqnarray}
E(S) &\geq& \left( \left( | V | - 1 \right) R + \sum_{v \in V \backslash S} \left( d_0(v) - 1 \right) r 
+ \sum_{v \in S} d_0(v) r \right) E_b \nonumber \\
&=& \left( \left( | V | - 1 \right) R + \sum_{v \in V} \left( d_0(v) - 1 \right) r  + |S| r \right) E_b ~. \label{eqn:lower_bound}
\end{eqnarray}
\end{proof}

%%%%%%%%%%%%%%%%%%%%%%%%%
An interesting observation is that the lower bound only depends on the size of $S$ and not its membership. 
One way to get close to this bound is to limit the delivery of any FFT to a single hop and
route the FFT and the subsequent eigenvector along shortest paths.  This motivates a particular solution for any given tree structure. 

\begin{definition}
Consider a graph $G(V, E)$ and a routing tree $T$.  Define a communication structure $A_{P2}(T)$ as follows: 
(1) All non-leaf nodes in $T$ constitute the set $S$, (2) cluster $N_s, s \in S$ consists
of all immediate children of $s\in S$, and (3) each node sends its own FFT to its parent node on $T$, and a node $s \in S$
sends eigenvectors for itself as well as its children along $T$ to the base station.
This will be referred to as {\em tree solution $T$}. 
\end{definition}

We next consider an unconstrained version of P1, i.e., by removing the computational delay constraint. We refer to this unconstrained problem as {\bf P2}.
%We next study the following solution for P2 for any given routing tree $T$ defined on $G(V, E)$. 
\begin{lemma}
\label{lemma:tree_energy}
Consider P2 defined on a graph $G(V, E)$, and a routing tree denoted by $T$ defined on the same graph.
Let $d_T(v)$ denote the hop count of node $v \in V$ in $T$.  %Consider the following communication structure using $T$, denoted by $A_{P2}(T)$: 
%(1) Use all non-leaf nodes in $T$ as the cluster head set $S$, (2) the cluster $N_s, s \in S$ consists
%of all the immediate children of $s$ on $T$, and (3) each node sends its own FFT to its parent node on $T$, and a node $s \in S$
%sends all eigenvectors for itself as well as its children (either locally computed or received) along the path on $T$ to the base station.
Then tree solution $A_{P2}(T)$ is feasible and has an energy consumption  
\begin{equation}
E_{A_{P2}}(T) = \left( \left( | V | - 1 \right) R + \sum_{v \in V} \left( d_T(v) - 1 \right) r  + |S| r \right) E_b ~.
\end{equation}
\end{lemma}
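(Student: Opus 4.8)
The plan is to verify directly that the communication structure $A_{P2}(T)$ satisfies the constraints of P2 (feasibility) and then to count the bits transmitted to obtain the stated energy expression. Feasibility has two parts. First, since P2 is the unconstrained version of P1, there is no cluster-size constraint $|N_s| \leq n_s$ to check, so the only requirement is combinability: all pairs $s_1, s_2 \in S$ must be combinable, where $S$ is the set of non-leaf nodes of $T$. I would argue this by induction on the tree structure, using the fact that in $A_{P2}(T)$ the cluster $N_s$ of a non-leaf node $s$ consists of all immediate children of $s$ in $T$. Two non-leaf nodes in a parent-child relationship, say $s$ and a non-leaf child $s'$, satisfy $s' \in N_s$ (as $s'$ is a child of $s$), and $s' \in N_{s'}$ trivially (every cluster head combines its own data in this decomposition, or more carefully: any two clusters sharing the node $s'$ are combinable by Definition 1 since $N_s \cap N_{s'} \ni s'$ once we note $s' \in N_{s'}$ — here I would double-check the precise convention for whether a cluster head is counted in its own cluster, and if not, observe that the grandchild relationship still forces overlap through the intermediate node). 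Walking up the tree from any two non-leaf nodes to their least common ancestor and chaining these pairwise overlaps through the transitive clause of Definition 1 then shows every pair in $S$ is combinable.

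The second, and main, part is the energy count. Every node $v \in V$ except the base station sends its own FFT, a message of $R$ bits, to its parent in $T$; that is one hop, contributing $(|V|-1)R$ bits of transmission (measured in edge-traversals, which by the equal-weight assumption equals hop count). Next, each non-leaf node $s \in S$ forwards along $T$ toward the base station the eigenvectors it is responsible for — namely those for itself and for its children. I would organize this count by observing that for every node $v \in V$ (leaf or not) exactly one eigenvector of $r$ bits ultimately reaches the base station, and it travels the path from $v$'s position up to the root; but the eigenvector for $v$ is only *created* at $v$'s parent (for a leaf $v$) or at $v$ itself (for a non-leaf $v$, which computes its own eigenvector), then relayed upward. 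Either way, the eigenvector associated with node $v$ traverses $d_T(v) - 1$ edges when $v \notin S$ and $d_T(v)$ edges when $v \in S$; summing gives $\sum_{v \in V}(d_T(v)-1)r + |S| r$ bits, exactly as in the lower bound of Lemma~\ref{lemma:lb_no_comp} with $d_0$ replaced by $d_T$. Multiplying the total bit-count by $E_b$ yields the claimed $E_{A_{P2}}(T)$.

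The step I expect to be the most delicate is pinning down the eigenvector bookkeeping precisely — in particular, being careful about where each of the $|V|$ eigenvectors is first available, how many eigenvectors a cluster head outputs versus how many it forwards, and making sure no eigenvector is double-counted or omitted when a non-leaf node is itself a member of its parent's cluster. The cleanest way to avoid errors is to set up a bijection between the $|V|$ eigenvectors that must arrive at the base station and the nodes $v \in V$, and then to compute the edge-length of each eigenvector's route under the routing rule in the definition of $A_{P2}(T)$; the two cases $v \in S$ and $v \notin S$ give the $d_T(v)$ versus $d_T(v)-1$ distinction, and the algebra collapses to the stated formula exactly as in the proof of Lemma~\ref{lemma:lb_no_comp}. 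Everything else — the $R$-bit FFT count and the multiplication by $E_b$ — is routine.
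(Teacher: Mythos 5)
Your proposal is correct and follows essentially the same route as the paper's own (much terser) proof: feasibility via chaining cluster overlaps along tree paths between non-leaf nodes, then counting $(|V|-1)R$ bits for the one-hop FFTs and $\sum_{v\in V}(d_T(v)-1)r + |S|r$ bits for the eigenvectors, distinguishing $v\in S$ (eigenvector originates at $v$, travels $d_T(v)$ hops) from $v\notin S$ (originates at the parent, travels $d_T(v)-1$ hops). The convention you flag is resolved as you suspected — each non-leaf node combines its children's FFTs together with its own, which is exactly what makes both the combinability chain and the $|S|r$ term come out as stated.
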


\begin{proof}
We first show feasibility, i.e., each pair of nodes $s_1, s_2 \in S$ are combinable. 
Since $S$ consists of all non-leaf nodes on a tree, there exists a path between any pair of these nodes.  Thus $A_{P2}(T)$ is feasible. 
%Note that removing the leaf nodes and the
%edges connecting the leaf nodes to the non-leaf nodes yields \\ $G^S(S,E^S)$. Since, there exists a path between each pair  
%of non-leaf nodes in $T_M$ (and this path obviously does not go through a leaf node), every pair of nodes
%$s_1, s_2 \in S$ are combinable. 

%We next establish that using $T$ the energy consumption is exactly the lower bound given in (\ref{eqn:lower_bound}). 
Next since each node (except for the base station) sends its FFT to its parent, this results in a cost of $\left( |V|-1 \right) R E_b$; 
each non-leaf node computes the SVD from its children's FFT and its own, and then sends the eigenvectors to the base station, resulting in  
$\sum_{v \in V} \left( d_T(v) - 1 \right) r$  + $| S | r$ bits.  
%Each FFT goes over one hop to its parent and the energy consumed in this step is $\left( | V | - 1 \right) R E_b$.
%Then, each non-leaf node sends all eigenvectors for itself as well as its children to the base station along the path on $T$ to the base station.
%Using the proof of Lemma~\ref{lemma:lb_no_comp}, it is easy to show that the energy consumed in this step is equal to 
%$\left( \sum_{v \in V} \left( d_T(v) - 1 \right) r  + |S| r \right) E_b$.
Putting everything together yields the lemma.
\end{proof}

This lemma suggests that of all solutions given by a tree structure, the one that minimizes both $d_T(v)$ and $|S|$ will result in the smallest
energy consumption.  
%Now, let $T$ be a DCT. Then, the solution $A_{P2}(T)$ consumes an energy equal to 
%$\left( \left( | V | - 1 \right) R + \sum_{v \in V} \left( d_0(v) - 1 \right) r  + |S| r \right) E_b$. 
%We observe that amongst DCT's, the smaller the number of non-leaf nodes (or cluster heads), the smaller the energy consumption.
This motivates the construction of a DCT (which minimizes $d_T(v)$) that has a minimum number of non-leaf nodes (which minimizes $|S|$). 

\begin{definition}
A {\em minimum non-leaf node data collection tree}, or MDCT, defined on graph $G(V, E)$ is a DCT that has the smallest number of non-leaf nodes among all DCTs defined on $G(V, E)$.  We will denote this tree as $T_M$. 
\end{definition}

A key property of an MDCT is that it is impossible to move all the children of wa non-leaf node $v \in V$ on $T_M$ to other {\em non-leaf nodes} 
of height $ \leq d_T(v)$.  This is because if this could be done then we can effectively reduce the number of non-leaf nodes on $T_M$, which is a contradiction. 
Figure \ref{fig:non_leaf_min} gives an example: both (a) and (b) are DCTs on the same graph, but the former is not a MDCT while the latter is. 
\begin{figure}[htb]
 \centering
\includegraphics[width=2.9in]{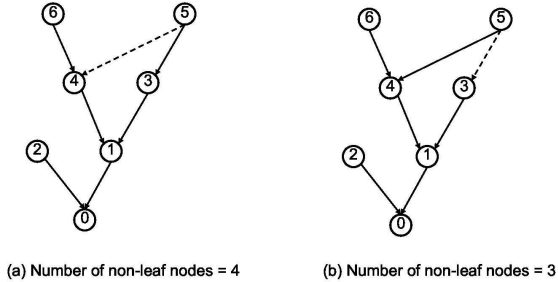}
  \caption{Two data collection trees for the same network. The solid lines represent the edges of the tree. $T_M$ is the tree in (b).}
\label{fig:non_leaf_min}
\vspace{-0.15in}
\end{figure}

%This lower bound is obtained by removing the computational delay constraint.
%This unconstrained problem will be subsequently referred to as problem {\bf P2}. 
%The construction of this bound provides valuable intuition into the development of an approximation algorithm for P1. 

%Comparing Lemma~\ref{lemma:tree_energy} to the lower bound derived in Lemma~\ref{lemma:lb_no_comp}, 
%it is easy to see that, amongst all routing trees $T$, for the solution $A_{P2}(T)$, the MDCT $T_M$ yields the minimum energy consumption.  
%We next show that solution $A_{P2}(T_M)$  % as the clustering and routing structure leads to an 
%is optimal for $P2$. % over all possible solutions.

%under the assumption that $R > 2r$ \footnote{
%This condition is satisfied in the case of the SVD computation for structural health monitoring.}.

\begin{theorem}
\label{thm:without_delay}
Consider P2 defined on $G(V, E)$, and an associated MDCT $T_M$ with cluster head set $S$. Under the condition $R>2r$,
an optimal solution to P2 is given by $A_{P2}(T_M)$.\footnote{The condition $R>2r$ is easily satisfied in SVD computation for SHM.}
%The following solution to P1 is optimal without the computational delay
%constraint. $S$ consists of all non-leaf nodes in the data collection tree $T_M$, $N_s, s \in S$ consists of all the immediate
%children of $s$ and the data collection tree $T_M$ is the routing structure.
\end{theorem}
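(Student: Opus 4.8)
The plan is to show that $A_{P2}(T_M)$ simultaneously matches (or nearly matches) the lower bound of Lemma~\ref{lemma:lb_no_comp} and that no feasible solution to P2 can do better. First I would invoke Lemma~\ref{lemma:tree_energy} to write down the exact cost of the candidate solution, $E_{A_{P2}}(T_M) = \left( (|V|-1)R + \sum_{v \in V}(d_{T_M}(v)-1)r + |S|r \right) E_b$, and note that since $T_M$ is a DCT we have $d_{T_M}(v) = d_0(v)$ for every $v$, so this cost equals exactly the right-hand side of the Lemma~\ref{lemma:lb_no_comp} bound evaluated at $S = S(T_M)$. Thus for the \emph{particular} cluster-head set $S$ induced by $T_M$, the tree solution is optimal. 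The real work is to rule out that a different choice of cluster-head set $S'$ — possibly of smaller cardinality, or with a cleverer overlap/routing structure — could beat it.

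Next I would analyze how the lower bound $E(S') = \left((|V|-1)R + \sum_{v\in V}(d_0(v)-1)r + |S'|r\right)E_b$ varies with $S'$. Since the bound is increasing in $|S'|$ (the only place $S'$ enters), and every feasible $S'$ must contain at least two nodes and satisfy the combinability constraint, the question reduces to: what is the smallest $|S'|$ achievable by \emph{any} feasible solution, and can that beat $|S|$, the non-leaf count of the MDCT? Here is where the defining property of the MDCT is used: among all DCTs, $T_M$ minimizes the number of non-leaf nodes. I would argue that any feasible solution to P2 implicitly routes each FFT along some path and each eigenvector along some path to the base station, and that the set of nodes actually performing SVD computations, together with the combinability (overlap) requirement, forces a tree-like covering structure whose "internal" nodes number at least the non-leaf count of the best DCT. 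The condition $R > 2r$ enters to guarantee that it is never worthwhile to route an FFT over more than one hop before compressing it (which is exactly the slack between the lower bound and a generic solution), and more subtly to ensure that creating extra overlap among clusters — which costs an extra size-$R$ transmission per overlap edge — is always more expensive than simply adding those nodes to a single cluster along the tree; this is what pins the optimum to a tree rather than a general overlapping clustering.

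The main obstacle I expect is precisely this last point: showing rigorously that an optimal solution to P2 can be taken to be a tree solution $A_{P2}(T)$ for \emph{some} routing tree $T$, i.e., that overlapping clusters and multi-hop FFT delivery never strictly help when $R > 2r$. The clean way to do this is an exchange/normalization argument: start from an arbitrary optimal solution, and (i) reroute any FFT that travels more than one hop so that it is compressed after its first hop — the combinability graph $G^S$ argument from Lemma~\ref{lemma:lb_no_comp} shows the needed overlaps can be preserved while only decreasing cost since $R>r$; (ii) whenever two clusters overlap in a way not forced by a parent–child relation, collapse the overlap, using $R>2r$ to show the rebalanced single-cluster arrangement costs no more; (iii) conclude the resulting structure is exactly $A_{P2}(T)$ for a spanning tree $T$, which by Lemma~\ref{lemma:tree_energy} costs $\left((|V|-1)R + \sum_v (d_T(v)-1)r + |S_T|r\right)E_b$. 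Finally, minimize this over all trees $T$: the sum term is minimized termwise by any DCT (giving $d_T(v)=d_0(v)$), and among DCTs the $|S_T|$ term is minimized by definition by the MDCT. Hence $A_{P2}(T_M)$ is optimal, completing the proof.
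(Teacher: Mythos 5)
Your opening move matches the paper's: by Lemma~\ref{lemma:tree_energy}, $E_{A_{P2}}(T_M)$ equals the Lemma~\ref{lemma:lb_no_comp} bound with $S=S(T_M)$ (since $d_{T_M}(v)=d_0(v)$), and the monotonicity of that bound in $|S'|$ immediately disposes of any solution using at least $|S|$ cluster heads. The genuine gap is in how you rule out solutions with \emph{fewer} cluster heads. Your final step --- normalize to a tree solution, then ``minimize over all trees $T$'' by noting the depth term is minimized by DCTs and, among DCTs, the $|S_T|$ term by the MDCT --- is not valid reasoning, because the two terms are in tension: a spanning tree that is \emph{not} a DCT can have strictly fewer non-leaf nodes than $T_M$ ($T_M$ only minimizes non-leaf count within the class of DCTs), at the price of larger depths, and termwise minimization over two different classes does not show that the combined cost $\bigl(\sum_{v}(d_T(v)-1)r + |S_T|r\bigr)E_b$ is minimized at $T_M$. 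Similarly, your normalization steps (i)--(ii) are asserted rather than proved: compressing an FFT at its first-hop relay turns that relay into a new cluster head, which changes $|S|$, creates new combinability obligations, and the claim that collapsing overlaps ``costs no more'' when $R>2r$ is exactly the quantitative content that still needs an argument.

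The paper closes this gap with a per-node exchange argument tied to the defining property of the MDCT. For each $v\in S\setminus S''$, its children must deliver their FFTs somehow, and only three cases arise: (1) every child goes one hop to a head at the same height, in which case at least one receiving head must be a node outside $S$ (otherwise all of $v$'s children could be reassigned within $T_M$, contradicting its minimum non-leaf count), so no head is actually saved; (2) some child goes one hop to a head of strictly larger height, costing at least an extra $rE_b$ of eigenvector routing; or (3) some child's FFT travels at least two hops, costing at least $(R-r)E_b > rE_b$, which is precisely where $R>2r$ enters. Hence each unit of the $\left(|S|-|S''|\right) r E_b$ saving promised by the lower bound is offset, and smaller head sets cannot win. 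If you want to keep your tree-normalization route, you would still need an argument of exactly this kind to compare $T_M$ against trees (or overlapping clusterings) with fewer internal nodes; as written, your proposal defers the crux of the theorem rather than proving it.
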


\begin{proof}
%We first show that $T_M$ provides a feasible solution to P2, i.e., each pair of nodes $s_1, s_2 \in S$ are combinable. 
%Since $S$ consists of all non-leaf nodes on a tree, there exists a path between any pair of these nodes.  Thus $T_M$ is feasible. 
%Note that removing the leaf nodes and the
%edges connecting the leaf nodes to the non-leaf nodes yields \\ $G^S(S,E^S)$. Since, there exists a path between each pair  
%of non-leaf nodes in $T_M$ (and this path obviously does not go through a leaf node), every pair of nodes
%$s_1, s_2 \in S$ are combinable. 
%
%We next establish that using $T_M$ the energy consumption is exactly the lower bound given in (\ref{eqn:lower_bound}). 
%It's easily seen that each node (except for the base station) sends its FFT to its parent, resulting in a cost of $\left( |V|-1 \right) R E_b$; 
%each non-leaf node computes the SVD from its children's FFT and its own, and then sends the eigenvectors to the base station, resulting in  
%$\sum_{v \in V} \left( d_T(v) - 1 \right) r$  + $| S | r$ bits.  
%({\bf This demonstrates that MDCT is optimal among all DCT based solutions, but nothing more...}) 
%We next demonstrate that the set $S$ given by $T_M$ is the optimal choice among all feasible choices. 
%
%We first write $E_{A_{P2}}(T_M)$ as the sum of energy incurred by each node in the network: 
%\begin{eqnarray*}
%E_{A_{P2}}(T_M) &=& E_b \sum_{s\in S} \left(R + r d_{0}(s) + \sum_{v\in N_s, v\neq s} (R+r d_0(v)) \right) 
%\end{eqnarray*}
%
From Lemma \ref{lemma:tree_energy} we know that $E_{A_{P2}}(T_M)$ matches exactly the lower bound given in (\ref{eqn:lower_bound}). 
Consider any other solution with a cluster head set $S'$ such that $|S'|\geq|S|$.  By Lemma \ref{lemma:lb_no_comp} %the lower bound on $E(S')$ is at least $E(S)$, and hence 
$E(S')\geq E_{A_{P2}}(T)$ so any solution with a larger set $S'$ is no better. 

Consider next any solution with a set $S''$ such that $|S''| < |S|$.  %This implies that $\exists s\in S$ and $s \not\in S''$. 
By Lemma \ref{lemma:lb_no_comp}, using $S''$ instead of $S$ reduces the energy by no more than $\left( |S|-|S''| \right) r E_b$.
%By the property of $T_M$, there exists at least one child node of $s$ on $T_M$, whose FFT now has to go over more than one hop. 
%Let $S'$ denote a set which solves P1 without the computational delay constraint, results in a smaller energy consumption than $S$ and $|S'| < |S|$. 
%At the same time, for each node $s \in S$, and $s\not\in S''$, by the property of $T_M$ there exists at least one child of $s$ whose FFT now either has to go over 2 or more hops to reach a cluster head $s''\in S''$ such that $d_0(s'')=d_0(s)$ or can reach a cluster head $s'''\inS$ in 1 hop but $d_0(s''')>d_0(s)$. 
On the other hand, consider a node $v\in S$ and $v\not\in S''$.  By the property of the MDCT $T_M$, there are only three possibilities in how the children of $v$ can send their FFT under the new solution $S''$: 
(1) each child of $v$ sends its FFT to some node $v''\in S''$ with $d_0(v'') = d_0(v)$ via a single hop; (2) at least one child of $v$ sends its FFT to a $v''\in S''$ with $d_0(v'') > d_0(v)$ via a single hop; (3) at least one child of $v$ sends its FFT over at least $d\geq 2$ hops to reach $v''\in S''$ with $d_0(v'')+d \geq d_0(v)+1$. %to a node in $S''$ at height at least $d_0(v)$ via more than one hop.  
Denote these sets as $V_1$, $V_2$ and $V_3$, respectively.  Note that $|S| = |S \cap S''| + |V_1| + |V_2\cup V_3|$. 

In case (1), at least one such $v''\in S''$ cannot be in $S$, and these $v''$ nodes will be distinct for different $v \in S, \not\in S''$ nodes, 
for otherwise it contradicts the definition of an MDCT.  Thus for each such $v\in S, \not\in S''$ there corresponds a $v''\in S'', \not\in S$.  
Therefore case (1) does not contribute to any reduction in energy consumption compare solution $S''$ to $S$. Thus, $|S''|<|S|$ can only be true 
if either (2) or (3) is true for some $v\in S, \not\in S''$; in other words, $|S|-|S''|=|V_2\cup V_3|$.   For each such $v$, if it falls under 
case (2) then there is an energy increase (from $S$ to $S''$) of at least $r E_b$ due to the height increase of $v''$ over $v$; if it falls under 
case (3), the energy increase is at least $(R-r)E_b>rE_b$ by the condition stated in the theorem.  Thus the total energy increase is at least $r E_b$ 
for each $v\in V_2\cup V_3$; therefore the increase is at least
%{\bf Now, consider a node $v$ in $S$ but not in $S''$. Then: (i) Either all the children nodes of $v$ in the MDCT are sending their FFT
%to nodes at the same height as $v$ in $S''$ and all FFT transmissions go over only one hop. 
%(Note that by definition of MDCT, these nodes cannot be at a height smaller than $d_0(v)$.)
%Now, one of these nodes in $S''$ is not in $S$ as all children of a non-leaf node in a MDCT cannot be moved to other non-leaf nodes at the same height. 
%Thus, for every node being removed in $S$ by this subcase, we will have to add at least one extra node in $S''$. Thus, nodes
%belonging to this subcase do not reduce the value of $|S| - |S''|$. 
%(ii) Or at least one child of $v$ 
%in the MDCT is sending its FFT to nodes at a height greater than $d_0(v)$ in $S''$ and all FFT transmissions go over one hop. 
%This increases the energy consumption by at least $r$. 
%(iii) Or the FFT of at least one child of $v$ goes over two hops which increases the energy consumption by at least $R - r > r$ (as $R > r/2$). 
%
%%The size of $S''$ can be smaller than $S$ only when the last two subcases occur, and for each of these two subcases, the energy consumption
%increases by at least $r$ for every node in $S''$ but not in $S$. Hence, the net increase in energy consumption is at least 
$\left( |S|-|S''| \right) r$.  %Hence, the energy consumption does not reduce which is a contradiction.
Hence any solution with a smaller set $S''$ is no better, completing the proof. 
\end{proof} 

To summarize, an MDCT yields the optimal solution
for P2, which also serves as a lower bound to the value of P1.  Note that in this solution the overlap between clusters is through cluster heads;  
all cluster heads (except for the base station) is a member of another cluster.  %Thus a tree becomes a natural representation. 
%Also note that any other DCT $T\neq T_M$ will yield a solution which has an extra energy cost of $\left( NL(T) - NL(T_M) \right) r$ 
%where $NL(T)$ represents the number of non-leaf nodes on the DCT $T$.  

%%%%%%%%%%%%%%%%%%%%%%%
\section{Exact and Approximate Algorithms}
\label{ref:multiple}

We next present an integer linear program (ILP) to solve P1 exactly and a $O \left( log \left( |V| \right) \right)$ approximation algorithm for P1.

\subsection{An Exact ILP for P1}
\label{ilp:original}
%In this section, we propose an ILP to solve P1.
%({\bf Since this ILP is not further used, can consider moving it to the appendix...?})
We first introduce optimization variables used in the ILP. 

The set of variables $x_{ij}, i, j\in V$ define both the sets $S$ and $N_s, \forall s\in S$ as follows. 
$x_{ij} := 1$ if the FFT of node $i$ is evaluated at node $j$ (i.e. $i \in N_j$), and $0$ otherwise.
$x_{ii} := 1$ if $i \in S$ and $0$ otherwise. %Thus, $x_{ij}$ is the variable which defines both the set $S$ as well as $N_s$. 

$p_{ijk}:= 1$ if the FFT of node $k$ is evaluated at both nodes $i$ and $j$.  This notation is used for convenience of presentation
only as it is completely determined by $x_{ij}, i, j\in V$. 

%$p_{ijk}$ will allow us to determine if the pair $i,j \in S$ is combinable or not. 

Finally, the variables $c_{ijn}$ recursively verifies the combinability relationship between two nodes $i, j\in S$ as follows:   
\begin{equation}
\label{eqn:cijn}
c_{ijn} = \left\{ \begin{array}{c l} 1 & \mbox{if } n=0 \mbox{ and } \sum_{k \in V} p_{ijk} \geq 1,  \\
1 & \mbox{if } 0<n<|V| \mbox{ and } \\
& \sum_{k \in V} c_{ik(n-1)}.c_{jk(n-1)} + c_{ij(n-1)} \geq 1, \\
0 & \mbox{otherwise.}
\end{array} \right.
\end{equation} 
Thus $c_{ij0}=1$ if the pair $i, j \in S$ share common nodes in their respective clusters, $c_{ij1}=1$ if the pair $i, j$ either share common nodes directly or each shares common nodes with a common third cluster, and so on.  If the pair $i, j \in S$ are combinable, we will have $c_{ij(|V|-1)} = 1$. 

Finally, %$H_{i \rightarrow j}$ 
$W_{ij}$ denotes the sum weight of all edges along the shortest path from node $i$ to node $j$.

The ILP below solves P1 exactly, where the minimization is over the choice of $x_{ij}, \forall i, j\in V$. 
\begin{eqnarray}
\label{eqn:obj} & \mbox{\bf (ILP\_P1)} ~~~ \mbox{min } \sum_{i \in V, j \in V} x_{ij} E_b \left( R W_{ij} %H_{i \rightarrow j} 
+ r W_{j0} \right) & \\ %H_{j \rightarrow 0} \right) & \\ 
& \mbox{s.t.} & \nonumber \\
\label{eqn:c1} &  \sum_{j \in V, j \neq i} \frac{x_{ji}}{V} \leq x_{ii} \leq \sum_{j \in V, j \neq i} x_{ji}, \forall i \in V \\
\label{eqn:c2} & \sum_{j \in V} x_{ij} \geq 1, \forall i \in V \\
\label{eqn:c3} & p_{ijk} \leq \frac{x_{ki} + x_{kj}}{2},  \forall i,j,k \in V \\
\label{eqn:c4} & c_{ij0} \leq \sum_{k \in V} p_{ijk}, \forall i,j \in V& \\
\label{eqn:c5} & c_{ij(|V|-1))} \geq x_{ii} + x_{jj} - 1,   \forall i,j \in V & \\
\label{eqn:c6} & t_{ijkn} \leq \frac{c_{ik(n-1)} + c_{jk(n-1)}}{2}, \forall i,j,k \in V, 0 < n < |V|  & \\
\label{eqn:c7} & c_{ijn} \leq c_{ij(n-1)} + \sum_{k \in V} t_{ijk(n-1)}, \forall i,j \in V, 0 < n < |V| & \\
\label{eqn:c8} & c_{iin} = 0, \forall i \in V, 0 \leq n < | V |    & \\
\label{eqn:c9} & \sum_{i \in V} x_{ij} \leq n_j, \forall j \in V   & \\
\label{eqn:c10} & x_{ij}, c_{ijk}, p_{ijk}, t_{ijkn} \in \{0,1\} \forall i,j,k \in V, 0 \leq n < | V |  & 
\end{eqnarray}
The objective (Eqn (\ref{eqn:obj})) is fairly straightforward: if the FFT of node $i$ is sent to 
node $j$, it costs $R W_{ij} E_b$.  %Node $j$ then evaluates the SVD and sends the eigenvector to the base station. 
The FFT from $i$ produces a unique eigenvector of size $r$ at node $j$ as a result of this SVD computation, which costs  
$r W_{j0} E_b$ to send to the base station.

%We now explain the implication of each constraint. 
The first constraint (Eqn (\ref{eqn:c1})) sets the value of $x_{ii}$ to $1$ if $N_i \neq \phi$, and $0$ otherwise 
(note that if $N_i \neq \phi$, then $1 \leq \sum_{j \in V, j \neq i} x_{ji} \leq |V|$). 
Eqn (\ref{eqn:c2}) ensures that the FFT of every sensor node is sent to at least one node.
Eqn (\ref{eqn:c3}) ensures that $p_{ijk}=1$ if the FFT from node $k$ is sent to
both nodes $i$ and $j$. 

The next five constraints ensure the combinability of the solution by limiting the value of $c_{ijn}$. 
%Recall that the purpose of introducing $c_{ijn}$ is to ensure that all pairs $i,j \in S$ are combinable. 
%The fourth constraint (
Eqn (\ref{eqn:c4}) ensures that $c_{ij0}=1$ if there is at least one node common to $N_i$ and $N_j$. 
Eqn (\ref{eqn:c5}) states that if both $i,j \in S$, the computations at $i$ and $j$ should be combinable.  
Eqns (\ref{eqn:c6}) and (\ref{eqn:c7}) populate the value of $c_{ijn}$. Note that $t_{ijkn}$ is a temporary variable introduced to
express the quadratic condition in Equation (\ref{eqn:cijn}) as a linear function. 
%({\bf Do we want to add a line here to explain why 
Note that the presence of Eqn (\ref{eqn:c5}) forces Eqns (\ref{eqn:c3}), (\ref{eqn:c4}), (\ref{eqn:c6}), and (\ref{eqn:c7}) to assign the maximum possible value to the LHS; 
similarly, the presence of the latter forces (\ref{eqn:c5}) to assign the minimum possible value to the LHS. 

Eqn (\ref{eqn:c8}) sets the value of $c_{iin}$ to zero for every $i \in V, 0 \leq n < | V |$.
This prohibits a corner case where $c_{ijn}$ is set to $1$ by setting $c_{ii(n-1)}$ to $1$ without ensuring
that the computation at $i$ and $j$ are combinable.  
Finally, Eqn (\ref{eqn:c9}) imposes the computational delay constraint at each sensor node. 

%%%%%%%%%%%%%%%%%%%%%%%
%\subsection{An Approximate Algorithm}
\subsection{Degree-Constrained DCT: Problem P3}

In this and the next two subsections we will develop a $O \left( log \left( |V| \right) \right)$ approximation to the optimal solution of P1. 
To simplify the presentation, we will again assume that all edge weights are equal. Note that all the algorithms proposed in this section 
can be easily modified without changing their approximation factors to incorporate different weights. 

%\subsubsection{Degree-Constrained Data Collection Tree}

The basic idea is to first use a DCT to find a feasible solution to P1. 
A feasible solution requires that each cluster is size-limited due to the computational delay constraint: a node $v$ cannot have more than $n_v-1$ immediate children.  This leads to the following definition. 
%(Note that a node in $S$ will include the FFT of its own data stream in its computation, hence, having more than $d-1$ immediate children will violate the
%computational delay constraint.) 
%We therefore start by limiting the node degree of a DCT in seeking a feasible solution. 
\begin{definition}
A {\em degree-constrained data collection tree}, or DDCT, is a tree $T$ which minimizes $\sum_{v \in V} d_T(v)$ under the constraint that a 
node $v \in V$ has no more than $n_v-1$ immediate children, where $n_v, \forall v \in V$ are given constants.
\end{definition}

%We now define a third problem whose solution is feasible for P1, and thus serves as an approximate solution for P1: 

%\begin{definition}
{\bf Problem P3}: Find a DDCT for $G(V,E)$, which in turn determines the set $S$, clusters $N_s, \forall s\in S$, and the routing structure. % such that no sensor node has more than $d-1$ immediate children.
%\end{definition}

That a solution to P3 is feasible for P1 is obvious, but it may not be optimal for P1, even if it has the fewest non-leaf nodes among all DDCTs because a node may no longer be on its shortest path. 
%{\bf because the hop count of a node $v$ in a DDCT can be higher than in a MDCT.}
%A data collection tree with the additional constraint that no sensor node in the tree has more than $d-1$ immediate children will satisfy the computational delay constraint, but may no longer be optimal even if it has the fewest non-leaf nodes. 

It's worth noting that P3 is also NP-hard; it is APX-hard even when weights on edges satisfy the triangle inequality~\cite{tree:apx}.
Results on P3 are known only for complete graphs whose weights satisfy the triangle inequality~\cite{tree:apx,helmick:multicast}. 
To the best of our knowledge our work here is the first to propose algorithms with proven approximation ratios for 
P3 in graphs induced by a communication network.

We proceed as follows.  We first present an ILP (ILP\_P3) to solve P3 exactly. 
This ILP has much fewer variables and constraints than ILP\_P1, 
and hence takes less time to solve.  We then relax ILP\_P3 to an LP and solve it via appropriate rounding of fractional values. 
%for P3 based on relaxing the ILP and then appropriately rounding the fractional values. 
This rounding algorithm, referred to as algorithm {\bf LPR}, is thus an approximation algorithm for P3, and therefore also an approximation algorithm for P1. 
%also an approximation algorithm for P1.
We derive the approximation factor for LPR with respect to problem P1 in Theorem~\ref{thm:approx_aggr}.
Finally, based on the intuition derived while analyzing LPR, we present a simpler, distributed
approximation algorithm with the same asymptotic approximation factor.

%%%%%%%%%%%%%%%%%%%%%%%%%%%%
\subsection{An ILP for Problem P3}
\label{sec:dtree-ilp}

%We first present an ILP to solve P3. 
We define the following variables used in the ILP in finding a DDCT.  For a given $G(V,E)$, define a graph $\bar{G}(V,\bar{E})$  
with directed edges, by replacing each undirected edge in $E$ with two directed edges, one in each direction. 
%to construct $\bar{E}$. %For each edge $e \in \bar{E}$, % between nodes $i \in V$ and $j \in V$, 
%its {\em complementary} edge $\hat{e} \in \bar{E}$ is defined to be the edge %between nodes $j$ and $i$. 
%in the opposite direction between the same nodes. 
Let $O_v, v \in V$ denote the set of outgoing edges from node $v$ in $\bar{E}$. 
Similarly, let $I_v, v \in V$ denote the set of incoming edges into node $v$ in $\bar{E}$.

%We next define the variables used in the ILP. 
The set of variables $x_e, e \in \bar{E}$ define whether an edge is on the DCT as follows. 
$x_e := 1$ if edge $e$ is on the DCT, and $0$ otherwise. 
The variable $f_e, e \in \bar{E}$ will be referred to as the {\em flow value} over the edge $e$; it denotes the 
number of nodes using edge $e$ to reach the base station on the DCT.  $f_e=0$ if edge $e$ is not on the DCT. 
%If an edge is not a part of the data collection tree, the flow through it should be constrained to be equal to $0$.

The following ILP solves P3 exactly, where the minimization is over the choice of $x_e, \forall e \in \bar{E}$. 
\begin{eqnarray}
\label{eqn:d_obj} & \mbox{\bf (ILP\_P3)} ~~~ \mbox{min } \sum_{e \in \bar{E}} f_e & \\
\label{eqn:d_c1} & \sum_{e \in {I_0}} f_e - \sum_{e \in O_0} f_e = | V | - 1 & \\
\label{eqn:d_c2} & \sum_{e \in {I_v}} f_e - \sum_{e \in O_v} f_e = -1, \forall v \in V \backslash \{0\} & \\
\label{eqn:d_c3} & f_e \leq \left( | V | - 1 \right) x_e , \forall e \in \bar{E} & \\
\label{eqn:d_c4} & \sum_{e \in \bar{E}} x_e = | V | - 1 & \\
\label{eqn:d_c5} & \sum_{e \in O_v} x_e = 1, \forall v \in V \backslash \{0\} & \\
\label{eqn:d_c6} & \sum_{e \in I_v} x_e \leq n_v-1, \forall v \in V  & \\
\label{eqn:d_c7} & x_e \in \{0, 1\}, \forall e \in \bar{E} & \\
\label{eqn:d_c8} & f_e \in \{0, 1, \ldots, | V | - 1\}, \forall e \in \bar{E}  & 
\end{eqnarray}
%We now explain in detail how the objective is set up as well as the implication of each constraint.
The objective function minimizes the total flow, which essentially minimizes $\sum_{v \in V} d_T(v)$. 
%forces each node to send its data to the base station through the shortest path. This ensures the result is a DCT.  

The first two constraints ensure that each node sends a unit flow towards the base station.
The third constraint forces $f_e$ to be $0$ if $x_e$ is $0$, otherwise, it is redundant. 
%
%The fourth constraint 
Eqn (\ref{eqn:d_c4}) ensures that the output has exactly $|V|-1$ edges. 
%The fifth and the sixth constraint (
Eqns (\ref{eqn:d_c5}) and (\ref{eqn:d_c6})) ensure that
there is no more than one outgoing edge per vertex (other than the base station) and no more than $n_v-1$
incoming edges into vertex $v$.  Eqns (\ref{eqn:d_c4}) and (\ref{eqn:d_c5}) together ensure that the output is a tree and Eqn (\ref{eqn:d_c6}) 
ensures that a node $v$ has no more than $n_v-1$ immediate children. 

%%%%%%%%%%%%%%%%%
\subsection{Algorithm LPR: an LP Rounding Approximation}

We next present a polynomial-time approximation algorithm which relaxes ILP\_P3 %presented in Section~\ref{sec:dtree-ilp} and then 
to a linear program (LP), by allowing $0\leq x_e\leq 1$ and $f_e\geq 0$ to be fractional and appropriately rounding the fractional values.  
This algorithm is referred to as LPR and shown in Figure \ref{algo:with_aggr}. 
%The ILP is relaxed by allowing $x_e$ and $f_e$ to be fractional, %modifying the constraint in Equation~\ref{eqn:d_c3}
%to $f_e = \left( | V | - 1 \right) \left( x_e + x_{\hat{e}} \right), \forall e \in \bar{E}$, 
%and adding the constraints $0 \leq x_e \leq 1$ and $f_e \geq 0$ $, \forall e \in \bar{E}$.
%The resulting LP is labelled the \emph{Fractional Degree-Constrained Data Collection Tree LP}.
%The fractional values obtained by solving the linear program are rounded through the algorithm presented in Figure~\ref{algo:with_aggr}.
%Start from the base station. If there are more than $d-1$ incoming edges with $x_e>0$, set the largest 
%$d-1$ of these to $1$ (ties are broken arbitrarily), otherwise set the $x_e$ value of all incoming edges to $1$. 
%Solve the \emph{Fractional Degree-Constrained Data Collection Tree LP} while retaining the values of $x_e$'s which have
%already been set to $1$. Now pick the children of the base station and repeat. The procedure ends when
% when all nodes are part of the data collection tree

\begin{figure}[ht]
  \begin{center}
  \begin{ttfamily}
    \begin{footnotesize}
      \flushleft
       $NV = \{0\}$, $NE = \phi$, $h=0$, assign $h_v = -1,$ $\forall v \in V\backslash \{0\}$ and $h_0=0$\\
       while $(NV != V)$ do \\
       \hspace*{0.2in} $h=h+1$ \\ 
       \hspace*{0.2in} Solve the ILP for P3 with fractional variables \\ 
       \hspace*{0.2in} and the additional constraint that $x_e = 1, \forall e \in NE$ \\
       \hspace*{0.2in} For $\forall v \in NV$ and $h_v = h-1$ \\
       \hspace*{0.4in} If the value of $x_e$ for more than $(n_v-1)$ \\
       \hspace*{0.4in} incoming edges at $v$ is greater than $0$ \\ 
       \hspace*{0.6in} Set the largest $(n_v-1)$ $x_e$ values \\ 
       \hspace*{0.6in} amongst the incoming edges at $v$ to $1$ \\
       \hspace*{0.6in} (ties are broken arbitrarily) \\
       \hspace*{0.4in} Otherwise \\
       \hspace*{0.6in} Set the $x_e$ value of all incoming \\
       \hspace*{0.6in} edges at $v$ to $1$ \\
       \hspace*{0.4in} Add the edges for which $x_e$ was set to $1$ \\
       \hspace*{0.4in} in the previous step to $NE$ \\
       \hspace*{0.4in} For all edges added to $NE$ in the \\
       \hspace*{0.4in} previous step, add the node $v$ from \\
       \hspace*{0.4in} which the edge emanates to $NV$ and \\
       \hspace*{0.4in} assign $h_v = h$ \\
\end{footnotesize}
\end{ttfamily}
\end{center}
\vspace{-0.15in}
  \caption{Algorithm LPR: The LP rounding approximation algorithm for P3.}
\vspace{-0.25in}
  \label{algo:with_aggr}
\end{figure}

%%%%%%%%%%%%%%%%%%
\subsection{The Approximation Factor of LPR}
\label{sec:approx}

Even though LPR makes no assumptions on the network, our derivation
of the approximation factor assumes the following: (1) $n_{min} \geq 3$, where $n_{min} = \mbox{min}_{v \in V} n_v$, 
(2) the unconstrained MDCT constructed over $G(V, E)$ has a height of 
%data collection tree derived from the algorithm presented in Theorem~\ref{thm:without_delay} 
$O\left( \log (|V|)\right)$, 
and (3) nodes can transmit to each other if the distance between them is less than a transmission range $R_{tx}$. % denote the transmission range. 

%Before presenting the analysis, we first discuss the implications of the last two assumptions. 
%(1) $d=2$ implies that only one other sensor's data stream can be combined at a node.  Thus building
%an optimal DDCT is equivalent to the traveling salesman problem, and the corresponding approximation algorithms~\cite{approximation:book} 
%with a better approximation factor can be directly applied here. 
%We therefore exclude this special case from our analysis.
To understand (1), note that if $n_v = 2,  \forall v \in V$, then each node has at most 1 child and the constructed tree is thus linear (a chain).  The problem subsequently reduces to the traveling salesman problem.  Similarly, if most nodes disallow more than 1 child, the resulting tree will be close to linear, which is not a very interesting routing structure to study. Finally, and more importantly, most existing sensor platforms have sufficient computational power to quickly combine FFT's from at least 3 nodes, easily satisfying this assumption. %SVD's from three nodes. 
%Hence, $n_{min} \geq 3$ will be easily met by all current sensor networking platforms.
%
Assumption (2) is also easily satisfied as sensor networks used for SHM are in general not very sparse. 
%, thus assumption (2) %on the height  
%can be easily satisfied.
Assumption (3) is very commonly adopted for analytical tractability. % to define when two nodes can transmit to each other.
However, our analysis is not heavily dependent on this assumption (more is discussed in the footnote in the proof 
of Lemma~\ref{lemma:hgt_log}) and the same approximation factor also holds under more realistic physical layer assumptions.

We next derive the approximation factor of LPR with respect to P1.
%is $O \left( \log (|V|)\right)$.  %The derivation of the approximation factor 
The analysis is based on the observation that the approximation factor
is essentially the ratio between the height of the DDCT constructed using LPR 
and the height of the MDCT (discussed in more detail in the proof of Theorem~\ref{thm:approx_aggr}).  % (used in Theorem~\ref{thm:without_delay}). 
%(Note that we have assumed that the height of the data collection tree constructed in Theorem~\ref{thm:without_delay} is $O\left( log (|V|)\right)$.) 

%We first prove a lemma which will be later used in the derivation of the approximation factor. We define the following
%variables for notational convenience. 

%Let there be $m$ nodes, amongst which the maximum number of nodes which cannot transmit to each other
%be $p$. We run algorithm A1 on this set of $m$ nodes with a randomly selected base station.
Denote the height of the MDCT %constructed using LPR the algorithm of Theorem~\ref{thm:without_delay} 
by $h_{orig}$ %($O\left( log (|V|)\right)$ by assumption), 
and the height of the DDCT generated by algorithm LPR $h_{ddct}$. Define a {\em non-full node} to be a node $v$ at height
$h < h_{ddct}$ which has less than $n_v-1$ children. A height $1 \leq h < h_{ddct}$ is defined to be a {\em non-full height} 
if there exists at least one non-full node at height $h$.  We then have the following lemma. 
\begin{lemma}
\label{lemma:non-full}
Consider running algorithm LPR on a set of $m$ nodes with a randomly selected base station and a topology such that
the maximum set of nodes that cannot transmit to each other has a size $p$.  %Supposed the resulting DCT has a height of $h_T$. 
Then the resulting DDCT cannot have more than $p$ non-full heights. 
%The data collection tree cannot have more than $p$ non-full heights. 
\end{lemma}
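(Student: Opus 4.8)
The plan is to analyze how algorithm LPR fills up nodes level by level and to show that each distinct non-full height must "use up" at least one node from an independent-like set of mutually non-transmitting nodes, so that the number of non-full heights is bounded by $p$. First I would set up the bookkeeping: at the end of iteration $h$ of the while loop, $NV$ contains exactly the nodes placed at heights $0,1,\ldots,h$, and once a node is placed its parent edge is fixed (added to $NE$). A node $v$ at height $h<h_{ddct}$ is non-full if it was assigned strictly fewer than $n_v-1$ children; crucially, by the rounding rule in LPR, a node is assigned fewer than $n_v-1$ children only when the LP solution itself routed fewer than $n_v-1$ edges with positive $x_e$ value into $v$ — i.e., the degree constraint was not the binding reason. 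So a non-full node is one whose available capacity was not exhausted by the optimal fractional flow.

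The key structural step is to argue that if height $h$ is non-full, witnessed by a non-full node $v$ at that height, then the remaining nodes (those placed at heights $>h$) cannot all be within transmission range of $v$ and of each other in a way that would have allowed them to attach to height $\le h$. Concretely: if $h$ is a non-full height, I would identify a node that is "left behind" at height $h+1$ or later specifically because it could not reach the non-full node $v$ at height $h$ (nor any other non-full node at height $\le h$) — otherwise the LP, minimizing total flow $\sum_e f_e$, would have preferred to attach it higher up, contradicting the non-fullness of $v$. This gives, for each non-full height $h_i$, a witness node $u_i$ that is not adjacent (in $G$) to the witness node $v_i$ of that height, and by iterating down the heights one shows the collection of these witnesses forms a set of pairwise non-transmitting nodes. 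Since the largest such set has size $p$, there can be at most $p$ non-full heights.

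The main obstacle — and the step I'd spend the most care on — is making the "left-behind" argument airtight: translating the optimality of the fractional LP solution (minimum total flow, which is equivalent to minimum $\sum_v d_T(v)$) into a clean combinatorial statement that a non-full node at height $h$ forces a mutually-non-adjacent witness among the deeper nodes. One has to rule out the possibility that a node sits at height $h+1$ purely due to routing interactions elsewhere rather than genuine non-reachability of every non-full node at height $\le h$; this requires a careful exchange/shifting argument on the flow (rerouting a unit of flow from a deeper node through the non-full node and showing it strictly decreases the objective unless the edge is absent). A secondary technical point is handling the incremental/greedy nature of LPR — the LP is re-solved each iteration with the already-fixed edges as constraints — so the argument must be applied to the LP at the iteration where height $h$ gets finalized, and one must confirm that fixing higher edges does not spuriously create non-full heights lower down. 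Once the witness construction is established, counting against the maximum non-transmitting set of size $p$ closes the proof.
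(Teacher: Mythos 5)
Your construction has a genuine gap at exactly the step you flag as delicate, and it is not merely technical. For each non-full height $h_i$ you produce a witness $u_i$ placed deeper that is non-adjacent to the non-full node $v_i$ of that height; but to bound the number of non-full heights by $p$ you need $p+1$ nodes that are \emph{pairwise} unable to transmit to each other, and nothing in your argument shows that $u_i$ and $u_j$ (or $u_i$ and $v_j$ for $j\neq i$) are non-adjacent --- each witness is only tied to its own height's non-full node. So even if the ``left-behind'' exchange argument on the fractional flow were made airtight, the final count against the maximum mutually non-transmitting set of size $p$ would not close. Moreover, the LP-optimality/flow-rerouting machinery you lean on is not the mechanism that makes the lemma true.

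The paper's proof uses a much simpler property of the rounding step that your proposal passes over: when LPR processes a node $v$ at height $h-1$ and the number of incoming edges with positive $x_e$ does not exceed $n_v-1$ (the ``Otherwise'' branch), it attaches \emph{all} of $v$'s neighbors not yet in the tree as children of $v$. Consequently a node that ends up non-full has no neighbor placed at a strictly larger height: if $v_i$ at height $h_i$ and $v_j$ at height $h_j>h_i$ could transmit to each other, then when $v_i$ was processed $v_j$ would either already be in the tree or be made a child of $v_i$. Hence picking one non-full node per non-full height immediately yields, under the assumption of $p+1$ non-full heights, a set of $p+1$ pairwise non-transmitting nodes, contradicting the hypothesis on $p$. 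No optimality of the fractional solution is invoked at all; the argument is purely about the greedy rounding, which is also why the paper can reuse the lemma verbatim for the distributed algorithm DAA.
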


\begin{proof}
We prove this by contradiction. Let there be $p+1$ non-full heights: $h_1 < \ldots < h_{p+1}$. Let $v_{i}$ be a
non-full node at height $h_{i}, 1 \leq i \leq p+1$. Then, $v_i, v_j$, $1 \leq i < j \leq h_{p+1}$ cannot
transmit to each other, for otherwise LPR would have labeled $v_j$ as the child of $v_i$. Thus none of the nodes
$v_1, \ldots, v_{p+1}$ can transmit to each other. However, by assumption we cannot have more than $p$ 
nodes which cannot transmit to each other, thus a contradiction. 
\end{proof}

%We next derive the height of the tree constructed algorithm A1.
%We define the following variables for convenience. 
%Denote the height of the MDCT %constructed using LPR the algorithm of Theorem~\ref{thm:without_delay} 
%by $h_{orig}$ ($= O\left( log (|V|)\right)$ by assumption). 
\begin{lemma}
\label{lemma:hgt_log}
Under the assumption that the height of the MDCT $h_{orig} = O\left( \log (|V|)\right)$, 
the height of the DDCT constructed by LPR is $h_{ddct} = \Theta \left( \log (|V|)\right)$.
\end{lemma}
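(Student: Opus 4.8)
The plan is to establish the two halves of the $\Theta$ separately. The lower bound $h_{ddct} = \Omega(\log|V|)$ is elementary and uses only that out-degrees in the DDCT are bounded. The upper bound $h_{ddct} = O(\log|V|)$ is the real content: here I would classify the heights of the DDCT into ``full'' and ``non-full'' ones (the terminology introduced just before Lemma~\ref{lemma:non-full}), control the non-full ones with Lemma~\ref{lemma:non-full} and the disk-model hypothesis, control the full ones with a doubling argument, and add up. As a by-product this also gives $h_{ddct}/h_{orig}=O(\log|V|)$, the ratio that feeds into Theorem~\ref{thm:approx_aggr}.

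For the lower bound, write $n_{\max}=\max_{v\in V}n_v$; since the delay budget is a fixed quantity, $n_{\max}$ is a constant. In the DDCT every node $v$ has at most $n_v-1\le n_{\max}-1$ immediate children, so a rooted tree of height $h$ contains at most $1+(n_{\max}-1)+\cdots+(n_{\max}-1)^h<(n_{\max}-1)^{h+1}$ nodes (using $n_{\max}-1\ge n_{\min}-1\ge 2$). Since the DDCT spans all of $V$, we need $(n_{\max}-1)^{h_{ddct}+1}>|V|$, i.e. $h_{ddct}>\log_{n_{\max}-1}|V|-1=\Omega(\log|V|)$.

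For the upper bound, call a height $1\le h<h_{ddct}$ \emph{full} if every node at that height has exactly its maximal $n_v-1$ children, and \emph{non-full} otherwise. First, Lemma~\ref{lemma:non-full} caps the number of non-full heights by $p$, the size of the largest set of pairwise out-of-range nodes; under the disk model every node lies within $h_{orig}$ hops, hence within Euclidean distance $h_{orig}R_{tx}$, of the base station, and a packing estimate then bounds $p$ in terms of $h_{orig}=O(\log|V|)$. Second, at any full height $h$ the next-level population $a_{h+1}=\sum_{v\ \text{at height}\ h}(n_v-1)\ge(n_{\min}-1)\,a_h\ge 2a_h$, so consecutive full heights at least double the level population; since the DDCT has only $|V|$ nodes and the level populations never vanish before height $h_{ddct}$, the number of full heights is $O(\log|V|)$ as well. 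Adding the two counts yields $h_{ddct}=O(\log|V|)$.

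The main obstacle lies entirely with the non-full heights, in two respects. First, I must show the level population does not collapse at a non-full height — e.g. that it drops by at most a bounded factor there, or, more strongly, that only a bounded number of nodes per height can be non-full (which would follow from arguing that LPR's LP nearly saturates the in-capacity $n_v-1$ of each frontier node except near the boundary) — since otherwise the ``doubling at full heights'' accounting is not enough to pin $h_{ddct}$ to $O(\log|V|)$. Second, I must bound $p$ by $O(\log|V|)$ rather than the $O(\log^2|V|)$ that a naive two-dimensional area/packing estimate gives; this requires using the geometry of the deployment (and the slack afforded by the more realistic channel model referenced in the footnote) together with the fact that a set of pairwise out-of-range nodes all lying within $h_{orig}$ hops of the base station cannot be much larger than $h_{orig}$. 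The remaining steps — the degree-bound counting, the doubling inequality at full heights, and the final summation — are routine.
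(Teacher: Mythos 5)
Your outline follows the same route as the paper's proof of this lemma: every node lies within Euclidean distance $h_{orig}R_{tx}$ of the base station, Lemma~\ref{lemma:non-full} caps the number of non-full heights by $p$, the full heights are counted separately, and the two counts are added. You in fact add an explicit $\Omega(\log|V|)$ lower bound from the degree constraint (with $n_{\max}$ treated as a constant), which the paper leaves implicit in its assertion that the number of full heights is $\Theta\left(\log(|V|)\right)$ ``by definition''; and on your second listed obstacle (that level populations might collapse at a non-full height, so the doubling count of full heights needs justification) the paper offers nothing beyond that same assertion, so there you are, if anything, more careful than the source.

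The genuine gap is the first obstacle you name and then leave open: the bound $p=O(\log|V|)$. As you observe, a naive area-packing estimate in the disk of radius $h_{orig}R_{tx}$ gives only $p=O(h_{orig}^2)=O(\log^2|V|)$, which would yield $h_{ddct}=O(\log^2|V|)$ and would not suffice for the stated $\Theta(\log|V|)$ (nor for the $O(\log|V|)$ factor in Theorem~\ref{thm:approx_aggr}). The paper closes exactly this step with an angular argument attributed to~\cite{theo:geometric}: a set of pairwise out-of-range nodes all within distance $h_{orig}R_{tx}$ of the base station has size at most $\frac{2\pi}{\cos^{-1}\left(1-\frac{1}{2h_{orig}^2}\right)}\approx 2\pi c\log(|V|)$, i.e., the disk is cut into angular sectors of width roughly $1/h_{orig}$ (via the small-angle approximation) rather than into cells of area $\Theta(R_{tx}^2)$. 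Your statement that such a set ``cannot be much larger than $h_{orig}$'' is precisely the claim to be proved, so as written your plan is circular at its key quantitative step; to complete it you must supply this sector-packing argument (and, ideally, scrutinize it, since showing that two nodes in the same narrow sector but at very different radial distances are within range is the delicate point the paper delegates to the cited reference).
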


\begin{proof}
%We will first show that the maximum number of nodes none of which can transmit to each other is no more than $2 \pi c log(|V|)$ where $c$ is a constant. 
%Recall that the transmission range of sensor nodes is denoted by $R_{tx}$. 
By the construction of the MDCT, the maximum distance of a node from the base
station is $h_{orig}R_{tx}$\footnote{Note that due to fading effects, the transmission range may not be a constant. %; it may be even time-varying.
However, there will always exist distances $R_0$ and $R_1$ such that if two nodes are within $R_0$ of each other,
they can transmit to each other with negligible loss, and if they are more than $R_1$ apart, they cannot exchange packets with each
other~\cite{aguayo:sigcomm,Govindan:Links}.  $R_0$ and $R_1$ may be much smaller and larger respectively than the actual
transmission range; %, but they will still be given constants. 
replacing $R_{tx}$ by these constants appropriately allows the same argument 
to go through for a more general physical layer model.}. Using geometric arguments similar to the ones used in~\cite{theo:geometric}, it's easy to show 
that the set of nodes none of which can transmit to each other has a size of no more than $\frac{2 \pi}{\cos^{-1}
\left(1 - \frac{1}{2 h_{orig}^2} \right)} \leq \frac{2 \pi}{\cos^{-1}
\left(1 - \frac{1}{2 c^2 \log^2(|V|)} \right)}  \approx 2 \pi c \log(|V|)$, for some constant $c$, where the equality 
follows from the small angle approximation $\cos (x) \approx 1 - \frac{x^2}{2}$. 

%We are now ready to prove the lemma. 
Thus by Lemma~\ref{lemma:non-full}, there are no more than $2 \pi c \log(|V|)$
non-full heights.  At the same time, the number of full heights is $\Theta \left( \log (|V|)\right)$ by definition.
Hence %the height of the DDCT constructed by LPR 
$h_{ddct} = \Theta \left(\log (|V|)\right)$.
\end{proof}

%We are now ready to derive the approximation factor. 
\begin{theorem}
\label{thm:approx_aggr}
The approximation factor of LPR is $O \left( \log (|V|) \right)$. 
\end{theorem}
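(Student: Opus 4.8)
The plan is to bound the cost of the feasible solution LPR produces from above, bound the optimum of P1 from below, and take the ratio, drawing the logarithmic factor from Lemma~\ref{lemma:hgt_log}. First I would pin down which P1 solution LPR induces. The DDCT $T_D$ returned by LPR is itself a routing tree, so it yields exactly the tree solution $A_{P2}(T_D)$: the cluster head set $S_D$ is the set of non-leaf nodes of $T_D$, and each cluster consists of a node's children together with the node itself. This solution is P1-feasible: the degree bound of a DDCT ($\le n_v-1$ children at $v$) gives $|N_v|\le n_v$, and the non-leaf nodes of a tree are pairwise connected in $T_D$, hence pairwise combinable, exactly as in the feasibility argument inside the proof of Lemma~\ref{lemma:tree_energy}. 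Applying that lemma to $T_D$ gives
\[
E_{LPR}=\Bigl((|V|-1)R+\sum_{v\in V}(d_{T_D}(v)-1)r+|S_D|r\Bigr)E_b .
\]

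Next I would lower bound the optimum of P1. Since P2 is P1 with the delay constraint removed, every P1-feasible solution is P2-feasible, so $\mathrm{OPT}(\mathrm{P1})\ge \mathrm{OPT}(\mathrm{P2})$; by Theorem~\ref{thm:without_delay} the latter equals $E_{A_{P2}}(T_M)$, and, reading Lemma~\ref{lemma:lb_no_comp} off for an arbitrary head set, it is at least $(|V|-1)R\,E_b$, since the leftover terms $\sum_{v}(d_0(v)-1)r+|S|r$ are nonnegative (the single $-1$ from the root at height $0$ is absorbed by $|S|\ge 1$).

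Finally I would combine the two estimates. Using $d_{T_D}(v)\le h_{ddct}$ for every $v$, together with $|S_D|\le |V|-1$, the energy formula above gives $E_{LPR}\le (|V|-1)\bigl(R+h_{ddct}\,r\bigr)E_b$, and therefore
\[
\frac{E_{LPR}}{\mathrm{OPT}(\mathrm{P1})}\;\le\;\frac{R+h_{ddct}\,r}{R}\;=\;1+\frac{h_{ddct}\,r}{R}\;<\;1+\frac{h_{ddct}}{2},
\]
the last inequality by the condition $R>2r$ (the same one used in Theorem~\ref{thm:without_delay}). By Lemma~\ref{lemma:hgt_log}, $h_{ddct}=\Theta(\log|V|)$, so the ratio is $O(\log|V|)$, proving the theorem. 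This is the rigorous content of the informal remark that the approximation factor is essentially the ratio of the DDCT height to the MDCT height: the $(|V|-1)R$ term common to numerator and denominator dilutes everything else, and what remains is controlled by $h_{ddct}$.

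I expect the only real subtlety to be the translation step — certifying that LPR's output is a genuine, feasible P1 solution and then tracking the small additive corrections (the $|S|$ terms, and the $-1$ contributed by the root at height $0$) through both energy expressions so the inequalities close cleanly. The substantive difficulty of the whole argument has already been absorbed into Lemma~\ref{lemma:hgt_log}, namely the geometric packing bound on the number of non-full heights via Lemma~\ref{lemma:non-full}; given that, the present theorem is essentially an accounting exercise.
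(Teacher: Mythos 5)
Your proof is correct, and it reuses the same three ingredients as the paper (the tree-solution cost formula of Lemma~\ref{lemma:tree_energy} applied to the LPR tree, the lower bound of Lemma~\ref{lemma:lb_no_comp}, and the height bound $h_{ddct}=\Theta(\log|V|)$ of Lemma~\ref{lemma:hgt_log}), but the final accounting is genuinely different from the paper's. The paper cancels the common $(|V|-1)R$ term and bounds the ratio of the remaining $r$-dependent terms, which forces it to estimate cluster counts ($|S|\geq |V|/n_{max}$ for the optimum, $|S|=c_1|V|/n_{min}$ for the DDCT) and to compare the two heights $h_{orig}$ and $h_{ddct}$ directly --- this is the source of its ``ratio of heights'' intuition, and that step is argued somewhat loosely. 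You instead observe that the common term $(|V|-1)R\,E_b$ already lower-bounds $\mathrm{OPT}(\mathrm{P1})$ (since $\sum_{v}(d_0(v)-1)\geq -1$ and $|S|\geq 1$), and that the LPR cost is at most $(|V|-1)(R+h_{ddct}\,r)E_b$, giving the explicit bound $1+h_{ddct}\,r/R$. This buys you a cleaner argument that avoids the paper's $|S|$ estimates and the constants $n_{min},n_{max}$ altogether (you still inherit assumptions (1)--(3) implicitly through Lemma~\ref{lemma:hgt_log}), and it yields a quantitatively sharper statement when $r\ll R$ (with the paper's values $r=32$, $R=8192$ it explains the observed near-optimality in the simulations); the price is that your bound leans on the model fact $r<R$ (you only need $r/R=O(1)$, so invoking $R>2r$ is more than enough), whereas the paper's height-ratio comparison does not use the relative sizes of $R$ and $r$ at that step. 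Your feasibility check of the induced P1 solution (degree bound gives $|N_v|\leq n_v$; adjacent non-leaf nodes share the child head, so combinability propagates along the tree) matches the paper's Lemma~\ref{lemma:tree_energy} argument, and your arithmetic closes: $\sum_{v}(d_{T_D}(v)-1)+|S_D|\leq (|V|-1)h_{ddct}$ indeed holds once the root's $-1$ is accounted for.
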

\begin{proof}
To derive the approximation factor, we compare the energy consumed in the DDCT constructed using LPR (given by Lemma~\ref{lemma:tree_energy}) to the lower bound on the optimal solution of P1 (given in Lemma~\ref{lemma:lb_no_comp}). First, we note that $|S| \geq \left( \frac{V}{n_{max}}\right)$ 
in the optimal solution and $|S| = c_1 \left( \frac{V}{n_{min}} \right)$ in the DDCT (as $h_{ddct} = \Theta \left(\log (|V|)\right)$) where $c_1$ is a positive constant,
$n_{max} = \mbox{max}_{v \in V} n_v$ and $n_{min} = \mbox{min}_{v \in V} n_v$.
Thus, the approximation factor is $ \leq \frac{\sum_{v \in V} \left( d_{ddct}(v) - 1 \right) + c_1 \left( \frac{V}{n_{min}}\right)}
{\sum_{v \in V} \left( d_0(v) - 1 \right) + \left( \frac{V}{n_{max}}\right)}$ $\leq \log (|V|)$, where $d_{ddct(v)}$ denotes the hop count of node $v$ in the DDCT. 
The final inequality holds because $h_{org} \leq c_2 \log (|V|)$ % ($O\left( \log (|V|)\right)$) and %the height of the DDCT constructed by LPR  is 
and $h_{ddct} = c_3 \log (|V|)$, % ($\Theta \left(\log (|V|)\right)$), where 
for some positive constants $c_2$ and $c_3$. 
Hence the approximation factor is $O \left( \log (|V|)\right)$.
\end{proof}

%%%%%%%%%%%%%%%%%%%%%%
\subsection{A Distributed Approximation Algorithm (DAA)}
\label{sec:sp_dist}

The approximation algorithm LPR is centralized as it requires solving
an LP globally. We now present a simpler, distributed algorithm with the same asymptotic approximation factor. 

The proof of Lemma~\ref{lemma:non-full} 
uses the following observation from LPR: at height $h$, if there exists a node $v$
with more than $n_v-1$ neighbors which are not yet a part of the tree, the algorithm will add $n_v-1$ children to it. 
Otherwise, all its neighbors not yet a part of the tree will be added as its children.

Using this intuition, we propose a modified Dijkstra's shortest path algorithm DAA in Figure~\ref{algo:dist_aggr}.
This algorithm satisfies the observation made in the previous paragraph, hence Lemma~\ref{lemma:non-full} holds,
and so do 
%which implies that both 
Lemma~\ref{lemma:hgt_log} and Theorem~\ref{thm:approx_aggr}.
Thus, the approximation factor for DAA is also $O \left( \log (|V|)\right)$.
%This algorithm can be easily distributed in a manner similar to any shortest path routing algorithm~\cite{ctp}. 
The tree is built top down from the root with each node $v$ choosing its $n_v-1$ children arbitrarily. 
Hence, like any shortest path algorithm \cite{ctp} it can be built by message exchanges only between neighboring nodes. 
We will compare this modified Dijkstra's algorithm with LPR through simulation in Section~\ref{sec:simulations}.  

\begin{figure}[ht]
  \begin{center}
  \begin{ttfamily}
    \begin{footnotesize}
      \flushleft
       $NV = \{0\}$, $h_v = \infty, $ $\forall v \in V \backslash \{0\}$, $h_0 = 0$, $C_v = 0,$ $\forall v \in V$.\\
       ($C_v$ denotes the number of children of node $v$.) \\
       while $(NV != V)$ do \\
       \hspace*{0.2in} For each edge $e \in E$ such that $e$ connects \\
       \hspace*{0.2in} nodes $v \in NV$ and $v' \in V \backslash NV$ and $C_v < n_v - 1$
       \hspace*{0.4in} $h_v' = \mbox{min} \left(h_v', h_v + 1 \right)$ \\ 
       \hspace*{0.2in} $v_{min} = \mbox{argmin}_{v} \{ h_v \mid \forall v \in V \backslash NV \}$ \\
       \hspace*{0.2in} Add $v_{min}$ to $NV$. \\
       \hspace*{0.2in} Let the parent of $v_{min}$ be $v_{parent}$. Update \\
       \hspace*{0.2in} $C_{v_{parent}} = C_{v_{parent}} + 1$\\
       \hspace*{0.2in} Set $h_v = \infty,$ $\forall v \in V \backslash NV$ \\
\end{footnotesize}
\end{ttfamily}
\end{center}
\vspace{-0.15in}
  \caption{Algorithm DAA: Modified Dijkstra's algorithm for P3.}
% \vspace{-0.25in}
  \label{algo:dist_aggr}
\end{figure}
%We will compare this modified Dijkstra's algorithm with LPR through simulation in Section~\ref{sec:simulations}.  
%and find that the modified Dijkstra's algorithm always outperforms LP rounding and is always within $3\%$ of the optimal.
%Note that with the approximation algorithm of Figure~\ref{algo:with_aggr}, the $d-1$ children chosen are those which have the largest amount of flow 
%passing through them, but with the modified Dijkstra's algorithm of Figure~\ref{algo:dist_aggr}, the choice of $d-1$ children is arbitrary.
%Hence, one can expect that even though the approximation factor for both the algorithms is the same in the
%asymptotic sense, the associated constants will be larger for the modified Dijkstra's algorithm. We confirm this
%intuition in Section~\ref{sec:simulations} where we evaluate the proposed approximation algorithms using simulations.

\comment{
\begin{figure*}[ht]
%\centerline{\subfigure[]{\includegraphics[width=4.5cm]{}
%\label{fig:s_rnd}}
%\hfil
%\subfigure[]{\includegraphics[height=1.2cm]{}
%\label{fig:chain}}
%\hfil
%\subfigure[]{\includegraphics[width=4.5cm]{}
%\label{fig:s_chain}}}
%\vspace{-0.15in}
\centerline{\subfigure[]{\includegraphics[width=4.5cm]{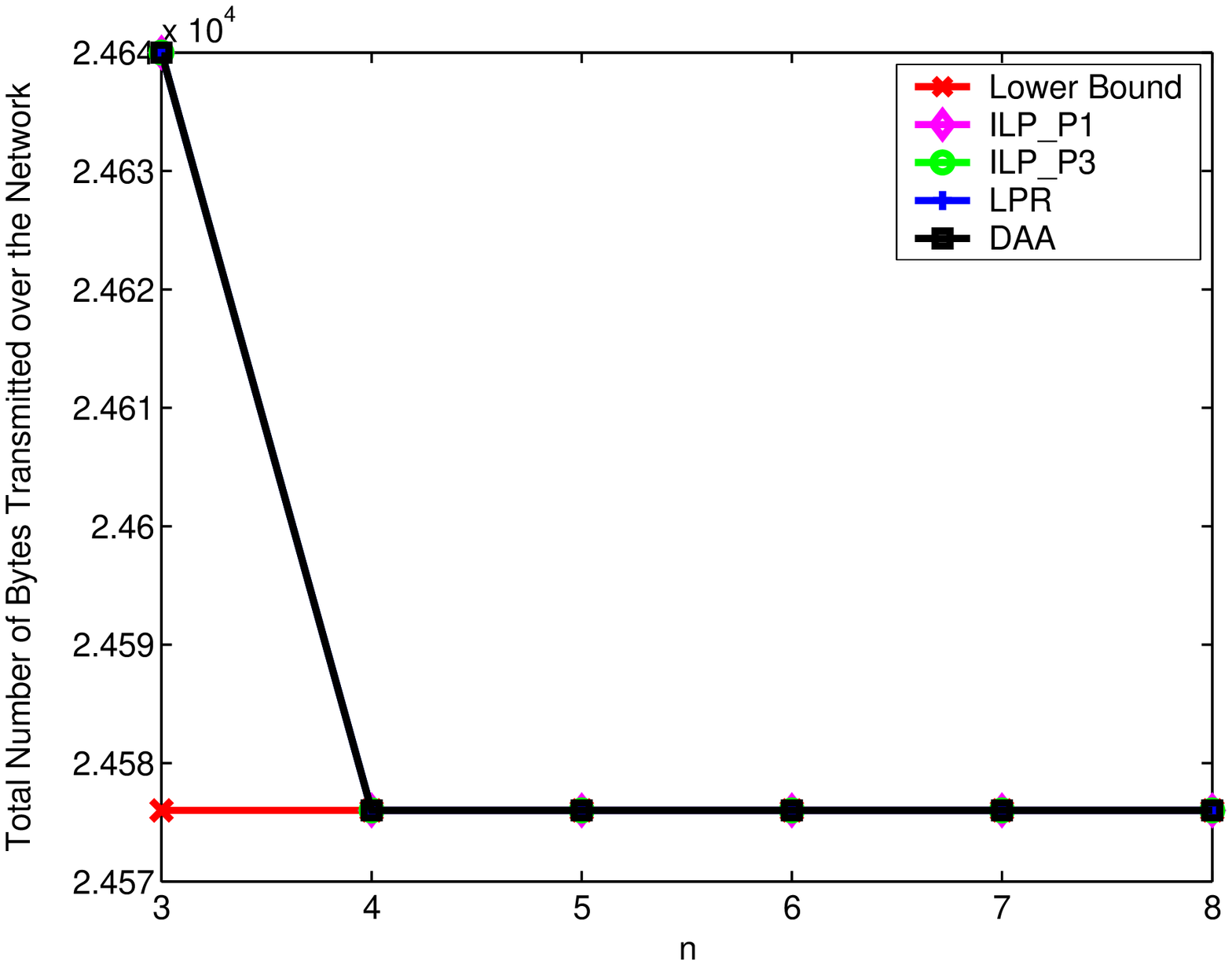}
\label{fig:svd1}}
\hfil
\subfigure[]{\includegraphics[width=4.5cm]{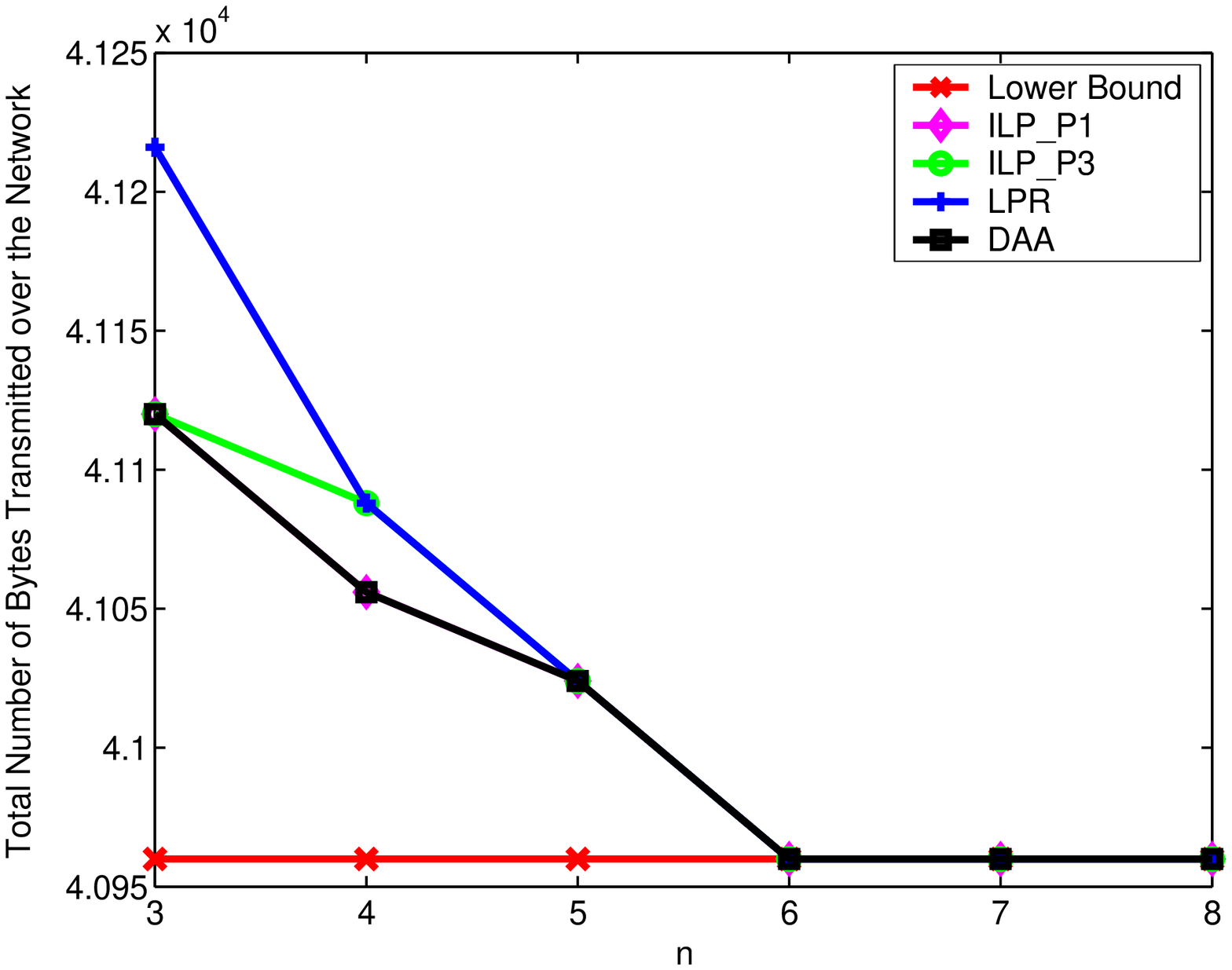}
\label{fig:svd2}}
\hfil
\subfigure[]{\includegraphics[width=4.5cm]{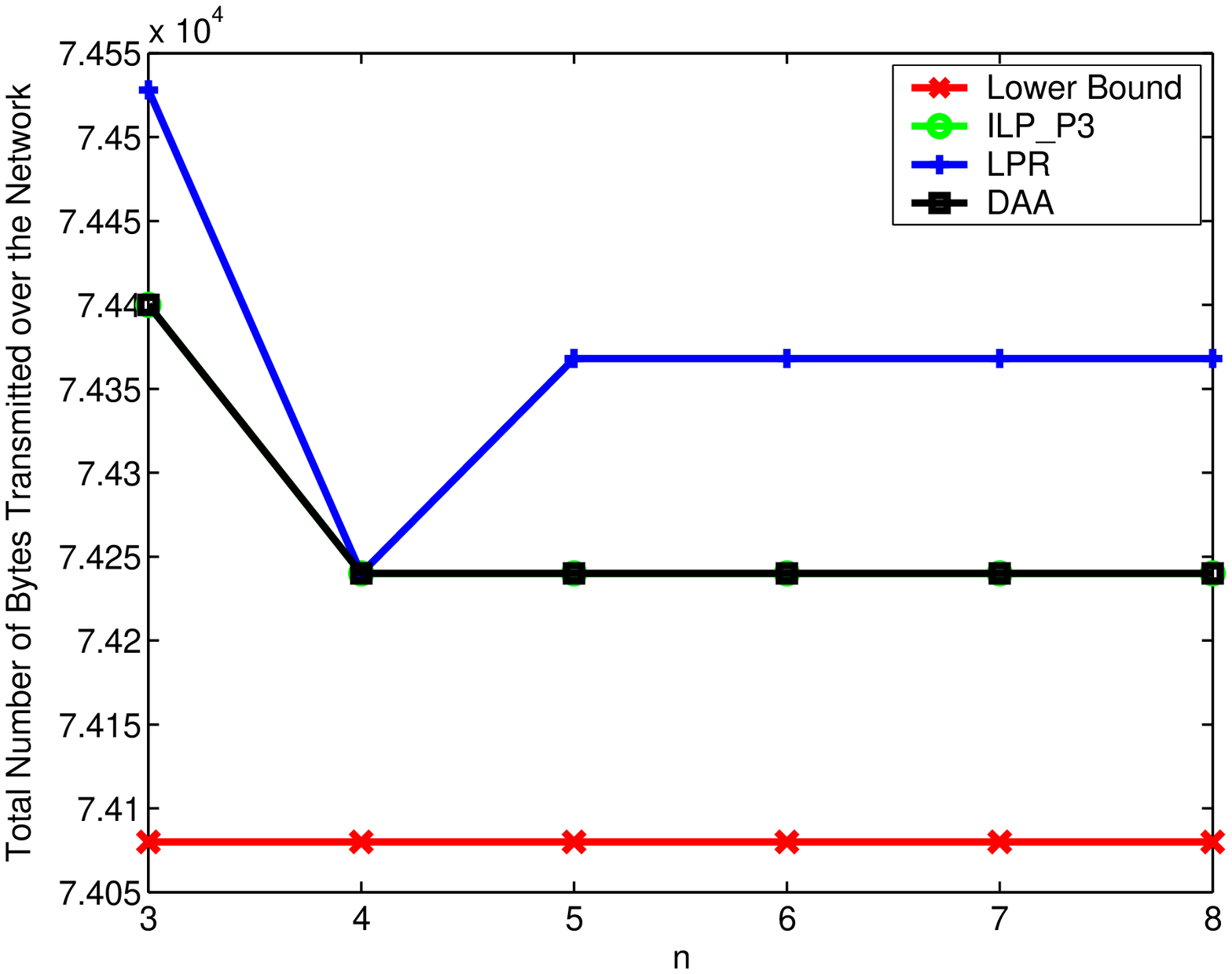}
\label{fig:svd3}}}
%\vspace{-0.15in}
\centerline{\subfigure[]{\includegraphics[width=4.5cm]{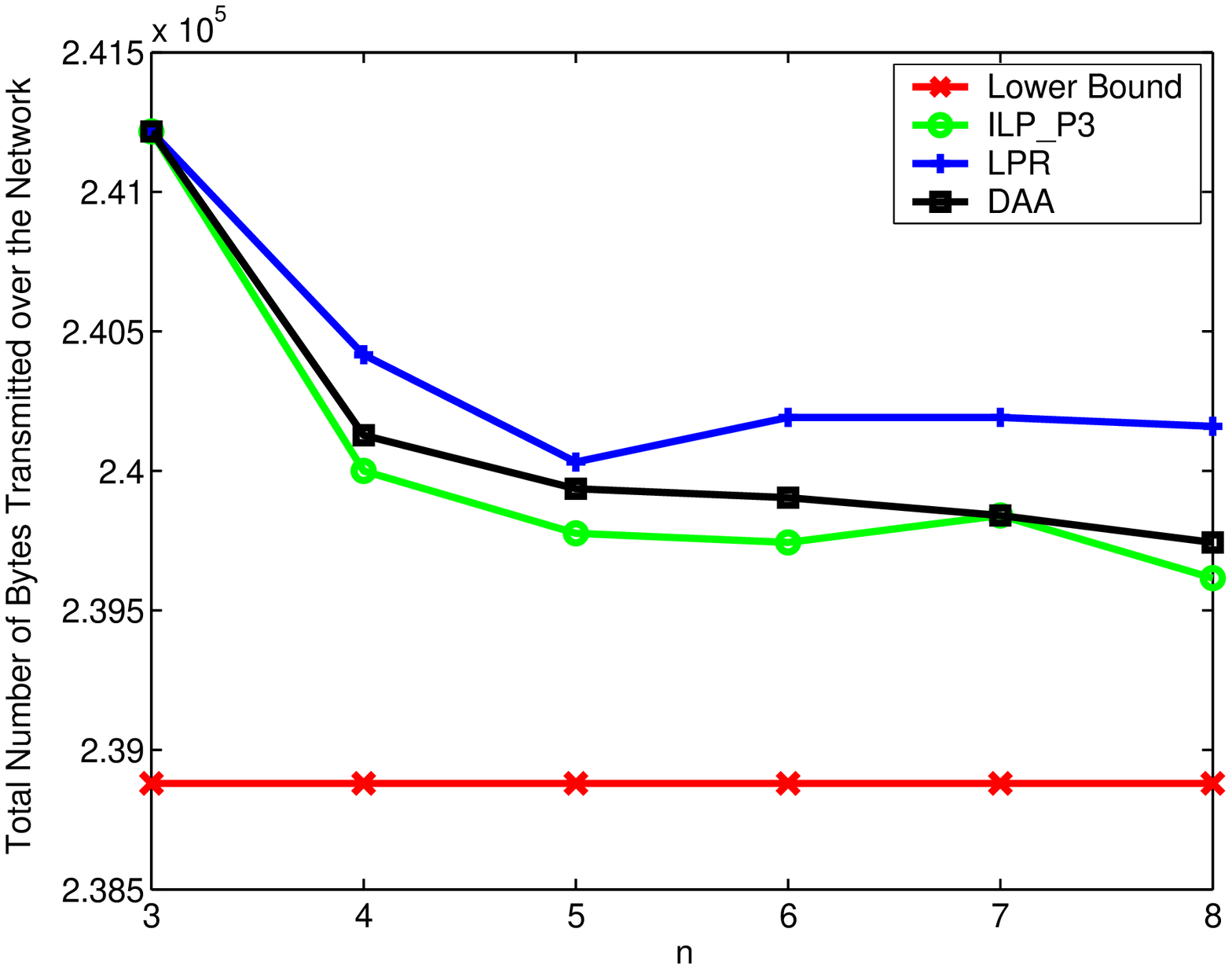}
\label{fig:svd4}}
\hfil
\subfigure[]{\includegraphics[width=4.5cm]{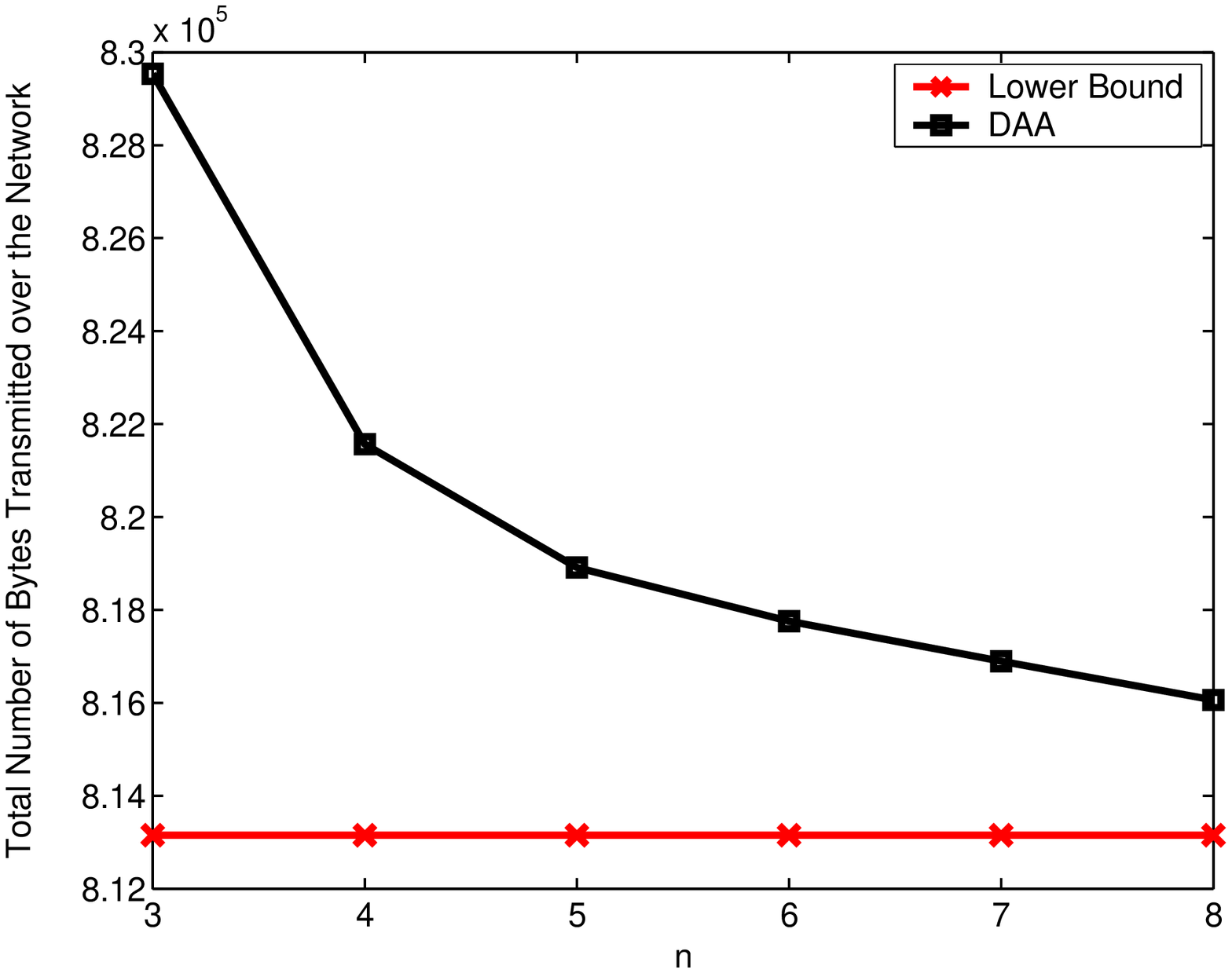}
\label{fig:svd5}}
\hfil
\subfigure[]{\includegraphics[width=4.5cm]{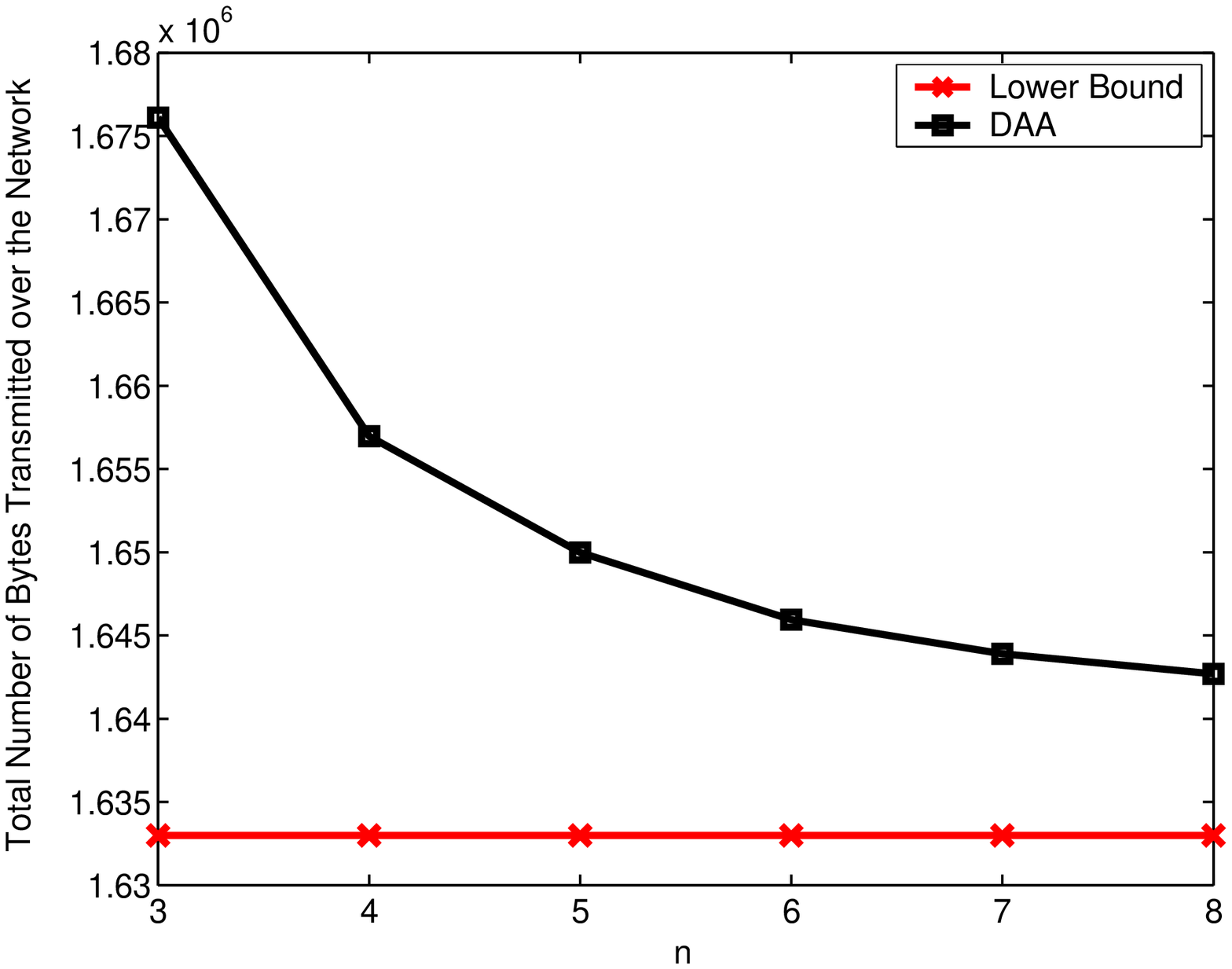}
\label{fig:svd6}}}
\centerline{\subfigure[]{\includegraphics[width=4.5cm]{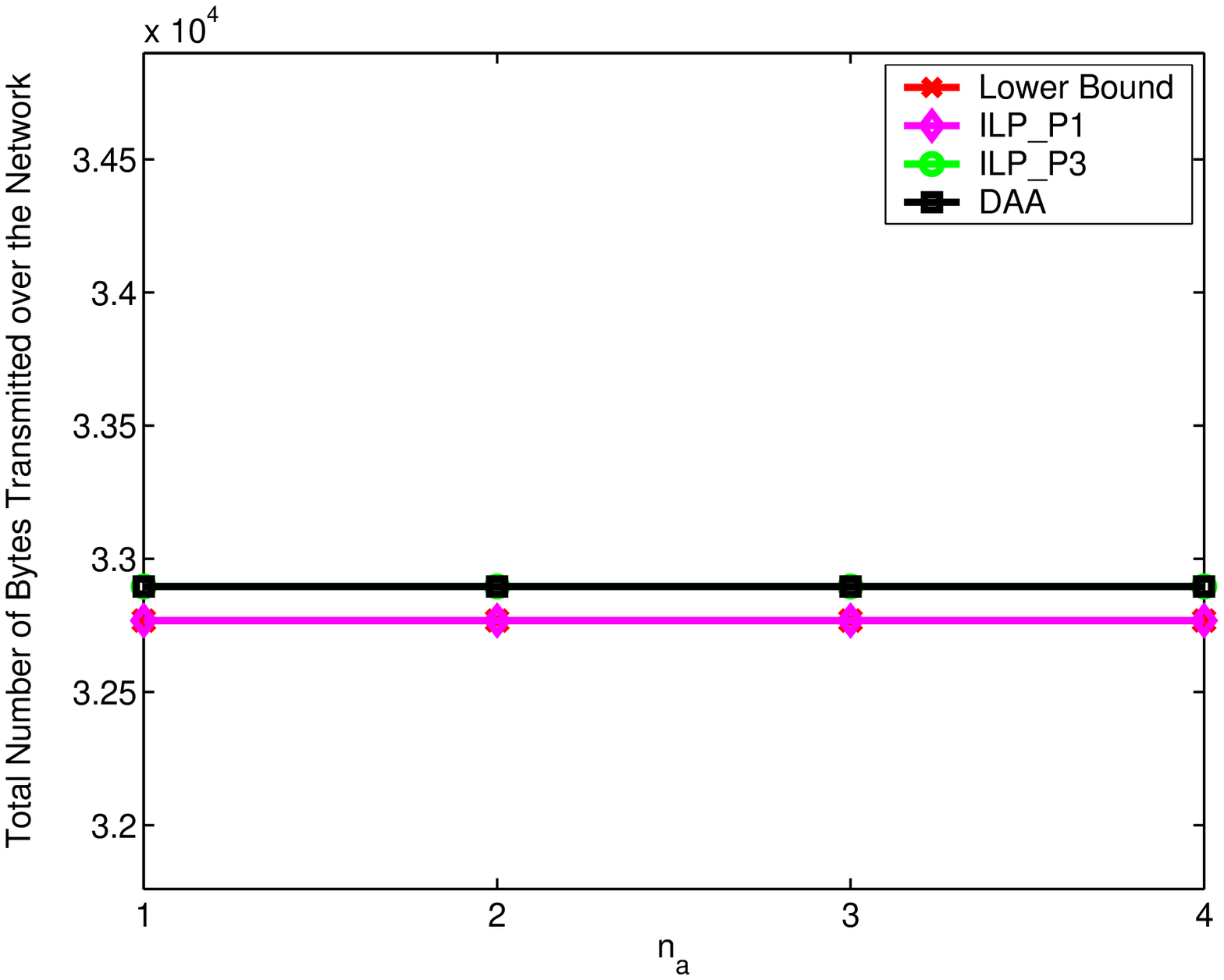}
\label{fig:accuracy1}}
\hfil
\subfigure[]{\includegraphics[width=4.5cm]{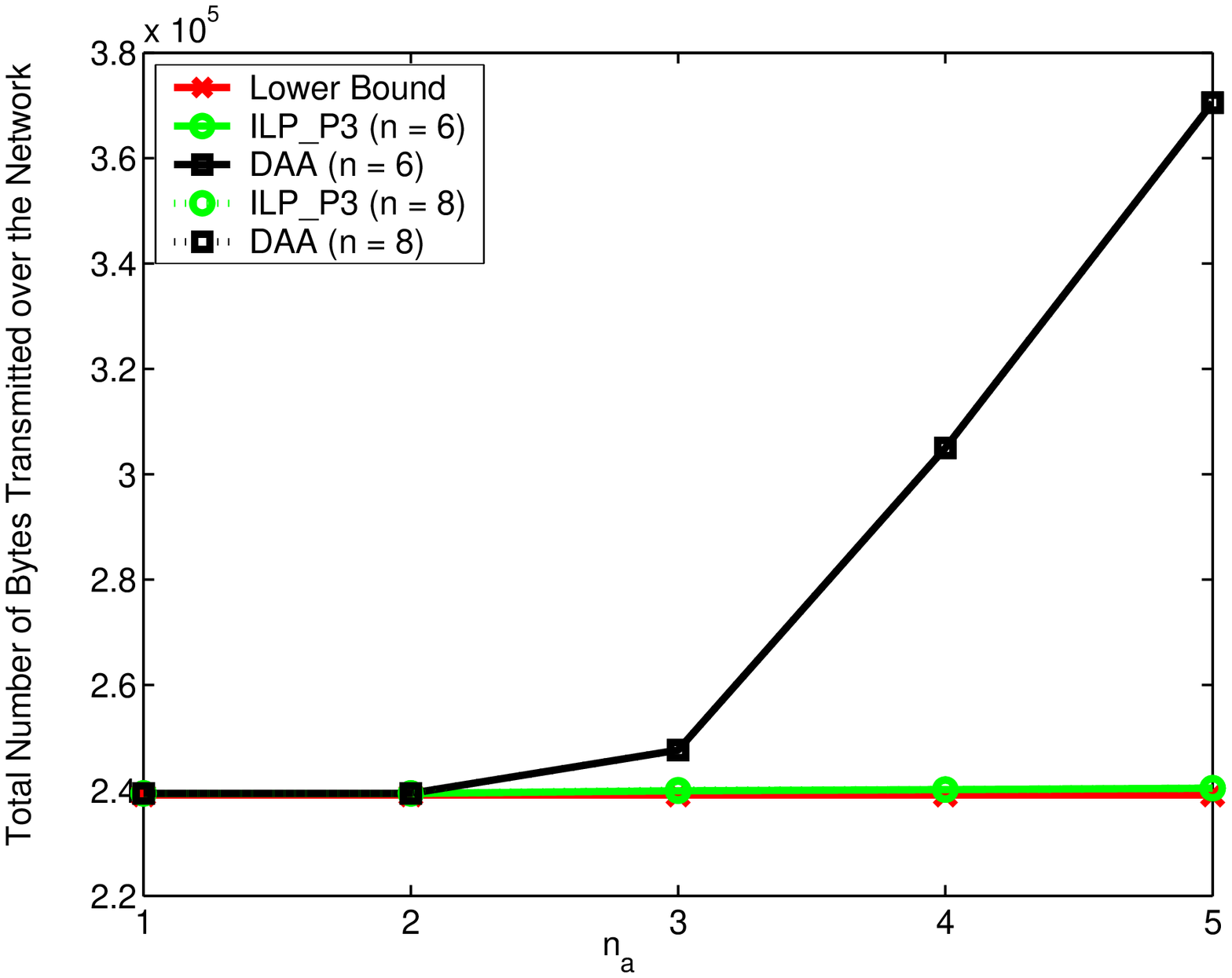}
\label{fig:accuracy2}}
\hfil
\subfigure[]{\includegraphics[width=4.5cm]{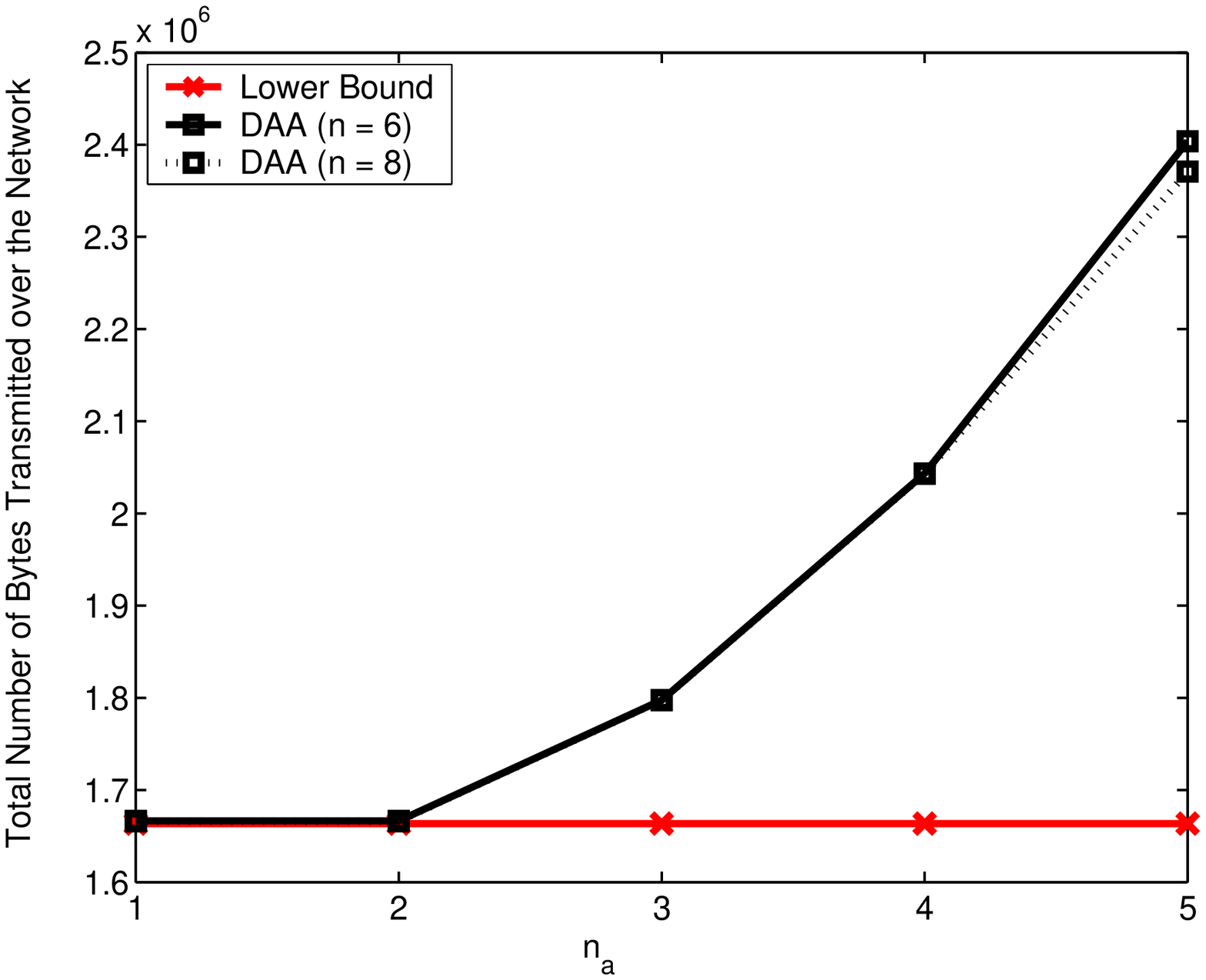}
\label{fig:accuracy3}}}
%\vspace{-0.15in}
\caption{%Single Stream Input - Speech Detector. (a) Results for a Randomly Generated Topology. Number of bytes transmitted without
%In-Network Computation = $242400$ for $|V|=10$, $371680$ for $|V|=20$ and $565600$ for $|V|=30$. (b) The Chain Topology. (c) Results
%for the Chain Topology. Number of bytes transmitted without In-Network Computation = $727200$ for $|V|=10$, $1696800$ for 
%$|V|=15$ and $3070400$ for $|V|=20$. Multiple Stream Input - 
Simulation Results. (a) $|V|=4$ ($24576$). (b) $|V|=6$ ($40960$). (c) $|V|=10$ ($163840$). (d) $|V|=30$ ($573440$). (e) $|V|=100$ ($1359872$). 
(f) $|V|=200$ ($2342912$). The number in brackets denotes the number of bytes transmitted in the network without in-network computation.
Simulation Results with an accuracy constraint. (g) $|V|=5, d=5$. (h) $|V|=30$. (i) $|V|=200$. 
}
%\vspace{-0.15in}
\end{figure*}
} 

\section{Discussion} %: Additional Constraints / Alternative Formulations} 
\label{sec:disc_aggr}

In this section we discuss a number of ways to relax the assumptions used earlier, % and to incorporate additional features in our model. 
as well as the applicability of distributed SVD computation in practice. 

%\subsection{Alternative Energy Models} 

The energy model presented in Section~\ref{sec:model} %to compute the total communication energy 
is rather simplistic; it does not capture energy expended in overhearing etc. %packets destined for other nodes etc. 
However, as long as the energy model is a linear function of the amount of bits  transmitted per node (most energy consumption models fit this
characterization), 
%number of packet transmissions / bit transmissions occurring per node (which yields an accurate representation for most energy consumption models), 
the proposed algorithms can be directly applied without any change in their optimality or approximation factors. 

%\subsection{Accuracy and Storage Constraints} 

%%%%%%%%%%%%
The model and algorithms presented here can also be easily extended to include additional constraints, including accuracy and storage.  
%Below we discuss %two such constraints concerning accuracy of the computation and storage requirement on the sensor nodes, respectively. 
%those concerning accuracy and storage. 
%
In our decentralized SVD computation, the eigenvectors are determined by linearly combining those computed locally at different sensor nodes. 
If the sensors are noiseless, then the eigenvectors computed using this decomposition will exactly 
match the actual eigenvectors.  However, the presence of noise in the sensed values can lead to errors in the computation~\cite{andy:market}.  
%and these errors can accumulate and propagate. % if the decentralized (or decomposed) method is used to compute the SVD. 
This is because in a centralized implementation, a least-squares effect minimizes the error due to noise across all eigenvectors,
whereas the decentralized implementation allows this error to accumulate through each combination of locally computed eigenvectors. 

The larger the number of FFT's being combined at each sensor node, %(i.e., the bigger the cluster size $|N_s|, s \in S$), 
the smaller this error.
Hence a desired accuracy will impose a constraint on the minimum cluster size $|N_s|, s \in S$.  
%being combined at each sensor node. 
% that is, a constraint on the minimum value of $|N_s|, \forall s \in S$. We refer to this constraint as the accuracy constraint. 
%
This is the opposite of the delay constraint, and incorporating it in our models is quite straightforward. 
Denote this constraint by $n_a$, i.e.,  $|N_s|\geq n_a, \forall s \in S$. 
%Let the minimum value of $|N_s|, \forall s \in S$ imposed by the accuracy constraint be equal to $d_a$. 
%(The value of $d_a$ depends on the noise floor in the sensors as well as the accuracy desired by the application.)
Then in ILP\_P1, the following constraint is added: $\sum_{i \in V} x_{ij} \geq n_a x_{jj}, \forall j \in V$.
Similarly, in ILP\_P3, we add (1) $\sum_{e \in I_v} x_e \geq (n_a - 1) l_v, \forall v \in V$, where variable $l_v \in \{0, 1\}$ is
%an integer variable which is 
set to $1$ if $v$ is a non-leaf node, and (2) $\sum_{e \in I_v} x_e/|V| \leq l_v \leq \sum_{e \in I_v} x_e, \forall v \in V$, to 
%An additional constraint ensuring 
ensure that $l_v$ is set $1$ only if $v$ is a non-leaf node.  
Finally, the two approximation algorithms, LPR and DAA,  can both be easily modified to maintain the number of children of each 
node in the data collection to be greater than $n_a - 1$.  
%This extra constraint has no impact on the approximation factor of these algorithms as the fundamental intuition summarized in Section~\ref{sec:sp_dist} does not change.
%
The effect of an added accuracy constraint will be examined in numerical studies presented in Section \ref{sec:simulations}. 

%%%%%%%%%%%%%
As the number of FFT's being computed at a node increases, not only the delay but also
the storage required increases~\cite{andy:market}.  A storage constraint acts in a way very similar to the delay constraint: 
it essentially bounds the maximum number of FFT's that can be combined at a sensor. %, that is, the maximum value of $|N_s|, \forall s \in S$ is bounded by the storage constraint. Since, the computational delay constraint also results in a similar constraint, the storage constraint can be incorporated in a manner similar to the computational delay constraint. 
Therefore to incorporate this constraint we simply need to upper bound the value of $|N_s|$ to be the lesser of the two, which 
%, $n_v$ imposed at node $v$ by the delay requirement, and some $n_{st}$ imposed by the storage requirement. 
%Let $d_s$ be the maximum value of $|N_s|$ being imposed by the storage constraint.
%Now, the value of $d$ (defined before Definition~\ref{orig_problem}) is defined to be $d := \mbox{min } \left\{ d_s, \mbox{max } \{ | N_s | \mid C(| N_s |) \leq C \} \right\}$.
%With this new definition for $d$, no other change is required in the proposed ILP's as well as the proposed approximation algorithms to incorporate the storage constraint. 
results in an identical problem. % except for the value change and therefore all our results hold. 

%%%%%%%%%%%%%
%\subsection{General Applicability}
While our discussion has centered solely on the computational task of SVD, our approach is more generally applicable.  Once a computational task is represented as a set of operators with associated input and output dependencies, one can use a very similar approach to seek the optimal communication structure, i.e., on which node to place which operator and along what path to send input to that node, etc.  The resulting math program will in general be  problem specific, but the solution philosophy is common; 
%TechReport
see Appendix~\ref{sec:annealing} for a similar approach to decomposing the global operation of simulated annealing over a network of sensor nodes.
%Paper
%see e.g. a similar approach to decomposing the global operation of simulated annealing over a network of sensor nodes \cite{jindal:technical}.  %In addition, the notion of a generalized clustering problem is an interesting and very relevant one to many computational tasks.  Using the concepts like DCT, MDCT and DDCT to find exact and approximate solutions are also more generally applicable. 

%%%%%%%%%%%%%
%\subsection{Practical Considerations}

\comment{
While in-network SVD computation is clearly more energy efficient, one might question how much is really gained if this operation is performed infrequently (note that it can certainly be configured to occur continuously), as one does not in general expect mode shapes of a structure to change rapidly over time; 
% even if it may be referred to as ``rapidly'' degrading; 
the physical degradation process is on a very different time scale than computation and communication.  For a wireless sensor network instrumented on a structure, say a bridge, this task could be reasonably scheduled several times a day, each lasting on the order of minutes (the actual duration of this sensing cycle depends on the size of the FFT and the computation capacity of the sensors).  We observe that the saving in each such operation is indeed significant (see results in Section \ref{sec:simulations}), and the accumulated effect are undoubtedly beneficial for a sensor network to have a lifetime on the order of months or years. 
} 

While the proposed SVD computation can run continuously as a stream process, in practice it suffices to schedule it several times a day, each lasting on the order of minutes (the actual duration of the sensing cycle depends on the size of the FFT and the computation capacity of the sensors), as one does not in general expect mode shapes of a structure to change rapidly over time.  Even though the task is performed infrequently, the saving in each operation is indeed significant (see results in Section \ref{sec:simulations}), and the accumulated effect are undoubtedly beneficial for a sensor network to have a lifetime on the order of months or years. 

One weakness common to most in-network processing methods is that they typically deliver {\em summaries} or {\em features} of data rather than raw data itself; thus we potentially lose the ability to store and post-analyze the data (e.g., for an entirely different purpose than originally intended).  In this sense, this type of operation is most advantageous when used in a real-time setting concerning instantaneous detection and diagnosis.  For instance, a human inspector can use this approach (i.e., activate this SVD operation) to quickly check the mode shapes of a structure before deciding whether and what more (manual) inspection is needed.  %This is also extremely helpful when the inspection is done automatically 
 
%exact scenarios where this is done: e.g., does not have to be done continuously, but can be done as inspections approach to quickly determine whether more needs to be done; real-time inspection and diagnosis; robustness issue, etc.  

\section{Simulation and Experimentation}
\label{sec:simulations}

\begin{figure*}[ht]
%\centerline{\subfigure[]{\includegraphics[width=4.5cm]{Figures/speech_rnd.eps}
%\label{fig:s_rnd}}
%\hfil
%\subfigure[]{\includegraphics[height=1.2cm]{Figures/chain.eps}
%\label{fig:chain}}
%\hfil
%\subfigure[]{\includegraphics[width=4.5cm]{Figures/speech_chain.eps}
%\label{fig:s_chain}}}
%\vspace{-0.15in}
\centerline{\subfigure[]{\includegraphics[width=4.5cm]{Figures/comp_svd_n4.eps}
\label{fig:svd1}}
\hfil
\subfigure[]{\includegraphics[width=4.5cm]{Figures/comp_svd_n6.eps}
\label{fig:svd2}}
\hfil
\subfigure[]{\includegraphics[width=4.5cm]{Figures/rnd_svd_n10.eps}
\label{fig:svd3}}}
%\vspace{-0.15in}
\centerline{\subfigure[]{\includegraphics[width=4.5cm]{Figures/rnd_svd_n30.eps}
\label{fig:svd4}}
\hfil
\subfigure[]{\includegraphics[width=4.5cm]{Figures/small_svd_n100.eps}
\label{fig:svd5}}
\hfil
\subfigure[]{\includegraphics[width=4.5cm]{Figures/small_svd_n200.eps}
\label{fig:svd6}}}
\centerline{\subfigure[]{\includegraphics[width=4.5cm]{Figures/accuracy_n6.eps}
\label{fig:accuracy1}}
\hfil
\subfigure[]{\includegraphics[width=4.5cm]{Figures/accuracy_n30.eps}
\label{fig:accuracy2}}
\hfil
\subfigure[]{\includegraphics[width=4.5cm]{Figures/accuracy_n200.eps}
\label{fig:accuracy3}}}
%\vspace{-0.15in}
\caption{%Single Stream Input - Speech Detector. (a) Results for a Randomly Generated Topology. Number of bytes transmitted without
%In-Network Computation = $242400$ for $|V|=10$, $371680$ for $|V|=20$ and $565600$ for $|V|=30$. (b) The Chain Topology. (c) Results
%for the Chain Topology. Number of bytes transmitted without In-Network Computation = $727200$ for $|V|=10$, $1696800$ for 
%$|V|=15$ and $3070400$ for $|V|=20$. Multiple Stream Input - 
Simulation Results. (a) $|V|=4$ ($24576$). (b) $|V|=6$ ($40960$). (c) $|V|=10$ ($163840$). (d) $|V|=30$ ($573440$). (e) $|V|=100$ ($1359872$). 
(f) $|V|=200$ ($2342912$). The number in brackets denotes the number of bytes transmitted in the network without in-network computation.
Simulation Results with an accuracy constraint. (g) $|V|=5, n=5$. (h) $|V|=30$. (i) $|V|=200$. 
}
%\vspace{-0.15in}
\end{figure*}

We use both simulation and experimentation on a real sensor platform to evaluate the performance of the proposed algorithms. 
% by comparing them to that of the optimal communication structure. 
For simulation we use CPLEX~\cite{cplex} to solve the ILPs, and 
%We use the speech detector application to study the performance of algorithms proposed for single stream
%inputs and use the computation of SVD to study algorithms proposed for multiple stream inputs.
%Unless explicitly stated, 
all simulations are done on topologies generated by randomly distributing nodes in an area of $50 \times 50 m^2$ and assuming the transmission range to be $30 m$. 
For the SVD computation, we use $R=8192$ bytes and $r=32$ bytes~\cite{andy:svd}. 
We also assume that the computational delay constraint is the same for all nodes and $n_v = n, \forall v \in V$.

We first examine the effect of delay constraint $n$ on energy consumption. 
% by the communication structures derived using the algorithms proposed 
%in Section~\ref{ref:multiple} for different values of $n$. 
Figures~\ref{fig:svd1} and~\ref{fig:svd2} compare the number of bytes transmitted under the lower bound (Lemma~\ref{lemma:lb_no_comp}), using the optimal communication structures derived by solving ILP\_P1 (Section~\ref{ilp:original}), 
and using the three approximation algorithms ILP\_P3 (Section~\ref{sec:dtree-ilp}), LPR (Figure~\ref{algo:with_aggr}), and DAA (Figure~\ref{algo:dist_aggr}), for different values of $n$, with $|V|=4$ and $|V|=6$ respectively. 

We observe that the approximation algorithms perform very close to the optimal. 
It takes more than one hour of computation to solve the ILP\_P1 for $|V| > 6$ on a 2.99 GHz machine with
4 GB of RAM. Hence for larger values of $|V|$ we only compare the three approximation algorithms against the lower bound, shown  
in Figures~\ref{fig:svd3} and~\ref{fig:svd4}. We note that %make the following two observations, 
(i) all approximation algorithms are within $3\%$ of the optimal,
and (ii) DAA outperforms LPR.
These results also demonstrate the advantage of using the ILP\_P3 over ILP\_P1; it runs much faster  
%Since the former has fewer variables and constraints, it runs much faster, and on the same machine, 
and converges within an hour up to $|V| = 40$. 

For even larger values of $|V|$, we compare the performance of DAA (as it consistently
outperforms LPR) against the lower bound in Figures~\ref{fig:svd5} and~\ref{fig:svd6}. 
We observe that it is always within $3\%$ of the optimal. These results clearly demonstrate the advantage of in-network
computation as the number of bytes transmitted over the network are reduced by more than half. 
Finally, Figures~\ref{fig:svd5} and~\ref{fig:svd6} also show the trade-off between communication energy and computational delay.  The more delay allowed per node (larger the value of $n$), the smaller the energy consumed in the network. 

In Figures~\ref{fig:accuracy1}-\ref{fig:accuracy3}, we compare the performance of different approximation schemes after incorporating an accuracy 
constraint in the formulation for different values of $|V|$, $n$ and $n_a$. In this scenario, we observe that ILP\_P3 
yields results within $5\%$ of the optimal while DAA yields values within
$51\%$ of the optimal. And the advantage of using a better centralized algorithm becomes more pronounced as the value of $n_a$ increases as
any sub-optimal local decision in this scenario leads to an extra transmission of a FFT ($R$ bits) and not just an eigenvector ($r$ bits).

\begin{figure*}[ht]
\centerline{\subfigure[]{\includegraphics[width=5.0cm]{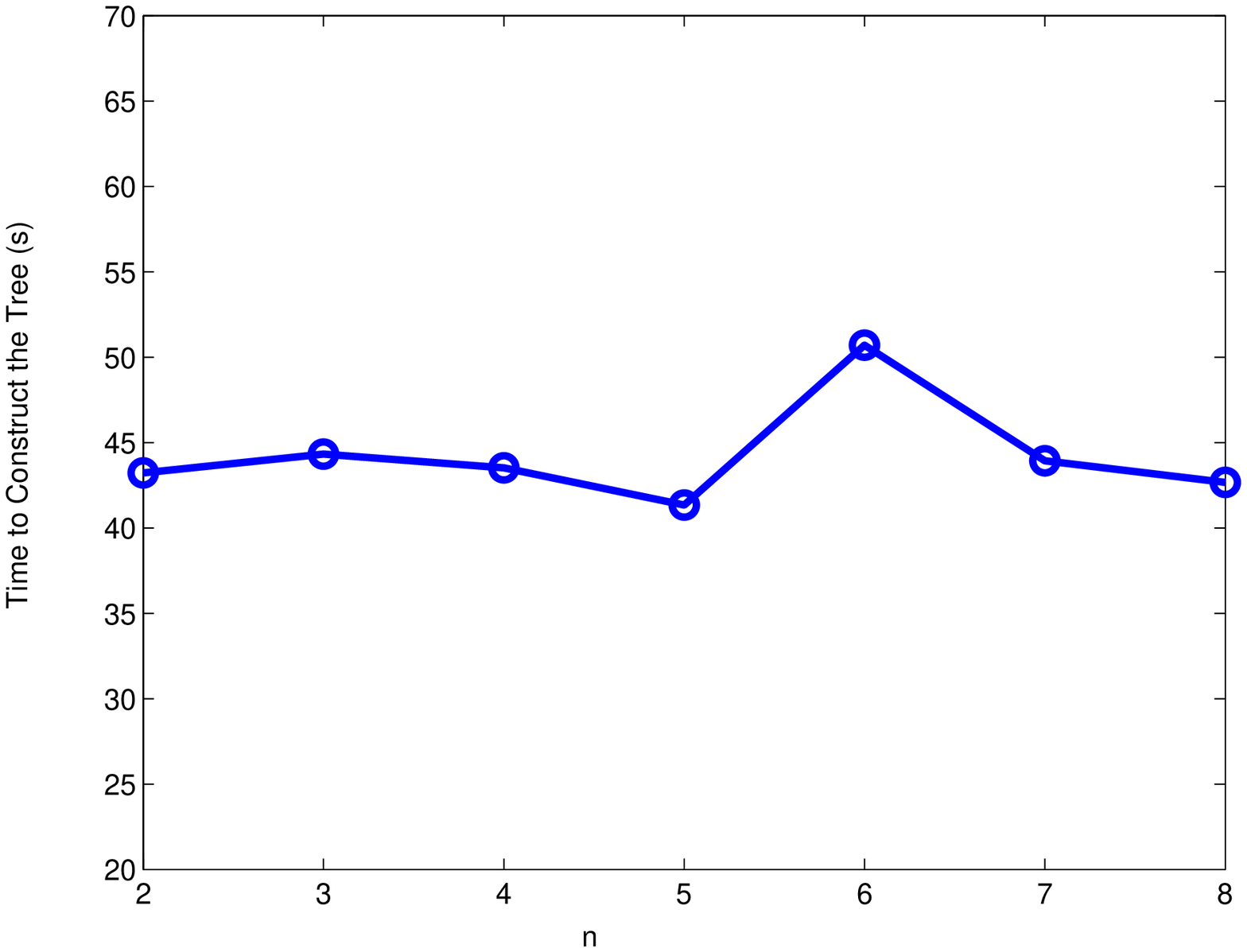}
\label{fig:time_d}}
\hfil
\subfigure[]{\includegraphics[width=5.0cm]{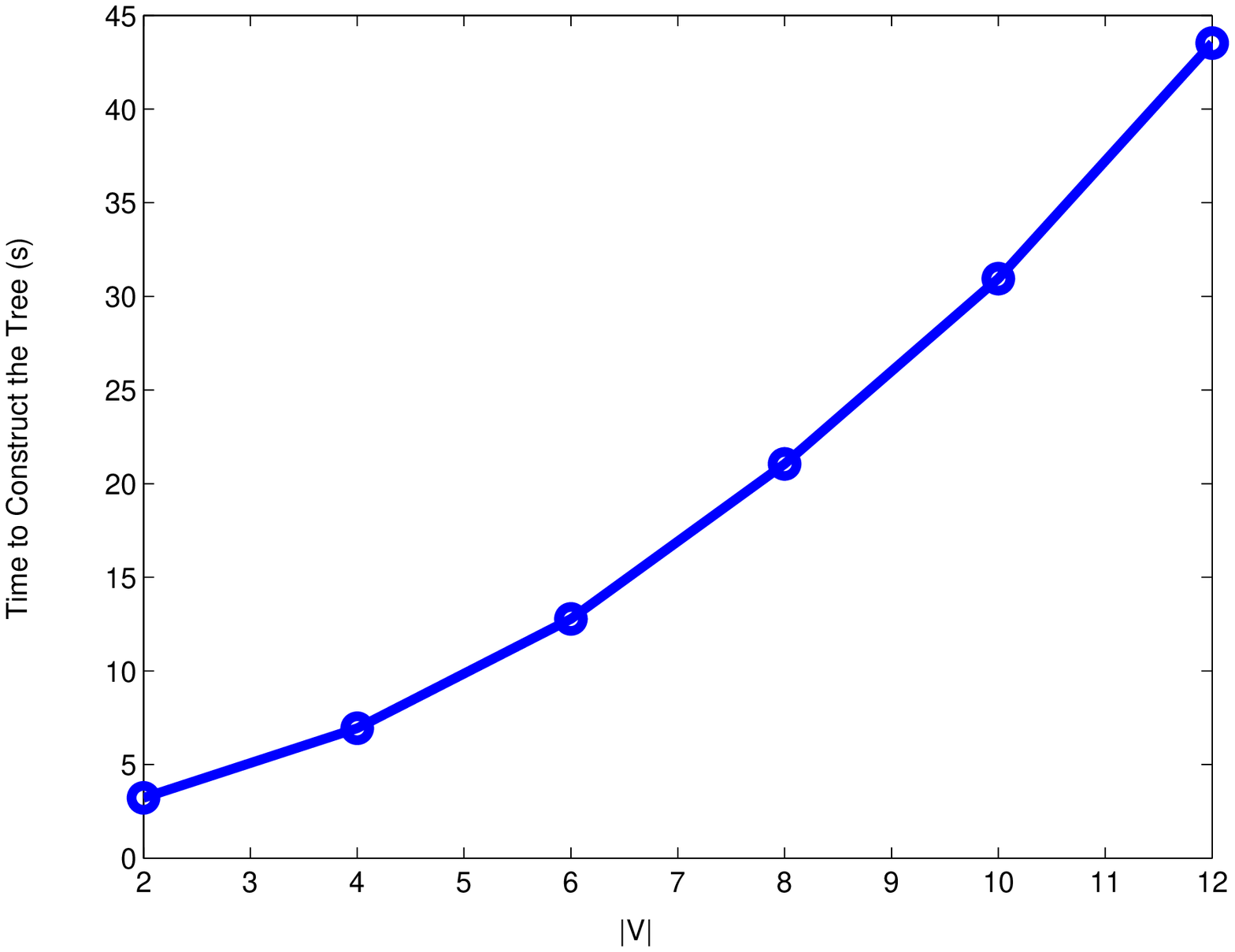}
\label{fig:time_v}}
\hfil
\subfigure[]{\includegraphics[width=5.0cm]{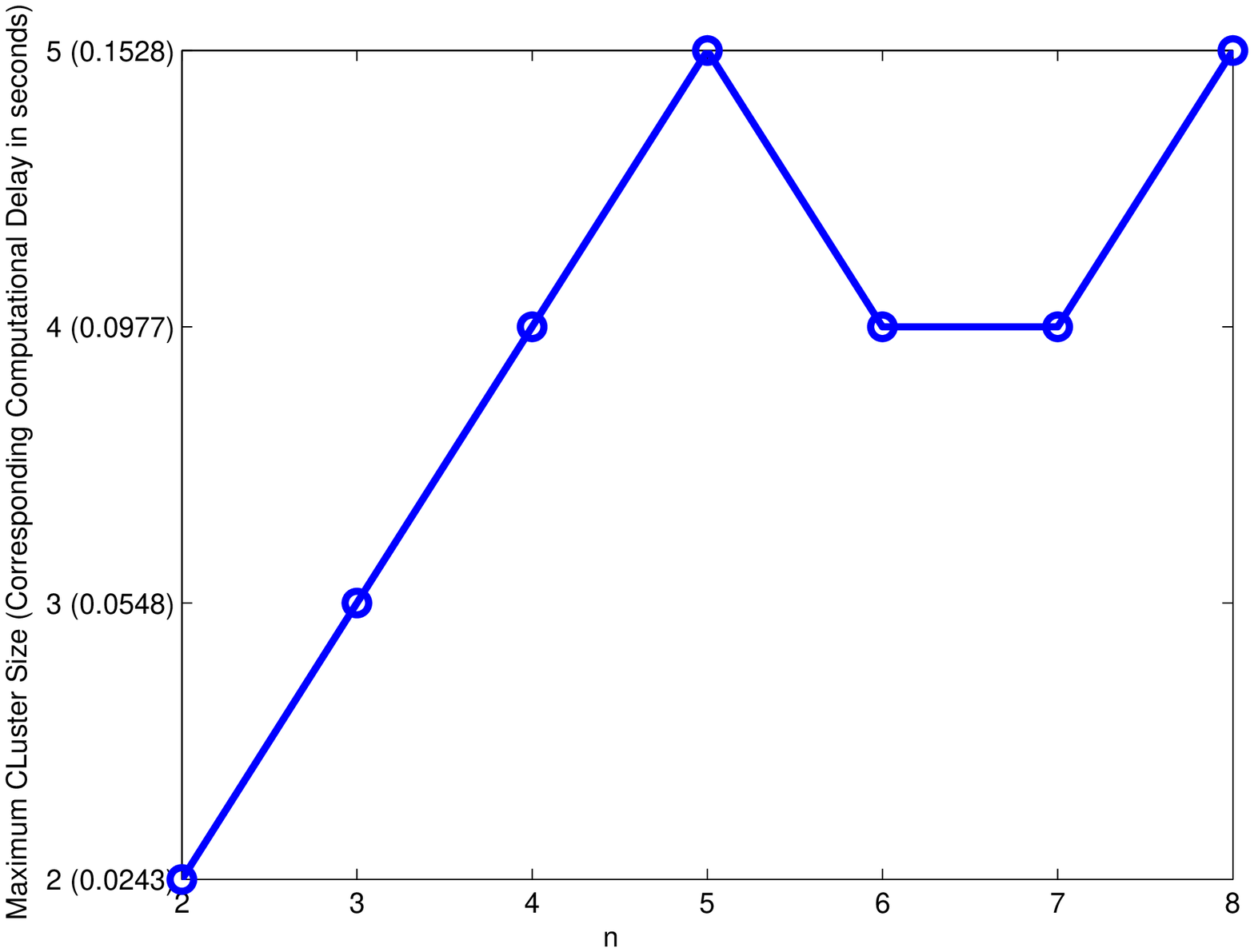}
\label{fig:sensing_d}}}
\caption{(a) Time to construct the tree vs $n$; $|V|=12$. (b) Time to construct the tree vs $|V|$; $n=4$. (c) Maximum cluster size (corresponding computational
delay in seconds) vs $n$; $|V|=12$.}
\end{figure*}

%\section{Experimental Results}
%\label{sec:experiments}
%Having compared the performance of the approximation algorithm A2 against the optimal using simulations, 
We next evaluate the performance of DAA on a real sensor platform, the Narada sensing unit developed at the University of Michigan~\cite{narada}.
This wireless device is powered by an Atmel ATmega 128 microprocessor. It is supplemented by 128 KB of external SRAM and utilizes the 4-channel,
16-bit ADS8341 ADC for data acquisition.  Narada's wireless communication interface consists of Chipcon CC2420 IEEE 802.15.4 compliant 
transceiver, which makes it an extremely versatile unit for developing large-scale WSNs. This prototype is powered by a constant DC supply voltage 
between 7 and 9 volts, and has an operational life expectancy of approximately 48 hours with 6 AA batteries, given constant communication
and data analysis demands.  

We use a testbed of 12 Narada sensor nodes deployed in a corridor in the Electrical Engineering and Computer Science building at the 
University of Michigan. Each Narada wireless sensor is programmed with DAA algorithm, and asked to autonomously form computational clusters with varying values of $n$. The root of the tree is randomly selected in each experiment. 
In a manner similar to~\cite{andy:market}, the weight of an edge $e$ is set to 
$w_e = \frac{1-p_{CF}}{1+e^{-0.4\left( 40+RSSI\right)}}$, where $RSSI$ is the radio signal strength indicator reported by the radio
and $p_{CF}$ is the probability that a communication link with perfect RSSI fails due to unforeseen circumstances and is set to $0.1$
for the Narada platform.

The objective of our experiment is %not to compare DAA against the optimal as it has been done using simulations; rather, the objective is 
to study the time it takes to construct the data collection tree using DAA, 
as well as the cluster sizes and the corresponding sensing cycles as a function of $n$ in a real-world setting. 
Figures~\ref{fig:time_d} and~\ref{fig:time_v} plot the time it takes to construct the tree as a function of $n$ and $|V|$ respectively. 
We see that this time only depends on the size of the network.
Figure~\ref{fig:sensing_d} plots the maximum cluster size as well as the maximum computational delay for the corresponding 
cluster size %for the Narada sensing unit 
as a function of $n$.  
%Interestingly we note that the maximum computational delay dominates the duration of one sensing cycle.
Figure~\ref{fig:dct_d} shows the data collection trees constructed for $n=3$ and $n=5$, respectively.
To summarize, the implementation and the experimental results verify the feasibility of DAA in a real SHM sensor network.
\begin{figure}[ht]
 \centering
\includegraphics[width=3.2in]{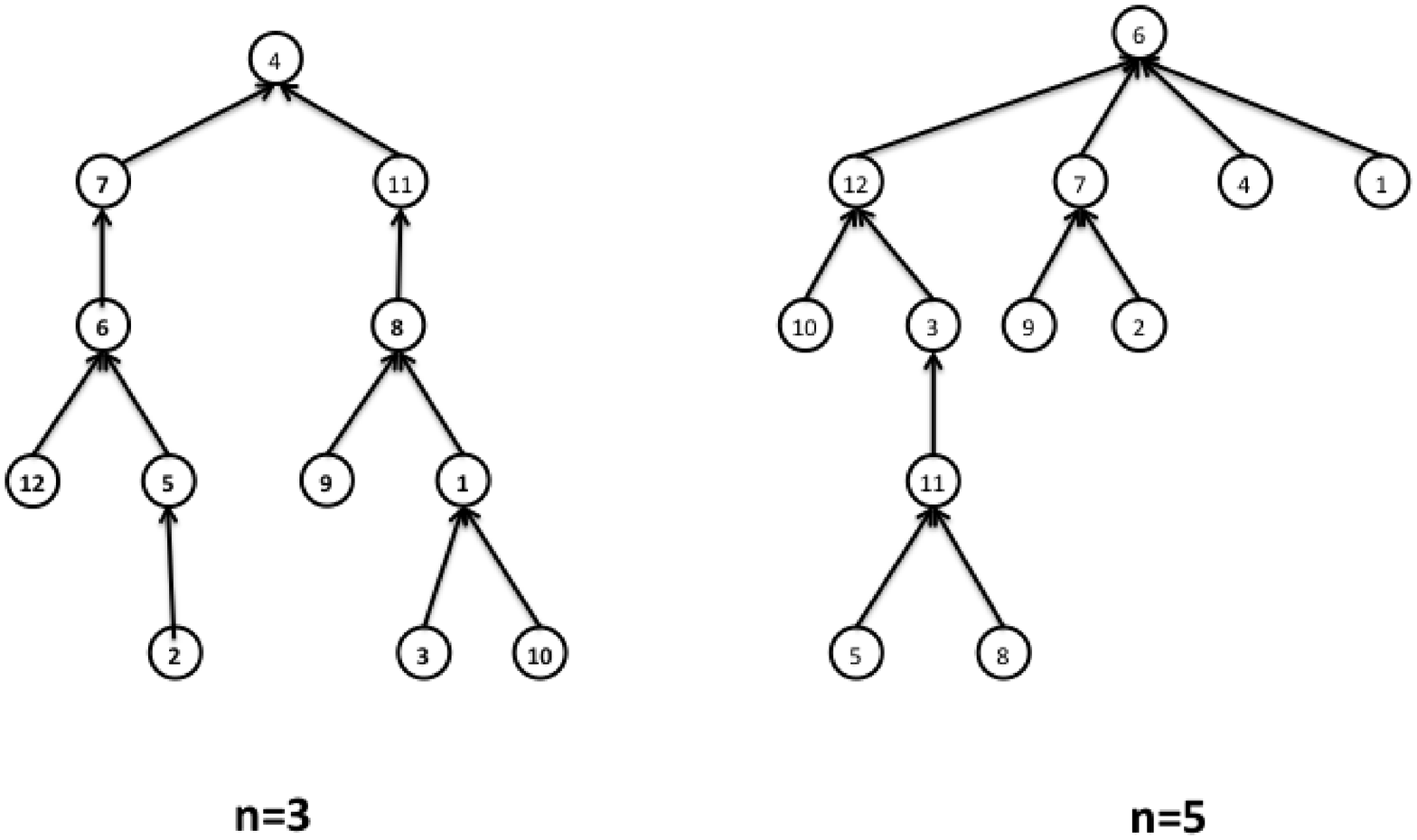}
  \caption{Data collection trees for $n=3$ and $n=5$.}
\label{fig:dct_d}
\vspace{-0.15in}
\end{figure}

\section{Conclusions}
\label{sec:conclusions}
This paper studies the problem of networked computation within the context of wireless sensor networks used for structural health monitoring.  
It presents centralized ILPs and distributed approximation algorithms to derive optimal communication
structures for the distributed computation of SVD.  %The approximation factor for each approximation algorithm
%we propose is derived analytically, and 
Both simulations and implementations are used to evaluate their performance. % for real engineering applications.
%For functions with only single stream input operators, the proposed approximation algorithm is always within $35\%$ of the optimal
%while for functions with a multiple stream input operator, the proposed approximation algorithms are always within $3\%$ of the
%optimal. 
Our results demonstrate the advantage of in-network computation as it significantly reduces the amount of data transmitted over the 
network.

% if have a single appendix:
%\appendix[Proof of the Zonklar Equations]
% or
%\appendix  % for no appendix heading
% do not use \section anymore after \appendix, only \section*
% is possibly needed

% use appendices with more than one appendix
% then use \section to start each appendix
% you must declare a \section before using any
% \subsection or using \label (\appendices by itself
% starts a section numbered zero.)
%

\small
\bibliographystyle{IEEEtran}
\vspace{0.4in}
\bibliography{my} 
\normalsize

\appendices
\section{A Brief Overview of Singular Value Decomposition}
\label{appendix2}
Let $A$ be a real $m \times n$ matrix with $m \geq n$. Then, the singular value decomposition (SVD) factors $A$ as follows:
$A = U \Sigma V^T$ where $U^T U = V^T V = V V^T = I_n$ and $\Sigma = diag(\sigma_1, \sigma_2, \ldots , \sigma_n)$. 
The matrix $U$ consists of $n$ orthonormalized eigenvectors associated with the $n$ largest eigenvalues of $A A^T$,
and the matrix $V$ consists of the orthonormalized eigenvectors of $A^T A$. The diagonal elements of $\Sigma$ are
the non-negative square roots of the eigenvalues of $A^T A$. We shall assume that $\sigma_1 \geq \sigma_2 \geq \ldots \geq \sigma_n$. 
SVD comprises of two steps~\cite{golub:svd}: converting the matrix $A$ into a bi-diagonal form, and using a variant of the $QR$ algorithm to iteratively diagonalize this bi-diagonal matrix.

\comment{
\subsection{Reduction to the bi-diagonal form}
This step decomposes $A$ as $A = PJ^{(0)}Q^T$, where $P$ and $Q$ are unitary matrices and $J^{(0)}$ is an $m \times n$
bi-diagonal matrix of the form
\begin{displaymath}
J^{(0)} = \left[ \begin{array}{c c c c c c c}
\alpha_1 & \beta_1 & 0 & . & . & . & 0 \\
0 & \alpha_2 & \beta_2 & 0 & . & . & . \\
. & . & . & . & . & . & . \\
. & . & . & . & . & . & . \\
. & . & . & . & . & . & . \\
0 & . & . & . & 0 & \alpha_{n-1} & \beta_{n-1} \\
0 & . & . & . & 0 & 0 & \alpha_n \\
\end{array} \right] .
\end{displaymath}

Let $A = A^{(1)}$, and let $A^{(3/2)}, A^{(2)}, \ldots, A^{(n)}, A^{(n+1/2)}$ be defined as follows:
\begin{displaymath}
\begin{array}{c c} 
A^{(k+1/2)} = P^{(k)} A^{k}, & k = 1,2, \ldots, n, \\
A^{k+1} = A^{(k+1/2)} Q^{(k)}, & k = 1,2, \ldots, n-1. 
\end{array}
\end{displaymath}
$P^{(k)}$ and $Q^{(k)}$ are hermitian, unitary matrices of the form 
\begin{displaymath}
\begin{array}{c c}
P^{(k)} = I - 2 x^{(k)} x^{(k)T}, & x^{(k)T} x^{(k)} = 1, \\
Q^{(k)} = I - 2 y^{(k)} y^{(k)T}, & x^{(k)T} x^{(k)} = 1, \\
\end{array}
\end{displaymath}
The unitary transformation $P^{(k)}$ is determined so that $a^{(k+1/2)}_{i,k} = 0,$ $i=k+1, \ldots, m$, and 
$Q^{(k)}$ is determined so that $a^{(k+1)}_{k,j} = 0,$ $i=k+2, \ldots, n$.
Solving these set of linear equations sequentially yields $P, J^{(0)}$ and $Q$.  

\subsection{SVD of the bi-diagonal matrix}
The matrix $J^{(0)}$ is iteratively diagonalized so that $J^{(0)} \rightarrow J^{(1)} \rightarrow 
\ldots \rightarrow \Sigma$, where $J^{(i+1)} = S^{(i)T} J^{(i)} T^{(i)}$, and $S^{(i)}, T^{(i)}$
are orthogonal. The matrices $T^{(i)}$ are chosen such that the sequence $M^{(i)} = J^{(i)T} J^{(i)}$ 
converges to a diagonal matrix while the matrices $S^{(i)}$ are chosen such that all $J^{(i)}$
are of the bi-diagonal form. 

We now describe how to derive $\{S^{(i)}\}$ and $\{T^{(i)}\}$. For notational convenience, 
we drop the suffix and use the notation: $J \equiv J^{(i)}, \overline{J} \equiv J^{(i+1)}, 
S \equiv S^{(i)}, T \equiv T^{(i)}, M \equiv J^T J, \overline{M} \equiv \overline{J}^T \overline{J}$.

The transition $J \rightarrow \overline{J}$ is achieved by the application of Givens rotations to 
$J$ alternately from the right and the left. Thus, $\overline{J} = S_{n}^{T} S_{n-1}^{T} \ldots
S_{2}^{T} J T_2 T_3 \ldots T_n$, where $S^T = S_{n}^{T} S_{n-1}^{T} \ldots S_{2}^{T}$, 
$T = T_2 T_3 \ldots T_n$, 
\begin{displaymath}
S_k = \left[ 
\begin{array}{c c c c c c c c c c c c}
1 & 0 & & & & & & & & & & 0 \\
0 & . & & & & & & & & & &  \\
& & . & & & & & & & & & \\
& & & . & & & & & & & & \\
& & & & cos \theta_k & -sin \theta_k & & & & & & \\
& & & & sin \theta_k & cos \theta_k & & & & & & \\
& & & & & & 1 & & & & & 0 \\
& & & & & & & . & & & & \\
& & & & & & & & . & & & \\
& & & & & & & & & . & & \\
0 & & & & & & & & & & 1 & 0 \\
0 & & & & & & & & & & & 1 \\
\end{array}
\right],
\end{displaymath}
the $(k-1) \times (k-1)^{th}$, $(k-1) \times k^{th}$, $k \times (k-1)^{th}$
and $k \times k^{th}$ elements of $S_k$ are $cos \theta_k$, $-sin \theta_k$,
$sin \theta_k$ and $cos \theta_k$ respectively, and 
$T_k$ is defined analogously to $S_k$ with $\psi_k$ instead of $\theta_k$.

Let the first angle $\psi_2$ be chosen arbitrarily while all the other angles are 
chosen so that $\overline{J}$ has the same form as $J$. Thus, 
\begin{displaymath}
\begin{array}{c}
T_2 \mbox{ annihilates nothing, generates as entry } \{ J \}_{21}, \\
S_2^T \mbox{ annihilates } \{ J \}_{21}, \mbox{ generates as entry } \{ J \}_{13}, \\
T_3 \mbox{ annihilates } \{ J \}_{13}, \mbox{ generates as entry } \{ J \}_{32}, \\
. \\
. \\
. \\
S_n^T \mbox{ annihilates } \{ J \}_{n,n-1}, \mbox{ generates nothing.} 
\end{array}
\end{displaymath}

What now remains is to define how to choose the first angle $\psi_2$. 
It is chosen such that the transition $M \rightarrow \overline{M}$ is
a $QR$ transformation with a given shift $s$. The usual $QR$ algorithm with
shifts~\cite{francis:qr} is described as: $M - sI = T_s R_s$, $R_s T_s + sI = \overline{M}_s$.
This shift parameter $s$ is determined by an eigenvalue of the lower $2 \times 2$ minor of $M$,
and $T_2$ is chosen such that its first column is proportional to that of $M - sI$ which yields 
the value of $\psi_2$. 
} 
\section{Proof of Theorem~\ref{thm:np}}
\label{appendix1}
First, the decision version of our problem is in NP: Given a communication structure, computing the energy consumed at each node 
and checking if the constraints specified in Eqns (\ref{eqn:c1})-(\ref{eqn:c9}) are satisfied can both be done in polynomial time. Hence, testing feasibility
and whether the total cost is less than a given value $M$ is accomplished in polynomial time. 

Next, to prove NP-hardness we perform a reduction from the set cover problem~\cite{np:book}, whose decision version is defined as follows. 
\begin{definition}%[Set Cover]
Given a collection $C$ of subsets of a finite set $P$ and an integer $0 < K \leq |C|$, with $|C|$ the cardinality of $C$, 
the {\em set cover problem} asks whether $C$ contain a subset of $C' \subset C$ with $|C'| \leq K$, such that every element of $P$ belongs to at least
one of the subsets in $C'$ (this is called a {\em set cover} of $P$). 
\end{definition}

For any instance of the set cover problem, we now build an instance of the decision version of problem P1, which  
seeks to find a communication structure with which the energy cost is at most $M$ while satisfying the computational delay constraints
at each node and the combinability constraint. 

Consider a graph consisting of three layers, as shown in Figure~\ref{fig:np_proof}:  
a single node $V_0$ at the bottom layer, a (C) layer of sets of nodes $C_k \in C$ each with an internal structure as shown in Figure~\ref{fig:np_proof}(b), 
and a (P) layer of nodes $\{ p_j \in P \}$.  
%the base station $V_0$, a layer corresponding to the subsets $C_k \in C$, and a layer corresponding to the elements $\{ p_j \} \in P$. 
Each element $C_k \in C$ in the middle layer contains the $|C_k|$ nodes in $C_k$ plus 3 extra nodes as shown in Figure~\ref{fig:np_proof}(b): 
%we build a structure formed by $|C_k| + 3$ nodes as in Figure~\ref{fig:np_proof}(b) (each subset $C_k$ has its own such graph,
%and nodes, but we drop the subscript $k$ to simplify presentation). The 
Node $x_3$ connects to the base station $V_0$ with 0 weight. Nodes $x_1$ and $x_2$
are connected to $x_3$ with weights $1$ and $1 < a < d$, respectively, and connected with each other with weight $d$.  
The other $|C_k|$ nodes are connected to both $x_1$ and $x_2$ with weights $d > 0$. 
%Finally, $x_1$ and $x_2$ are connected with weight $d$. 
Furthermore, each such structure $C_k \in C$ are connected to the same $|C_k|$ nodes in the P layer that belong to the set $C_k$, via node $x_1$, all with weight $d$. 
%(namely the node $x_1$ from that structure) to only those nodes in the $P$ layer 
%that correspond to elements contained in $C_i$ (example: in the instance in Figure~\ref{fig:np_proof}(a), subset $C_1 = \{ p_1, p_2, p_3 \}$ etc).
%All the edges connecting the $P$ layer to the $C$ layer also have a weight equal to $d$. 
Finally, all the $x_3$ nodes are inter-connected with weight $0$. 
%with an edge of weight $0$ as well as connected to the base station $V_0$ with an edge of weight $0$. 
%Nodes which do not have an edge between them are not connected. 
%Finally, recall that $R$ and $r$ denote the number of bits required to represent the FFT from a single sensor and a single eigenvector respectively. 

\begin{figure}[htb]
 \centering
\includegraphics[width=3.5in]{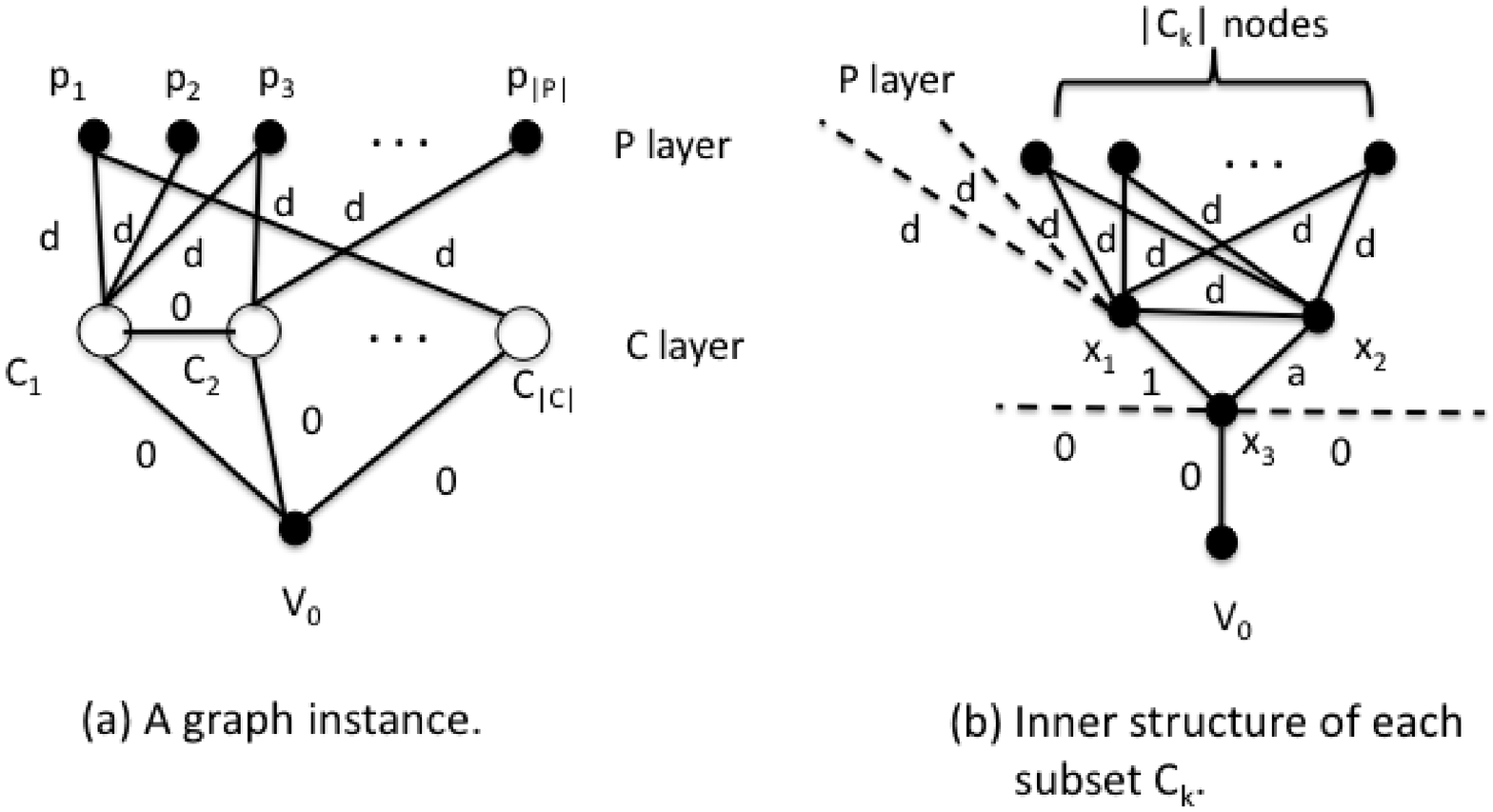}
  \caption{Instance of the problem P1 for any given instance of the set cover problem. In (b), the solid lines illustrate connectivity internal to the structure whereas dashed lines are for external connections. }
\vspace{-0.1in}
\label{fig:np_proof}
\end{figure}

The delay constraints are defined as follows: no more than $|C_k| + 1$ FFT's can be combined on nodes $x_1$ and $x_2$, and no more than $4$ FFT's on node $x_3$ for a given $C_k$, and no constraint on 
$V_0$ or any nodes in the $P$ layer.  
%({\bf here the delay constraints are node-specific, this obvious is more general than the uniform delay constraint we have in P1.  Does this pose a problem for the reduction proof?  Can one not argue that P1 is easier due to the uniform constraint? -- This is not a technique I use or understand very well, so I may not be making much sense here.  Just a note for you to double check...}) 

%The goal is to build a communication structure for which the energy cost is at most $M$ while respecting the computational delay constraints
%at each node and the combinability constraint for each computation. 
We are now ready to show that finding the solution to the decision version of P1 on the above network graph with the stated delay constraints and a choice of 
\begin{eqnarray}
M &=& d R \left( |P| + \sum_k  |C_k|  \right) + R |C| + a R |C| \nonumber \\
&& + a r \left( |P| + K \right) + r \sum_k \left( |C_k| + 1 \right) ~, 
\end{eqnarray}
for some positive integer $K \leq |C|$, is equivalent to finding a set cover of cardinality $K$ or less for the set $P$. 
%Notice that the construction of our graph instance from the set cover can be performed in polynomial time. 

For $d > \left( R|C| + a R |C| + a r \left( |P| + K \right) + r \sum_k \left( |C_k| + 1 \right) \right)/R$, the communication structure
for P1 will have transmissions on exactly $|P|$ edges between the layers $P$ and $C$, 
%{\bf (to remove: this holds because the value of d has been set such that if there is even one extra edge between layers P and C, the energy cost will exceed M; note that this fact does not depend on the delay constraint also)}, 
and on exactly $|C_k|$ edges in the structure shown in Figure~\ref{fig:np_proof}(b) for every $C_k \in C$. 
That means no other node than $x_1, x_2$ and $x_3$ will be used as a relay, or belong to $S$. If some other node belongs to $S$, then the cost of the communication structure
would contain $R$ bits passing through more than $|P| + \sum_k \left( |C_k| \right)$ edges of weight $d$ which would result in a cost larger than $M$. 
This also implies that $x_1$ and $x_3$ for all $C_k \in C$ belong to $S$. The only degree of freedom is whether $x_2$ lies in $S$ or not. 
(Recall that $x_2 \in S$ only if a SVD computation takes place on $x_2$ also.) 

The key idea is to show that for $1 < a < d$, finding a communication structure with cost at most $M$ means connecting the nodes in layer $P$
to at most $K$ structures of layer $C$. If more than $K$ structures in $C$ is needed, then the cost of the communication
structure will necessarily be higher than $M$. 
%The intuition is that node $x_2 \in S$ if and only if the corresponding $C_k$ is connected to the $P$ layer. Then, if the number of $x_2$ nodes in $S$ is more than $K$, the cost of the communication structure will be more than $M$.

We first show that if a corresponding $C_k$ is not connected to any node in the $P$ layer, then its corresponding $x_2$ node will not belong to $S$.
This is because in this case the optimal communication structure is to have all the other $|C_k|$ nodes (other than $x_1, x_2$ and $x_3$) send their 
data to $x_1$ (since $a > 1$, transmitting everything to $x_1$ instead of $x_2$ will consume less energy) 
who will then compute the SVD and send the corresponding eigenvectors as well as its own FFT to $x_3$. Note that node $x_3$ can receive 
FFT's from $x_1, x_2$ and other $x_3$ nodes %of $C_{k'}, k' \neq k$ 
with no extra cost; thus combinability will be trivially ensured. 
It will forward all the computed eigenvectors to $V_0$. The total energy consumed in this operation is 
\begin{eqnarray}
E^1_k = d |C_k| R + a R + R + r \left( |C_k| + 1 \right). 
\end{eqnarray}

We next show that if a corresponding $C_k$ is connected to at least one node in the $P$ layer, then its corresponding $x_2$ will always belong to $S$. 
This is because since no more than $|C_k| + 1$ FFT's can be combined on $x_1$, if $n_k$ of the nodes in the $P$ layer send their FFT to $x_1$, then 
$x_1$ can combine FFT's from no more than $|C_k| - n_k$ nodes belonging to the structure of $C_k$. The remaining $n_k$ nodes 
will have to send their FFT to $x_2$ as it has the next smallest distance (after $x_1$) to these nodes. Thus, $x_2$ will combine data from $n_k+1$ nodes. 
The energy consumed in this scenario is 
\begin{eqnarray}
E^2_k = d |C_k| R + d n_k R + R + a R + a r (n_k+1) + r \left( |C_k| + 1 \right).
\end{eqnarray} 
Note that $\sum_{C_k} n_k = P$.

Thus, if the structures in the $P$ layer connected to the $C$ layer constitute the set $F_1$ and those unconnected to the $C$ layer the set $F_2$, 
%is equal to $|K'|$, 
then the total energy consumed is 
\begin{eqnarray}
E &=& \sum_{C_k\in F_1} E^1_k + \sum_{C_k \in F_2} E^2_k \nonumber \\
&=& d R \left( |P| + \sum_k  |C_k|  \right) + R |C| + a R |C| \nonumber \\
&& + a r \left( |P| + K' \right) + r \sum_k \left( |C_k| + 1 \right) ~,  
\end{eqnarray}
where $K' = |F_1|$.  The above quantity will be larger than $M$ if $K' > K$. This means that finding a communication structure with a cost at most $M$ implies 
finding a set of $K$ elements or less from the $C$ layer to which all the nodes in set $P$ connect. In other words, a communication
structure with a cost of at most $M$ yields a set cover of size at most $K$. 

Lastly, we need to ensure that the set cover of size at most $K$ also yields a communication structure of cost at most $M$. 
%{\bf %This step follows from the following observation. Each distinct set in the set cover contains at least one element which belongs to only that set.
%Otherwise, one can remove that set to reduce the number of sets by one while still covering all elements, which is a contradiction.
If an element is contained in only one set in the set cover, then connect the corresponding node in $P$ to the corresponding $C_k$, 
and if an element belongs to multiple sets in the set cover, then choosing one of these sets uniformly at random,
and connecting the corresponding node $P$ to the $C_k$ which corresponds to this randomly chosen set, %such that all nodes in set $P$ are connected 
yields an energy cost of no more than $M$ (follows obviously from the previous discussion). 
The computational delay constraint is also obviously satisfied at all nodes. We merely need to ensure that all computations are combinable. 
Since each node $x_3$ belonging to the structure of $C_k$ send its FFT to the node $x_3$ belonging to the structure of node $C_{\left(k+1\right) \mbox{mod} |C|}$,
all computations are combinable. 

Thus our decision problem is NP-complete and our optimization problem is NP-hard.

\section{Parallel Simulated Annealing}
\label{sec:annealing}
In a structural health monitoring system, a common technique to translate raw sensor data into an estimate of damage involves comparing system
properties in an unknown state of health to those in a known, undamaged state~\cite{model_update1,model_update2}. This technique is referred to as model updating
and involves adjusting the system parameters iteratively in an analytical model such that the analytical system produces response data that
matches results obtained experimentally. Using this method, damage can be detected in a system by periodically searching for changes
in model parameters that can be linked directly to suboptimal system performance. 

A wide variety of model updating techniques have been developed over the years~\cite{model_update3}. One common approach is to define an objective function, $E$, 
which relates the difference between analytical and experimental data. This function can be repeatedly evaluated with varying values of the analytical
model parameters until the difference between the analytical and experimental response is minimized. 

Simulated annealing (SA) is one of the most common algorithms for stochastically searching for the global minimum of such an objective function.
This method has been used frequently in model-based damage detection techniques~\cite{sa1}. Metropolis \etal~\cite{sa2} developed this algorithm 
to determine the global minimum energy state amidst a nearly infinite number of possible configurations. The Metropolis criterion expresses the
probability of a new system state being accepted at a given system temperature, and can be stated as: accept the new state if and only if 
$E_{new} \leq E_{old} - T ln \left( U \right)$, where $E$ is the value of the objective function for a given energy state, $U$ is a
uniformly distributed random variable between $0$ and $1$, and $T$ is the temperature of the system. The addition of the $T ln \left( U \right)$
term allows the system to accept an invalid state in the hope of avoiding premature convergence to a local minima. 

A standard SA algorithm begins the optimization process by assigning an initial temperature $T_1$, and letting the Metropolis algorithm run
for $N_1$ iterations. During each iteration, certain analytical model parameters are reassigned in a pseudo-random fashion, and the objective
difference between the experimental and analytical output is determined. This newly created state is either accepted or rejected based 
on the Metropolis criterion. After $N_1$ iterations, the temperature of the system is reduced to $T_2$ and the process runs for $N_2$ iterations.
This process continues till the temperature drops to a really low temperature, $T_M$, where very few new states are accepted, and the system has, in essence, frozen.
To summarize, the process runs for $M$ temperature steps, and for $N_j, 1 \leq j \leq M$ iterations for each temperature step $T_j$. 

Over the years, many parallel SA techniques have been developed and successfully implemented~\cite{sa3}.
Zimmerman \etal~\cite{andy:sa} proposed a new parallel SA technique more suited to be implemented over a wireless sensing system for structural health monitoring
as it reduces the communication required between processing nodes. This technique breaks up the traditionally serial SA tree (which is continuous
across all temperature steps) into a set of smaller search trees, each of which corresponds to a given temperature step and begins with the global
minimum values for the preceding temperature step. Each of these smaller trees can be assigned to a cluster of available nodes in the network, 
and thus can run concurrently. 

As the parallelized search progresses, updated global state information has to be disseminated downwards (to the nodes doing the computations at lower
temperatures) through the network. Specifically, when a node detects a new global minimum energy state at a given temperature, it communicates this
information to the cluster-head of its cluster, which then propagates this information to all nodes doing the computation at lower temperatures.
These nodes (computing at lower temperatures) will re-start their search based on this new state. This may seem wasteful at high temperatures, 
however, as the search algorithm converges on a solution, it becomes decreasingly likely that a new global minimum will be found at a given
temperature step which reduces the total number of transmissions. 
 
\cite{andy:sa} explores the advantages of this approach in a wireless sensing system. However, it does not explore how to construct the 
communication structure so as to minimize the energy consumption which will be the focus of this section. 

We now precisely state the problem. 
The designer will set the values of $N_j$ and $k_j, 1 \leq j \leq M$, which denote the number of computations
to be performed at temperature $T_j$ and the number of sensor nodes performing the computation at $T_j$ respectively. 
(Note that $\sum_{j=1}^M k_j \leq |V|$.) The values of $N_j$ and $k_j$ will be determined based on the accuracy and the computational constraint per node. 

Given the values of $N_j$ and $k_j$, determining the communication structure involves dividing the $V$ nodes into $M$ clusters
each of size $k_j, 1 \leq j \leq M$ and choosing a cluster-head for each cluster. 
Let the cluster of nodes corresponding to temperature $T_j$ be denoted by $K_j$. (Note that $|K_j| = k_j$.) 
Finally, let $b_j \in K_j$ denote the cluster-head for the cluster $K_j$. 
Any computation which results in a new minimum energy state at a temperature $T_j$ requires exchanging this information between all 
nodes belonging to the cluster $K_j$, between the cluster-heads $b_j$ and $b_l, l > j$, and all nodes belonging to clusters
$K_l, l > j$. Thus, the total number of transmissions for each new minimum energy state found at temperature $T_j$ is equal to
$\sum_{v \in K_j} H_{b_j \rightarrow v} + \sum_{l=j+1}^M H_{b_j \rightarrow b_l} + \sum_{l=j+1}^M \sum_{v \in K_l} H_{b_l \rightarrow v}$,
where recall that $H_{i \rightarrow j}, i,j \in V$ denotes the average number of transmissions required to exchange information between nodes $i$ and $j$
along the shortest path between the two nodes. 

We first describe an ILP to determine the optimal communication structure for parallel simulated annealing. 
Let $x_{ij}, i \in V, 1 \leq j \leq M$ be an indicator variable which is set to $1$ only if node $i \in K_j$. 
Let $y_{ij}, i \in V, 1 \leq j \leq M$ be another indicator variable which is set to $1$ only if node $i = b_j$, that is,
$i$ is the cluster-head for $K_j$. 
Note that here we have a separate variable to denote the cluster-head whereas for the SVD computation, we merely set $x_{ii}$ to $1$
if node $i$ was a cluster-head. The extra variable is needed for parallel simulated annealing to convert the quadratic objective into a linear equation. 
Let $t_{ikj}, i,k \in V, 1 \leq j \leq M$ denote an indicator variable which is set to $1$ only if node $i$ is the cluster-head for temperature $T_j$ and node $k \in K_j$
(that is $t_{ikj} = y_{ij} x_{kj}$) and let $p_{ikj}, i,k \in V, 1 \leq j \leq M-1$ denote an indicator variable which is set to $1$ only if node $i$ is the cluster-head
at temperature $T_j$ and node $k$ is the cluster-head at temperature $T_{j+1}$ (that is $p_{ikj} = y_{ij} y_{k(j+1)}$). 
Finally, let $a_j, 1 \leq j \leq M$ denote the probability of generating a new minimum energy state per computation at temperature $T_j$. 
Then, for $N_j$ computations at that temperature, the number of new minimum energy states generated are $a_j N_j$. 
Note that generating a new minimum energy state triggers new transmissions.   

Following is the ILP to determine the optimal communication structure for parallel simulated annealing.
\begin{eqnarray}
\label{sim:obj} & & \sum_{j=1}^M a_j N_j \left( \sum_{i \in V} \sum_{k \in V} H_{i \rightarrow k} \left( t_{ikj} + 
\sum_{l=j+1}^M p_{ikl} + \sum_{l=j+1}^M t_{ikl} \right) \right) \\
\label{sim:eqn1} & & \sum_{i \in V} x_{ij} = k_j, 1 \leq j \leq M \\
\label{sim:eqn2} & & t_{ikj} \geq \frac{y_{ij} + x_{kj} - 1}{2}, i,k \in V, 1 \leq j \leq M  \\
\label{sim:eqn3} & & p_{ikj} \geq \frac{y_{ij} + y_{k(j+1)} - 1}{2}, i,k \in V, 1 \leq j \leq M-1  \\
& & x_{ij}, y_{ij}, t_{ikj}, p_{ikj} \in \{ 0,1 \}, i,k \in V, 1 \leq j \leq M.  
\end{eqnarray}
The first constraint (Equation (\ref{sim:eqn1})) ensures that the cluster performing computations at temperature $T_j$ has $k_j$ nodes  
while the next two constraints populate the values of $t_{ikj} = y_{ij} x_{kj}$ and $p_{ikj} = y_{ij} y_{k(j+1)}$. 

We finally describe a greedy approximation algorithm to determine the communication structure for parallel simulated annealing.
Recall that we need to determine the set of nodes which form a cluster as well the corresponding cluster-head for each temperature $T_j, 1 \leq j \leq M$.
Figure~\ref{algo:sim_anneal_approx} describes the greedy algorithm. We first start from the smallest temperature $T_M$ because finding a new
energy state at any temperature will trigger a transmission between the cluster-head $b_M$ and the nodes belonging to the cluster $K_M$. 
Amongst all the nodes $v \in V$, determine the cluster-head to be the node which has the smallest sum of the average number of transmissions
required to get to $k_M$ nodes. This yields both $b_M$ and $K_M$. From amongst the remaining nodes, in a similar manner, greedily select $b_{M-1}$ 
and $K_{M-1}$ and continue. Assuming the maximum height of the unconstrained data collection tree (defined in Definition~\ref{def:dct}) 
is $O\left( log (|V|)\right)$, using arguments similar to ones made in Section~\ref{sec:approx}, 
the approximation factor of the greedy approximation algorithm is also $O\left( log (|V|)\right)$.

\begin{figure}[t]
  \begin{center}
  \begin{ttfamily}
    \begin{footnotesize}
      \flushleft
       $K = V$, $j=M$ \\
       while $(j>0)$ do \\
       \hspace*{0.2in} $minE = \infty$ \\ 
       \hspace*{0.2in} $\forall v \in K$ \\ 
       \hspace*{0.4in} $\left( e_v^j, k_v^j \right)$ = findMin $\left(K, v, j\right)$ \\ 
       \hspace*{0.4in} If $\left( MinE > e_v^j \right)$ \\
       \hspace*{0.6in} $MinE = e_v^j$, $K_j = k_v^j$, $b_j = v$ \\
       \hspace*{0.2in} $j=j-1$, $K = K \backslash K_j$ \\
       $ $ \\ 
       $ $ \\
       findMin $\left(K, b, j\right)$ \\
       \hspace*{0.2in} $T = \phi$ \\
       \hspace*{0.2in} $\forall v \in S$ \\
       \hspace*{0.4in} $d_v = H_{v \rightarrow b}$\\ 
       \hspace*{0.2in} Sort the nodes in $K$ in ascending order of $d_v$'s \\ 
       \hspace*{0.2in} Add the first $k_j-1$ nodes from this sorted list \\
        \hspace*{0.2in} to $T$ \\
       \hspace*{0.2in} $E = \sum_{v \in T} H_{v \rightarrow b} + I_{j<M} H_{b_{j+1} \rightarrow b}$ \\
       \hspace*{0.2in} ($I_{j<M}$ is an indicator variable which is equal to \\
       \hspace*{0.2in} $1$ if $j<M$, else it is equal to $0$.) \\
       \hspace*{0.2in} return $\left( E, T \right)$ \\
       \end{footnotesize}
\end{ttfamily}
\end{center}
  \caption{A greedy approximation algorithm to determine the communication structure for parallel simulated annealing.}
  \label{algo:sim_anneal_approx}
\end{figure}

%Appendix one text goes here.

% you can choose not to have a title for an appendix
% if you want by leaving the argument blank
%\section{}
%Appendix two text goes here.

% use section* for acknowledgement
%\section*{Acknowledgment}

% Can use something like this to put references on a page
% by themselves when using endfloat and the captionsoff option.
\ifCLASSOPTIONcaptionsoff
  \newpage
\fi

\end{document}